%
\newcounter{myctr}
\def\myitem{\refstepcounter{myctr}\bibfont\noindent\ifnum\themyctr>9\else\phantom{0}\fi\hangindent17pt\themyctr.\enskip}

\documentclass{ws-ijqi}
\usepackage[colorlinks=true,linkcolor=blue,citecolor=red,plainpages=false,pdfpagelabels]{hyperref}
\usepackage[super,sort,compress]{cite}
\usepackage{mathrsfs,bbold,dsfont}


\newtheorem{protocol}{Protocol}
\newcommand{\1}{\mathbb{1}}
\newcommand{\id}{\mathrm{id}}
\newcommand{\supp}{\mathrm{supp}}
\newcommand{\tr}{\mathrm{Tr}}
\newcommand{\bra}[1]{\langle#1|}
\newcommand{\ket}[1]{|#1\rangle}

\newcommand{\op}[2]{\ket{#1}\!\bra{#2}}


\newcommand{\abb}[1]{{\textnormal{#1}}} 
\newcommand{\bc}[1]{{\widetilde{#1}}} 
\newcommand{\ch}[1]{{\mathcal{#1}}} 
\newcommand{\cl}[1]{{\overline{#1}}} 
\newcommand{\dual}[1]{{\widehat{#1}}} 
\newcommand{\g}[1]{{\widehat{#1}}} 
\newcommand{\s}[1]{{\mathscr{#1}}} 
\newcommand{\sa}[1]{{\mathsf{#1}}} 
\newcommand{\spa}[1]{{\mathds{#1}}} 
\newcommand{\stc}[1]{{\widetilde{#1}}} 

\allowdisplaybreaks

\begin{document}



\title{POSTSELECTED COMMUNICATION OVER QUANTUM CHANNELS}

\author{KAIYUAN JI
}

\address{School of Electrical and Computer Engineering, Cornell University\\
Ithaca, New York 14850, USA
\\
kj264@cornell.edu}

\author{BARTOSZ REGULA}

\address{Mathematical Quantum Information RIKEN Hakubi Research Team, RIKEN Cluster for Pioneering Research (CPR) and RIKEN Center for Quantum Computing (RQC)\\
Wako, Saitama 351-0198, Japan\\
bartosz.regula@gmail.com}

\author{MARK M. WILDE}

\address{School of Electrical and Computer Engineering, Cornell University\\
Ithaca, New York 14850, USA\\
wilde@cornell.edu}

\maketitle



\begin{abstract}
\noindent \textit{Dedicated to Alexander S. Holevo on the occasion of his 80th birthday. Professor Holevo's numerous seminal contributions, going back to Ref.~\citen{holevo_1973-1}, have served as an inspiration for generations of quantum information scientists.} \\

The single-letter characterisation of the entanglement-assisted capacity of a quantum channel is one of the seminal results of quantum information theory. In this paper, we consider a modified communication scenario in which the receiver is allowed an additional, `inconclusive' measurement outcome, and we employ an error metric given by the error probability in decoding the transmitted message conditioned on a conclusive measurement result. We call this setting \emph{postselected communication} and the ensuing highest achievable rates the \emph{postselected capacities}. 
Here, we provide a precise single-letter characterisation of postselected capacities in the setting of entanglement assistance as well as the more general nonsignalling assistance, establishing that they are both equal to the channel's projective mutual information --- a variant of mutual information based on the Hilbert projective metric. We do so by establishing bounds on the one-shot postselected capacities, with a lower bound that makes use of a postselected teleportation-based protocol and an upper bound in terms of the postselected hypothesis testing relative entropy. As such, we obtain fundamental limits on a channel's ability to communicate even when this strong resource of postselection is allowed, implying limitations on communication even when the receiver has access to postselected closed timelike curves.
\end{abstract}

\keywords{Postselection; entanglement-assisted communication; quantum channel capacity; Hilbert projective metric; quantum Shannon theory.}

\tableofcontents  

\markboth{Kaiyuan Ji, Bartosz Regula, and Mark M. Wilde}
{Postselected communication over quantum channels}


\section{Introduction}
\label{sec:introduction}

The capacity of a communication channel represents its maximum possible capability to transmit information. 
It is a well-studied and well-understood problem for classical channels~\cite{shannon_1948,cover_2006}, and early seminal contributions by Holevo and others~\cite{holevo_1973-1,Holevo-S-W,H-Schumacher-Westmoreland,Lloyd-S-D,L-Shor-D,L-S-Devetak,bennett_1996,bennett_1999-1,bennett_2002,holevo_2002} suggested that many parallels between the classical and quantum settings may hold. However, the subsequent discovery of phenomena which do not exist in the classical realm, such as superadditivity~\cite{shor_2005,hastings_2009} and superactivation~\cite{smith_2008,smith_2011}, meant that the characterisation of channel capacities in the quantum setting is significantly more complicated.
Luckily, it was found that allowing the communicating parties access to some shared quantum resources --- for instance, shared entanglement --- can significantly simplify the characterisation of the communication capacities~\cite{bennett_2002,bennett_2014,berta_2011}. The authors of Ref.~\citen{bennett_2002} argued that such an entanglement-assisted setting is a natural generalisation of the capacity of a classical channel to the quantum case, as it leads to closer similarities between the two. Furthermore, a quantum feedback channel does not enhance this capacity~\cite{bowen_2004}, in analogy with the fact that feedback does not enhance the capacity in the classical setting~\cite{shannon_1956}. Various other assisted communication settings have also been studied~\cite{bennett_1996,leung_2015,duan_2016} (see Refs.~\citen{hayashi_2017-book,wilde_2017,watrous_2018,holevo_2019,khatri_2024} for textbooks on the topic).

In addition to simplifying the computation of asymptotic communication capacities, such approaches have the additional benefit that they allow us to understand the ultimate limits of communication through quantum channels: even with assisting quantum resources at our disposal, the bounds given by the assisted capacities can never be exceeded. This motivates the understanding of the precise advantages that can be gained through access to specific resources, and in particular the extent of the advantages that can be enabled by assisted communication schemes.

To gain insight into the limitations of communication over quantum channels in broader settings,
we draw inspiration from the recently introduced framework of postselected quantum hypothesis testing~\cite{regula_2022-4}, in which the usual setting of quantum state discrimination is extended by allowing an additional, `inconclusive' measurement outcome --- representing, for example, situations in which a given process does not conclusively distinguish between the states in consideration, resulting in no guess being made. Although such inconclusive approaches have been explored in both classical~\cite{forney_1968,merhav_2008,tan_2014,hayashi_2015} and quantum information~\cite{ivanovic_1987,dieks_1988,peres_1988,chefles_1998, fiurasek_2003, rudolph_2003, croke_2006, herzog_2005, herzog_2009} before, the crucial difference in the framework of Ref.~\citen{regula_2022-4} is that error rates are evaluated only after conditioning on a conclusive result. This is shown to lead to significant and wide-ranging simplifications, making many one-shot and asymptotic quantities almost effortlessly computable.
The price to pay in such an approach is the need to allow the communicating parties arbitrary access to postselection. Such a concession is known to significantly enhance the power of quantum mechanics in many contexts~\cite{aaronson_2005,fiurasek_2006,gendra_2012,combes_2014,combes_2015,arvidsson-shukur_2020,gisin_1996,kent_1998,horodecki_1999-1,reeb_2011,regula_2022,regula_2022-2}, even being equivalent to having access to postselected closed timelike curves~\cite{lloyd_2011,lloyd_2011-1} --- a well-known extension of  quantum mechanics that has been previously shown to increase its information-processing and computational power \cite{brun_2012}.
However, because of the remarkable simplifications that follow, postselection can be very useful in understanding the ultimate power of quantum information processing, and it has already found use as the conceptual foundation of the hardness arguments that underlie quantum supremacy experiments~\cite{harrow_2017}.

Here we study classical and quantum communication over quantum channels in a setting where the receiver is allowed to perform inconclusive decoding schemes on the message transmitted through the channel. The probability of error, as well as the resulting capacity, is then evaluated conditioned on a conclusive outcome.  Here we compute the quantum and classical capacities in two scenarios: postselected communication with entanglement assistance (pEA), as well as postselected communication assisted by nonsignalling correlations (pNA).  We show, in particular, that the two settings are essentially equivalent --- not only asymptotically, but also at the one-shot level --- and the resulting capacity can be evaluated exactly as a single-letter quantity. Specifically, we show for every channel $\ch{N}_{A\to B}$ that
\begin{align}
\label{eq:main-result-intro}
	C_\sa{pEA}(\ch{N})&=C_\sa{pNA}(\ch{N})=2Q_\sa{pEA}(\ch{N})=2Q_\sa{pNA}(\ch{N})=I_\Omega(\ch{N}),
\end{align}
where $C$ denotes classical capacity, $Q$ quantum capacity, and $I_\Omega(\ch{N})$ is a variant of the mutual information of a channel based on the Hilbert projective metric, which can be efficiently computed as a semidefinite program (see Eq.~\eqref{eq:projective-mutual-channel}).  Our achievability results are based on a postselected communication scheme inspired by probabilistic quantum teleportation.

Our finding in Eq.~\eqref{eq:main-result-intro} provides a complete solution to the problem of  postselected communication over quantum channels with entanglement or nonsignalling assistance, shedding light on the ultimate limits of quantum channels to transmit information even in permissive settings, and once again showing that the addition of postselection leads to a major simplification of the computation of asymptotic rates of quantum information-processing tasks.

The rest of our paper proceeds as follows. In Sec.~\ref{sec:notation}, we establish notation and basic definitions used throughout. Sec.~\ref{sec:communication} provides precise definitions of postselected communication and associated capacities, beginning with a general framework (Sec.~\ref{sec:framework}), and followed by the more specific settings of postselected entanglement-assisted communication (Sec.~\ref{sec:pEA}) and postselected nonsignalling-assisted communication (Sec.~\ref{sec:pNA}). Our main results are presented in Secs.~\ref{sec:oneshot} and \ref{sec:asymptotic}, which give bounds on the one-shot capacities and  precise formulas for the asymptotic capacities, respectively. We finally conclude in Sec.~\ref{sec:conclusion} with a summary of our findings and some directions for future work. Appendices~\ref{app:equivalent}--\ref{app:achievability} give detailed proofs for some claims used in the main text.

\section{Notation}
\label{sec:notation}

Let $\spa{H}_A$ denote the (finite-dimensional) Hilbert space associated with a quantum system $A$.  Let $d_A$ denote the dimensionality of $\spa{H}_A$, and let $[d_A]$ denote the set of consecutive integers $\{1,2,\dots,d_A\}$.  Let $\spa{B}_A$ denote the space of bounded operators acting on $\spa{H}_A$.  A quantum state in $A$ is a positive semidefinite operator $\rho_A\in\spa{B}_A$ of unit trace, and it is said to be pure whenever it is rank one.  A quantum channel from $A$ to $B$ is a linear map $\ch{N}_{A\to B}\colon\spa{B}_A\to\spa{B}_B$ that is completely positive and trace preserving (CPTP).  The Choi state of a channel $\ch{N}_{A\to B}$, denoted by $\Phi_{RB}^\ch{N}\in\spa{B}_R\otimes\spa{B}_B$, is defined as
\begin{align}
	\Phi_{RB}^\ch{N}&:=\ch{N}_{A\to B}\!\left[\Phi_{RA}\right],
\end{align}
where $R$ is a reference system with $d_R=d_A$ and $\Phi_{RA}\equiv\frac{1}{d_A}\sum_{i,j\in[d_A]}\op{i}{j}_R\otimes\op{i}{j}_A$ denotes the standard maximally entangled state between $R$ and $A$.  The replacement channel $\ch{R}_{A\to B}^\sigma$ parametrised by a state $\sigma_B$ is defined as $\ch{R}_{A\to B}^\sigma\colon\rho_A\mapsto\tr[\rho_A]\sigma_B$, and it satisfies $\Phi_{RB}^{\ch{R}^\sigma}=\frac{1}{d_A}\1_R\otimes\sigma_B$.  Let $\s{R}$ denote the set of replacement channels.

Let us define several information-theoretic measures relevant to this work.  The \emph{max-relative entropy} between two states $\rho_A$ and $\sigma_A$ is defined as~\cite{datta_2009}
\begin{align}
\label{eq:max-relative}
	D_{\max}(\rho\|\sigma)&:=\log_2\inf_{\lambda\in\spa{R}}\left\{\lambda\colon\rho_A\leq\lambda\sigma_A\right\} \notag\\
	&\hphantom{:}=\log_2\sup_{O_A}\left\{\frac{\tr\!\left[O_A\rho_A\right]}{\tr\!\left[O_A\sigma_A\right]}\colon0\leq O_A\leq\1_A\right\},
\end{align}
where the infimum in the first line is over all real numbers $\lambda$ and the second line is known as the dual formulation~\cite[Lemma~A.4]{mosonyi_2013}.  The \emph{max-relative entropy} between two channels $\ch{N}_{A\to B}$ and $\ch{M}_{A\to B}$ is defined as~\cite{leditzky_2018-2}
\begin{align}
\label{eq:max-relative-channel}
	D_{\max}(\ch{N}\|\ch{M})&:=\sup_{\rho_{RA}}D_{\max}(\ch{N}_{A\to B}\!\left[\rho_{RA}\right]\|\ch{M}_{A\to B}\!\left[\rho_{RA}\right]),
\end{align}
where the supremum is over all systems $R$ and states $\rho_{RA}$.  The supremum in Eq.~\eqref{eq:max-relative-channel} is known to be achieved by the maximally entangled state $\Phi_{RA}$ with $d_R=d_A$~\cite{diaz_2018-2, wilde_2020}, which implies that
\begin{align}
\label{eq:max-relative-choi}
	D_{\max}(\ch{N}\|\ch{M})=D_{\max}(\Phi^\ch{N}\|\Phi^\ch{M}).
\end{align}
The \emph{Hilbert projective metric} between two states $\rho_A$ and $\sigma_A$ is defined as~\cite{bushell_1973, reeb_2011}
\begin{align}
\label{eq:projective-relative}
	D_\Omega(\rho\|\sigma)&:=D_{\max}(\rho\|\sigma)+D_{\max}(\sigma\|\rho).
\end{align}
Due to the fact that $D_{\max}(\rho\|\sigma)=\infty$ if and only if $\supp(\rho)\not\subseteq\supp(\sigma)$, we note that $D_\Omega(\rho\|\sigma)=\infty$ if and only if $\supp(\rho)\neq\supp(\sigma)$.

Analogously to how the quantum mutual information can be defined in terms of the quantum relative entropy (see, e.g., Ref.~\cite[Exercise~11.8.2]{wilde_2017}), we define the \emph{projective mutual information} between the systems $A$ and $B$ of a bipartite state $\rho_{AB}$, based on the Hilbert projective metric, as
\begin{align}
\label{eq:projective-mutual}
	I_\Omega(A;B)_{\rho_{AB}}&:=\inf_{\sigma_B}D_\Omega(\rho_{AB}\|\rho_A\otimes\sigma_B),
\end{align}
where $\rho_A\equiv\tr_B[\rho_{AB}]$ and the infimum is over all states $\sigma_B$.  We also define the \emph{projective mutual information} of a channel $\ch{N}_{A\to B}$ as
\begin{align}
\label{eq:projective-mutual-channel}
	I_\Omega(\ch{N})&:=\sup_{\rho_{RA}}I_\Omega(R;B)_{\ch{N}_{A\to B}\!\left[\rho_{RA}\right]},
\end{align}
where the supremum is over all systems $R$ and states $\rho_{RA}$.  Importantly, the projective mutual information of $\ch{N}_{A\to B}$ can be expressed in terms of a simple semidefinite program:
\begin{align}
\label{eq:sdp_main}
	I_\Omega(\ch{N})&=\log_2\inf_{\substack{\xi\in\spa{R}, \\ S_B\geq0}}\left\{\xi\colon\Phi_{RB}^{\ch{N}}\leq\1_R\otimes S_B\leq\xi\Phi_{RB}^{\ch{N}}\right\}.
\end{align}
We prove this formulation, along with other equivalent expressions for $I_\Omega(\ch{N})$, in Appendix~\ref{app:equivalent}.

Postselected hypothesis testing, recently proposed in Ref.~\citen{regula_2022-4}, is a fundamental information-processing task in the setting where postselection is allowed.  As in conventional hypothesis testing, the experimenter is given one of two possible states, $\rho$ (null hypothesis) or $\sigma$ (alternative hypothesis), and is asked to guess which state they actually receive --- but now they are allowed an additional option of making no guess without being penalised.  In this scenario, the experimenter's strategy can in general be described by a ternary positive operator-valued measure (POVM) $\{P,Q,\1-P-Q\}$, such that the first two outcomes correspond to guessing $\rho$ and guessing $\sigma$, respectively, and the last outcome corresponds to making no guess, which we also refer to as being `inconclusive.'  Since the experimenter is not penalised for being inconclusive, the error probabilities of interest are those conditioned on a conclusive outcome.  Specifically, the \emph{conditional type-I error} probability, of mistaking $\rho$ as $\sigma$, is given by
\begin{align}
	\alpha_\abb{pH}(\rho;P,Q)&:=\frac{\tr\!\left[Q\rho\right]}{\tr\!\left[\left(P+Q\right)\rho\right]},
\end{align}
and the \emph{conditional type-II error} probability, of mistaking $\sigma$ as $\rho$, is given by
\begin{align}
	\beta_\abb{pH}(\sigma;P,Q)&:=\frac{\tr\!\left[P\sigma\right]}{\tr\!\left[\left(P+Q\right)\sigma\right]}.
\end{align}
It is then natural to define the \emph{postselected $\varepsilon$-hypothesis testing relative entropy} between $\rho$ and $\sigma$~\cite{regula_2022-4}, as follows:
\begin{align}
\label{eq:pH-relative}
	D_\abb{pH}^\varepsilon(\rho\|\sigma)&:=-\log_2\inf_{P,Q\geq0}\left\{\beta_\abb{pH}(\sigma;P,Q)\colon\alpha_\abb{pH}(\rho;P,Q)\leq\varepsilon,\;P+Q\leq\1\right\} \notag\\
	&\hphantom{:}=-\log_2\inf_{P,Q\geq0}\left\{\frac{\tr\!\left[P\sigma\right]}{\tr\!\left[\left(P+Q\right)\sigma\right]}\colon\frac{\tr\!\left[Q\rho\right]}{\tr\!\left[\left(P+Q\right)\rho\right]}\leq\varepsilon,\;P+Q\leq\1\right\},
\end{align}
where the supremum is over all positive semidefinite operators $P$ and $Q$.  In contrast to the conventional setting, the postselected hypothesis testing relative entropy has a closed-form expression in terms of the Hilbert projective metric, given by the following lemma.

\begin{lemma}[Characterisation of $D_\abb{pH}^\varepsilon$ {\cite[Theorem~1]{regula_2022-4}}]
\label{lem:pH-projective}
Let $\rho_A$ and $\sigma_A$ be two states, and let $\varepsilon\in(0,1)$.  Then
\begin{align}
	D_\abb{pH}^\varepsilon(\rho\|\sigma)&=\log_2\!\left(\frac{\varepsilon}{1-\varepsilon}2^{D_\Omega(\rho\|\sigma)}+1\right).
\end{align}
\end{lemma}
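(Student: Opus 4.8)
The plan is to optimise the variational expression in Eq.~\eqref{eq:pH-relative} directly, playing the primal and dual characterisations of the max-relative entropy in Eq.~\eqref{eq:max-relative} against each other. Throughout I abbreviate $a:=2^{D_{\max}(\rho\|\sigma)}$ and $b:=2^{D_{\max}(\sigma\|\rho)}$, so that $2^{D_\Omega(\rho\|\sigma)}=ab$ by Eq.~\eqref{eq:projective-relative}, and I assume $\supp(\rho)=\supp(\sigma)$ so that both quantities are finite (the singular case requiring a separate support argument). Writing $p_\sigma:=\tr[P\sigma]$, $q_\sigma:=\tr[Q\sigma]$, $p_\rho:=\tr[P\rho]$, and $q_\rho:=\tr[Q\rho]$, the first step is to rewrite the objective as $\beta_\abb{pH}(\sigma;P,Q)^{-1}=1+q_\sigma/p_\sigma$ and the feasibility constraint $\alpha_\abb{pH}(\rho;P,Q)\leq\varepsilon$ as $q_\rho/p_\rho\leq\varepsilon/(1-\varepsilon)$. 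Thus minimising $\beta_\abb{pH}$ is equivalent to maximising the ratio $q_\sigma/p_\sigma$ subject to $q_\rho/p_\rho\leq\varepsilon/(1-\varepsilon)$ and $P+Q\leq\1$.

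For the converse bound (a lower bound on $\beta_\abb{pH}$ valid for every feasible pair) I would invoke the primal form of the max-relative entropy: the operator inequalities $\rho\leq a\sigma$ and $\sigma\leq b\rho$, paired with the positive operators $P$ and $Q$, give the scalar bounds $p_\rho\leq a\,p_\sigma$ and $q_\sigma\leq b\,q_\rho$. Chaining these yields
\begin{align}
	\frac{q_\sigma}{p_\sigma}\leq\frac{b\,q_\rho}{a^{-1}p_\rho}=ab\,\frac{q_\rho}{p_\rho}\leq ab\,\frac{\varepsilon}{1-\varepsilon},
\end{align}
so that $\beta_\abb{pH}(\sigma;P,Q)\geq(1-\varepsilon)/(ab\,\varepsilon+1-\varepsilon)$ on the whole feasible set.

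For achievability I would saturate each inequality using the dual form of the max-relative entropy in Eq.~\eqref{eq:max-relative}. Let $O_a$ and $O_b$ attain the suprema defining $a$ and $b$, i.e.\ $\tr[O_a\rho]/\tr[O_a\sigma]=a$ and $\tr[O_b\sigma]/\tr[O_b\rho]=b$ (existence follows from compactness of $\{O\colon0\leq O\leq\1\}$ together with $\supp(\rho)=\supp(\sigma)$, which keeps the relevant traces strictly positive). Setting $P=s\,O_a$ and $Q=s\,r\,O_b$ makes both ratios independent of the scalar $s>0$, so I would fix $r=\frac{\varepsilon}{1-\varepsilon}\frac{\tr[O_a\rho]}{\tr[O_b\rho]}$ to saturate the feasibility constraint, whereupon a direct computation gives $q_\sigma/p_\sigma=ab\,\varepsilon/(1-\varepsilon)$; the remaining freedom in $s$ is used to take it small enough that $P+Q=s(O_a+r\,O_b)\leq\1$, which leaves both ratios unchanged. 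This matches the converse, so the optimal value is $\beta_\abb{pH}^\ast=(1-\varepsilon)/(ab\,\varepsilon+1-\varepsilon)$.

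Finally, substituting into $D_\abb{pH}^\varepsilon(\rho\|\sigma)=-\log_2\beta_\abb{pH}^\ast$ and using $ab=2^{D_\Omega(\rho\|\sigma)}$ yields the claimed identity. The converse is a routine consequence of the primal max-relative entropy inequalities, so the hard part will be the achievability step: one must confirm that the dual optimisers $O_a,O_b$ exist and that the traces appearing in the ratios are strictly positive, which is precisely where the common-support hypothesis enters. A careful treatment of the degenerate cases $\tr[(P+Q)\rho]=0$ (where the conditional errors are undefined) and $\supp(\rho)\neq\supp(\sigma)$ (where $D_\Omega$ is infinite) would also be needed for completeness.
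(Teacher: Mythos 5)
Your proposal is correct in substance, but there is nothing in the paper to compare it against: the paper never proves Lemma~\ref{lem:pH-projective} --- it imports the statement wholesale from Ref.~\citen{regula_2022-4} (Theorem~1 there) and only uses it downstream. Taken as a self-contained derivation, your argument is sound: the reduction of the objective to maximising $q_\sigma/p_\sigma$ subject to $q_\rho/p_\rho\leq\varepsilon/(1-\varepsilon)$, the converse obtained by pairing the primal inequalities $\rho\leq a\sigma$ and $\sigma\leq b\rho$ against $P$ and $Q$ respectively, and the achievability construction $P=s\,O_a$, $Q=s\,r\,O_b$ with $r$ tuned to saturate the constraint all check out and reproduce exactly the claimed closed form $\log_2\!\left(\frac{\varepsilon}{1-\varepsilon}2^{D_\Omega(\rho\|\sigma)}+1\right)$.

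Two loose ends, both of which you flagged yourself, need tightening before the proof is complete. First, attainment of the dual optimisers $O_a,O_b$ does not follow from compactness of $\{O\colon0\leq O\leq\1\}$ alone: a maximising sequence can converge to an operator annihilating both states (for instance, one supported on $\ker(\sigma)$), where the ratio degenerates to $0/0$, so the phrase ``keeps the relevant traces strictly positive'' is not true on the whole feasible set. The fix is either to compress $O$ to $\supp(\sigma)$ and normalise its trace --- on $\left\{O\geq0\colon\tr\!\left[O\right]=1,\;\supp(O)\subseteq\supp(\sigma)\right\}$ the denominator is bounded away from zero and the ratio is continuous, hence attained --- or to exhibit the optimiser explicitly: take $O_a$ to be the rank-one projector onto $w=\sigma^{-1/2}v$, where $v$ is a top eigenvector of $\sigma^{-1/2}\rho\sigma^{-1/2}$, and verify $\bra{w}\left(a\sigma-\rho\right)\ket{w}=0$ with $\tr\!\left[O_a\sigma\right]>0$. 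Second, the lemma makes no support assumption, so the case $\supp(\rho)\neq\supp(\sigma)$ must actually be proved, not just deferred. It is short: if $\supp(\sigma)\not\subseteq\supp(\rho)$, take $Q$ to be the projector onto $\ker(\rho)$ and $P=\delta$ times the projector onto $\supp(\rho)$; then $\alpha_\abb{pH}=0\leq\varepsilon$ while $\beta_\abb{pH}\to0$ as $\delta\to0$, so $D_\abb{pH}^\varepsilon(\rho\|\sigma)=\infty$, matching the right-hand side since $D_\Omega(\rho\|\sigma)=\infty$; the case $\supp(\rho)\not\subseteq\supp(\sigma)$ is symmetric, with $P$ the projector onto $\ker(\sigma)$ and $Q=\delta$ times the projector onto $\supp(\sigma)$. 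With these two points filled in, your proof stands.
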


Since the Hilbert projective metric satisfies the data-processing inequality under positive linear maps~\cite[Proposition~3]{regula_2022-4}, Lemma~\ref{lem:pH-projective} implies that the postselected hypothesis testing relative entropy has the same property:
\begin{align}
\label{eq:data-processing}
	D_\abb{pH}^\varepsilon(\rho\|\sigma)&\geq D_\abb{pH}^\varepsilon(\ch{N}\!\left[\rho\right]\|\ch{N}\!\left[\sigma\right])
\end{align}
for all positive linear maps $\ch{N}_{A\to B}$ and states $\rho_A$ and $\sigma_A$.

\section{Postselected communication}
\label{sec:communication}

In this section, we lay out the general framework of postselected communication over quantum channels and introduce two specific communication scenarios: postselected entanglement-assisted communication and postselected nonsignalling-assisted communication.

\subsection{General framework}
\label{sec:framework}

Inspired by postselected hypothesis testing, we propose the task of \emph{postselected communication} over quantum channels.  Compared to conventional communication settings, the key difference here is that the receiver's decoding operation is allowed an additional option of being inconclusive about the message being transmitted, without being penalised.  Consequently, the figure of merit for message transmission is the accuracy conditioned on when the receiver \emph{conclusively} decodes the message.  To better demonstrate the setting, let us consider the following example, in which a classical message $m\in[d_M]$ is to be transmitted.  Let $\g{M}\in[d_M]\cup\{\perp\}$ be a random variable corresponding to the decoder's recovery of $m$, where $\perp$ denotes the inconclusive outcome, so that
\begin{align}
	\sum_{m'\in[d_M]}\Pr_\g{M}\left\{\g{M}=m'\right\}+\Pr_\g{M}\left\{\g{M}=\,\perp\right\}&=1.
\end{align}
The conditional error probability of the transmission of $m$ is then given by
\begin{align}
	\Pr_\g{M}\left\{\g{M}\neq m\,\middle|\,\abb{conclusive}\right\}&=\Pr_\g{M}\left\{\g{M}\neq m\,\middle|\,\g{M}\neq\,\perp\right\} \notag\\
	&=\frac{\sum_{m'\in[d_M]\colon m'\neq m}\Pr_\g{M}\left\{\g{M}=m'\right\}}{\sum_{m'\in[d_M]}\Pr_\g{M}\left\{\g{M}=m'\right\}} \notag\\
	&=1-\frac{\Pr_\g{M}\left\{\g{M}=m\right\}}{\sum_{m'\in[d_M]}\Pr_\g{M}\left\{\g{M}=m'\right\}}. \label{eq:error-informal}
\end{align}

Now let us put the discussion in formal terms, and furthermore, allow for the message more generally to be classical (as above) or quantum.  Consider a given quantum channel $\ch{N}_{A\to B}$ connecting Alice (the sender) and Bob (the receiver).  Let $M$ denote the message source system on Alice's side, which may be quantum or classical.  Let $\g{M}$ denote the system storing Bob's recovery of the message, which satisfies $d_\g{M}=d_M$.  Apart from $\g{M}$, Bob also holds a classical flag system $X$ satisfying $d_X=2$ and indicating whether his decoding is conclusive ($X=1$) or not ($X=0$)~\footnote{Note that compared to the previous example, we have kept the information about Bob's (in)conclusiveness in a separate system from the recovered message --- this is purely for notational convenience, and both formulations are equivalent.}.  Considering that Bob will postselect the conclusive outcome in $X$, the most general scheme for Alice and Bob to conduct one-shot postselected communication from $M$ to $\g{M}$ through the channel $\ch{N}_{A\to B}$ is in effect equivalent to using $\ch{N}_{A\to B}$ to simulate a `subnormalised' channel $\ch{N}_{M\to\g{M}}'\colon\spa{B}_M\to\spa{B}_\g{M}$ via a probabilistic channel transformation $\Theta_{(A\to B)\to(M\to\g{M})}$~\cite{chiribella_2008, burniston_2020}, as illustrated in Fig.~\ref{fig:postselected}:
\begin{align}
\label{eq:transformation}
	\ch{N}_{M\to\g{M}}'&\equiv\Theta_{(A\to B)\to(M\to\g{M})}\left\{\ch{N}_{A\to B}\right\},
\end{align}
Here $\ch{N}_{M\to\g{M}}'$ being a subnormalised channel means that it is completely positive and trace nonincreasing, i.e., for every system $R$ and state $\rho_{RM}$,
\begin{align}
	\ch{N}_{M\to\g{M}}'\!\left[\rho_{RM}\right]&\geq0, \\
	\tr\!\left[\ch{N}_{M\to\g{M}}'\!\left[\rho_{RM}\right]\right]&\leq1.
\end{align}
Accommodating the terminologies in the literature, we will henceforth refer to subnormalised channels as `subchannels' and probabilistic channel transformations as `probabilistic supermaps.'  

\begin{figure}[t]
\centerline{\includegraphics[scale=0.24]{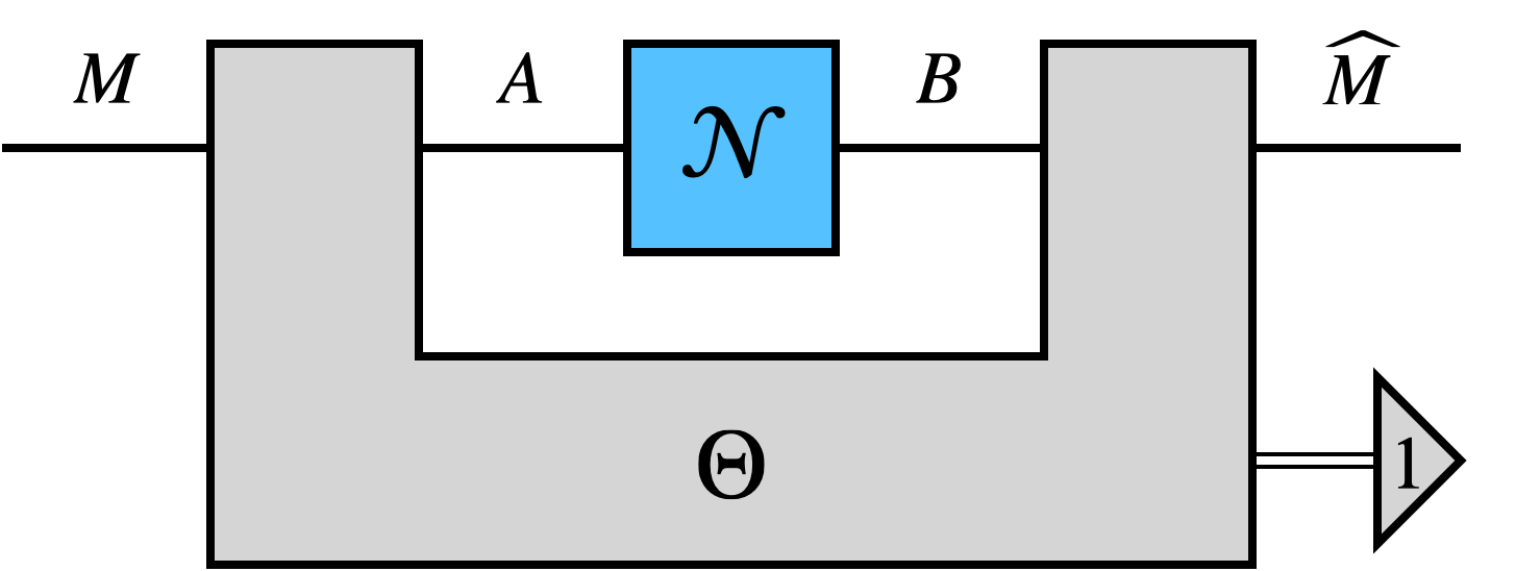}}
\vspace*{8pt}
\caption{Simulation of a subchannel $\ch{N}_{M\to\g{M}}'$ with a probabilistic supermap $\Theta_{(A\to B)\to(M\to\g{M})}$ over a channel $\ch{N}_{A\to B}$ [see Eq.~\eqref{eq:transformation}].  Note that $\Theta_{(A\to B)\to(M\to\g{M})}$ includes postselecting the conclusive outcome in the classical flag system.}
\label{fig:postselected}
\end{figure}

Following the intuition of Eq.~\eqref{eq:error-informal}, given the channel $\ch{N}_{A\to B}$ and the probabilistic supermap $\Theta_{(A\to B)\to(M\to\g{M})}$, when the system $M$ is classical (and so is $\g{M}$), the worst-case conditional error probability of classical communication is defined as
\begin{align}
\label{eq:error-classical}
	P_\abb{err}^\abb{(c)}(\Theta;\ch{N})&:=1-\min_{m\in[d_M]}\frac{\tr\!\left[\op{m}{m}_\g{M}\ch{N}_{M\to\g{M}}'\!\left[\op{m}{m}_M\right]\right]}{\tr\!\left[\ch{N}_{M\to\g{M}}'\!\left[\op{m}{m}_M\right]\right]},
\end{align}
where $\ch{N}_{M\to\g{M}}'$ follows Eq.~\eqref{eq:transformation}.  Likewise, when the system $M$ is quantum (and so is $\g{M}$), the worst-case conditional error probability of quantum communication is defined as
\begin{align}
\label{eq:error-quantum}
	P_\abb{err}^\abb{(q)}(\Theta;\ch{N})&:=1-\inf_{\psi_{RM}}\frac{\tr\!\left[\psi_{R\g{M}}\ch{N}_{M\to\g{M}}'\!\left[\psi_{RM}\right]\right]}{\tr\!\left[\ch{N}_{M\to\g{M}}'\!\left[\psi_{RM}\right]\right]},
\end{align}
where the infimum is over all systems $R$ and pure states $\psi_{RM}$.  The following definition captures the efficiency of a probabilistic supermap as a communication protocol, in terms of the size of the message system and the conditional error probability.

\begin{definition}[$(d_M,\varepsilon)$ protocols]
\label{def:protocol}
Let $\ch{N}_{A\to B}$ be a quantum channel, and let $\varepsilon\in(0,1)$.  A probabilistic supermap $\Theta_{(A\to B)\to(M\to\g{M})}$ is called a $(d_M,\varepsilon)$ classical protocol over $\ch{N}_{A\to B}$ whenever $P_\abb{err}^\abb{(c)}(\Theta;\ch{N})\leq\varepsilon$, and it is called a $(d_M,\varepsilon)$ quantum protocol over $\ch{N}_{A\to B}$ whenever $P_\abb{err}^\abb{(q)}(\Theta;\ch{N})\leq\varepsilon$.
\end{definition}

When Alice and Bob have access to a particular type of resource, the encoding and decoding operations are restricted to a particular class of probabilistic supermaps.  Given a class $\s{T}$ of available probabilistic supermaps, the one-shot postselected classical or quantum capacity of the channel $\ch{N}_{A\to B}$ can be defined as the maximum number of classical or quantum bits transmitted, optimised over protocols within $\s{T}$, subject to a given conditional error probability.

\begin{definition}[One-shot $\s{T}$-assisted capacities]
\label{def:oneshot-capacities}
Let $\s{T}$ be a class of probabilistic supermaps, and let $\varepsilon\in[0,1]$.  The one-shot $\varepsilon$-error $\s{T}$-assisted classical capacity of a quantum channel $\ch{N}_{A\to B}$ is defined as
\begin{align}
	\label{eq:one-shot-classical-capacity}
	C_\s{T}^\varepsilon(\ch{N})&:=\sup_{\Theta\in\s{T}}\left\{\log_2 d_M\colon P_\abb{err}^\abb{(c)}(\Theta;\ch{N})\leq\varepsilon\right\}.
\end{align}
The one-shot $\varepsilon$-error $\s{T}$-assisted quantum capacity of $\ch{N}_{A\to B}$ is defined as
\begin{align}
	\label{eq:oneshot-quantum-capacity}
	Q_\s{T}^\varepsilon(\ch{N})&:=\sup_{\Theta\in\s{T}}\left\{\log_2 d_M\colon P_\abb{err}^\abb{(q)}(\Theta;\ch{N})\leq\varepsilon\right\}.
\end{align}
\end{definition}

When Alice and Bob are connected by multiple instances of the same channel in parallel, the asymptotic rate of transmitted bits per channel use is known as the asymptotic capacity of the channel.

\begin{definition}[Asymptotic $\s{T}$-assisted capacities]
\label{def:asymptotic-capacities}
Let $\s{T}$ be a class of probabilistic supermaps.  The asymptotic $\s{T}$-assisted classical capacity of a quantum channel $\ch{N}_{A\to B}$ is defined as
\begin{align}
\label{eq:asymptotic-pEA-classical}
	C_\s{T}(\ch{N})&:=\inf_{\varepsilon\in(0,1)}\liminf_{n\to\infty}\frac{1}{n}C_\s{T}^\varepsilon(\ch{N}^{\otimes n}).
\end{align}
The asymptotic $\s{T}$-assisted quantum capacity of $\ch{N}_{A\to B}$ is defined as
\begin{align}
\label{eq:asymptotic-quantum-capacity}
	Q_\s{T}(\ch{N})&:=\inf_{\varepsilon\in(0,1)}\liminf_{n\to\infty}\frac{1}{n}Q_\s{T}^\varepsilon(\ch{N}^{\otimes n}).
\end{align}
The strong converse $\s{T}$-assisted classical capacity of $\ch{N}_{A\to B}$ is defined as
\begin{align}
\label{eq:strong-pEA-classical}
	\stc{C}_\s{T}(\ch{N})&:=\sup_{\varepsilon\in(0,1)}\limsup_{n\to\infty}\frac{1}{n}C_\s{T}^\varepsilon(\ch{N}^{\otimes n}).
\end{align}
The strong converse $\s{T}$-assisted quantum capacity of $\ch{N}_{A\to B}$ is defined as
\begin{align}
\label{eq:strong-quantum-capacity}
	\stc{Q}_\s{T}(\ch{N})&:=\sup_{\varepsilon\in(0,1)}\limsup_{n\to\infty}\frac{1}{n}Q_\s{T}^\varepsilon(\ch{N}^{\otimes n}).
\end{align}
\end{definition}

\begin{remark}[Note on related work]
Let us note that the postselected communication setting is related to, yet distinct from, the well-known setting of decoding with erasure, as considered in the classical information theory literature~\cite{forney_1968,merhav_2008,tan_2014,hayashi_2015} as well as in quantum information~\cite{filippov_2021,filippov_2022}. Although these frameworks allow for inconclusive measurement results, their error metrics are not conditioned on a conclusive outcome, and in fact they require the conclusive probability to tend to one in their definition of capacity.
\end{remark}

The above framework is general enough to capture any postselected communication scenario where Alice and Bob are assisted by a particular type of resource.  In the rest of the paper, we will focus on two such scenarios.  In the first scenario, the class $\s{T}$ consists of probabilistic supermaps realisable with shared entanglement, and in the second scenario, it consists of those realisable with nonsignalling correlations.

\subsection{Postselected entanglement-assisted communication}
\label{sec:pEA}

Entanglement-assisted (EA) communication is a prominent and insightful scenario to investigate when studying the fundamental limits of communication over quantum channels~\cite{bennett_1999-1, bennett_2002}.  In conventional EA communication, Alice and Bob are allowed to perform arbitrary encoding and decoding operations locally, with the assistance of an arbitrary shared entangled state.  In \emph{postselected entanglement-assisted (pEA) communication}, we allow Bob to perform postselection as a part of his decoding operation, on top of a conventional EA protocol.  Specifically, Alice and Bob are allowed to transmit messages from $M$ to $\g{M}$ over a given quantum channel $\ch{N}_{A\to B}$ according to the following procedure, as illustrated in Fig.~\ref{fig:pEA}:
\begin{enumerate}
	\item Alice and Bob share a bipartite state $\gamma_{A'B'}$;
	\item Alice encodes the message in $M$ (classical or quantum) by applying a channel $\ch{E}_{MA'\to A}$;
	\item The system $A$ is passed from Alice to Bob through the channel $\ch{N}_{A\to B}$;
	\item Bob receives the system $B$, decodes the message, and stores it in $\g{M}$ by applying a \emph{subchannel} $\ch{D}_{BB'\to\g{M}}$.
\end{enumerate}
We summarise the composition of a pEA protocol formally as below.

\begin{figure}[t]
\centerline{\includegraphics[scale=0.24]{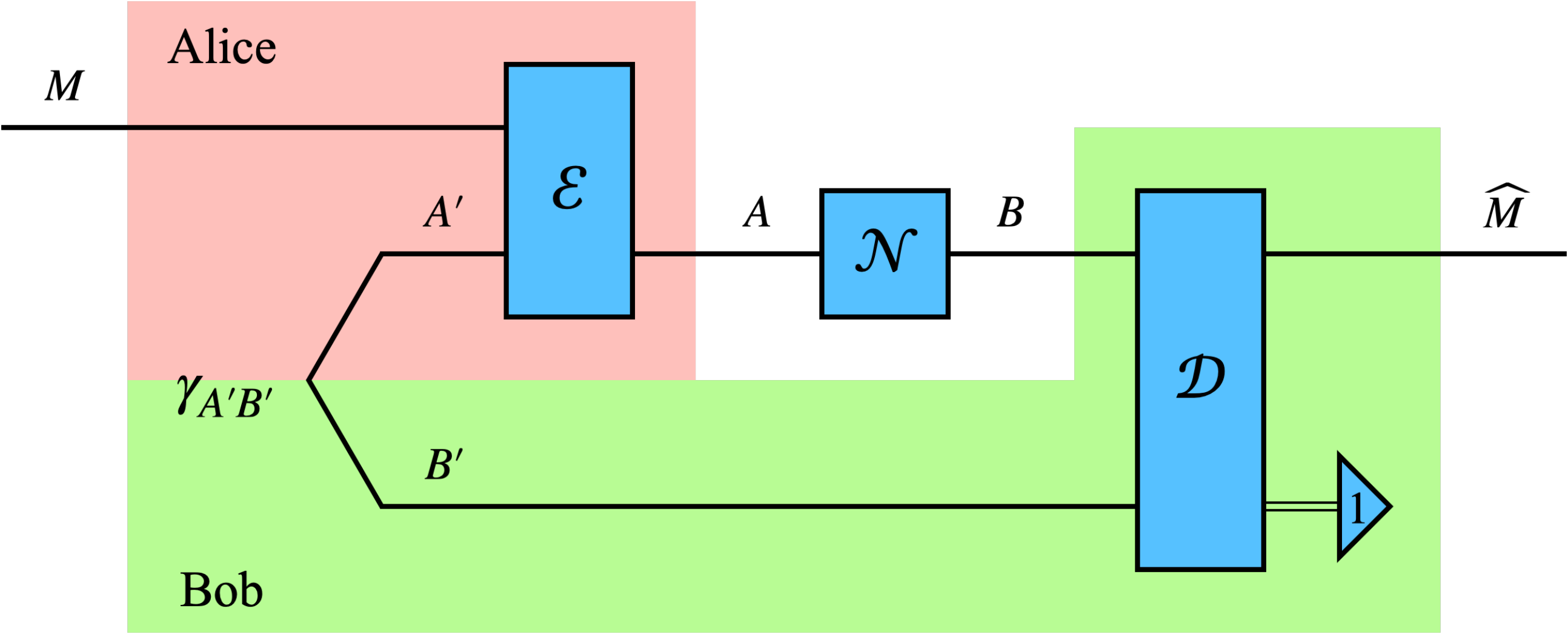}}
\vspace*{8pt}
\caption{Postselected entanglement-assisted communication over a channel $\ch{N}_{A\to B}$ using a protocol $(\gamma_{A'B'},\ch{E}_{MA'\to A},\ch{D}_{BB'\to\g{M}})$ (see Definition~\ref{def:pEA}).  The red and green regions represent Alice's and Bob's operations, respectively.  Note that the decoding operation $\ch{D}_{BB'\to\g{M}}$ includes postselecting the conclusive outcome in the classical flag system.}
\label{fig:pEA}
\end{figure}

\begin{definition}[pEA protocols]
\label{def:pEA}
A probabilistic supermap $\Theta_{(A\to B)\to(M\to\g{M})}\colon$ $\ch{N}_{A\to B}\mapsto\ch{N}_{M\to\g{M}}'$ is called a postselected entanglement-assisted protocol whenever there exists a state $\gamma_{A'B'}$, a channel $\ch{E}_{MA'\to A}$, and a subchannel $\ch{D}_{BB'\to\g{M}}$ such that
\begin{align}
\label{eq:pEA}
	\ch{N}_{M\to\g{M}}'\!\left[\rho_M\right]&=\left(\ch{D}_{BB'\to\g{M}}\circ\ch{N}_{A\to B}\circ\ch{E}_{MA'\to A}\right)\!\left[\rho_M\otimes\gamma_{A'B'}\right]\qquad\forall\rho_M.
\end{align}
The class of pEA protocols is denoted by $\sa{pEA}$.
\end{definition}

It is common that we simply represent the pEA protocol in Definition~\ref{def:pEA} by the triple of its elements: $\Theta_{(A\to B)\to(M\to\g{M})}\equiv(\gamma_{A'B'},\ch{E}_{MA'\to A},\ch{D}_{BB'\to\g{M}})$.

\subsection{Postselected nonsignalling-assisted communication}
\label{sec:pNA}

Nonsignalling-assisted (NA) communication is another theoretically intriguing scenario of assisted communication~\cite{duan_2016, fang_2020-1, takagi_2020}.  In conventional NA communication, Alice and Bob are allowed to implement an arbitrary protocol which itself does not signal from Alice to Bob.  Specifically, every (deterministic) superchannel $\Theta_{(A\to B)\to(M\to\g{M})}$ can be thought of as a bipartite channel $\bc{\Theta}_{MB\to A\g{M}}$ shared between Alice (holding $M$ and $A$) and Bob (holding $B$ and $\g{M}$) (see Ref.~\cite[Fig.~2]{gour_2019-2} under the notation $\Gamma_\Theta$).  The superchannel $\Theta_{(A\to B)\to(M\to\g{M})}$ is called an NA protocol whenever $\bc{\Theta}_{MB\to A\g{M}}$ satisfies the Alice-to-Bob-nonsignalling constraint (also see Ref.~\cite[Eq.~(5)]{takagi_2020}): for all states $\rho_M$, $\omega_M$, and $\sigma_B$,
\begin{align}
\label{eq:NA}
	\tr_A\!\left[\bc{\Theta}_{MB\to A\g{M}}\!\left[\rho_M\otimes\sigma_B\right]\right]&=\tr_A\!\left[\bc{\Theta}_{MB\to A\g{M}}\!\left[\omega_M\otimes\sigma_B\right]\right].
\end{align}
A crucial property of NA protocols is that a superchannel is an NA protocol if and only if it is replacement preserving~\cite[Proposition~1]{takagi_2020}, i.e., always mapping a replacement channel to another replacement channel.  From a resource-theoretic perspective~\cite{chitambar_2019, takagi_2020}, as replacement channels are useless for transmitting information, the class of NA protocols is exactly the maximal set of superchannels that cannot generate useful communication resources from useless ones.  In \emph{postselected nonsignalling-assisted (pNA) communication}, we allow Bob to perform postselection on top of a conventional NA protocol.  This amounts to defining a pNA protocol as follows.

\begin{definition}[pNA protocols]
\label{def:pNA}
A probabilistic supermap $\Theta_{(A\to B)\to(M\to\g{M})}\colon$ $\ch{N}_{A\to B}\mapsto\ch{N}_{M\to\g{M}}'$ is called a postselected nonsignalling assisted protocol whenever there exists an NA protocol $\Xi_{(A\to B)\to(M\to\g{M}X)}$ and a subchannel $\ch{D}_{\g{M}X\to\g{M}}$ such that
\begin{align}
	\ch{N}_{M\to\g{M}}'\!\left[\rho_M\right]&=\left(\ch{D}_{\g{M}X\to\g{M}}\circ\Xi_{(A\to B)\to(M\to\g{M}X)}\left\{\ch{N}_{A\to B}\right\}\right)\!\left[\rho_M\right]\qquad\forall\rho_M.
\end{align}
The class of pNA protocols is denoted by $\sa{pNA}$.
\end{definition}

Let us note here the relation $\sa{pEA}\subset\sa{pNA}$ between the two classes of postselected protocols.  The following proposition generalises Ref.~\cite[Proposition~1]{takagi_2020} by showing that pNA protocols are essentially those that are probabilistically replacement preserving.

\begin{proposition}[Characterisation of pNA protocols]
\label{prop:pNA}
Let $\Theta_{(A\to B)\to(M\to\g{M})}$ be a probabilistic supermap, which can be thought of as a bipartite subchannel $\bc{\Theta}_{MB\to A\g{M}}$ between Alice and Bob (see Fig.~\ref{fig:bipartite}).  Then the following statements are equivalent.
\begin{itemize}
	\item[\textnormal{(i)}] There exists a pNA protocol $\Lambda_{(A\to B)\to(M\to\g{M})}$ and $c\geq0$ such that 
    \begin{align}
        \Theta_{(A\to B)\to(M\to\g{M})}&=c\Lambda_{(A\to B)\to(M\to\g{M})}.
    \end{align}
	\item[\textnormal{(ii)}] $\bc{\Theta}_{MB\to A\g{M}}$ is Alice-to-Bob nonsignalling in the sense of Eq.~\eqref{eq:NA} (see Fig.~\ref{fig:pNA}).
	\item[\textnormal{(iii)}] For every replacement channel $\ch{R}_{A\to B}^\sigma$, there exists a state $\sigma_\g{M}'$ and $p\in[0,1]$ such that
    \begin{align}
        \Theta_{(A\to B)\to(M\to\g{M})}\left\{\ch{R}_{A\to B}^\sigma\right\}&=p\ch{R}_{M\to\g{M}}^{\sigma'}.
    \end{align}
\end{itemize}
\end{proposition}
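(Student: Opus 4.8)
The plan is to work throughout with the bipartite representation $\bc{\Theta}_{MB\to A\g{M}}$ of the probabilistic supermap and to route all three conditions through a single ``loop'' identity obtained by feeding a replacement channel into $\Theta$. I would first record that, because $\ch{R}^\sigma_{A\to B}$ discards whatever is handed to its input and outputs $\sigma_B$, closing the loop of Fig.~\ref{fig:bipartite} gives
\begin{align}
\label{eq:loop-plan}
	\Theta_{(A\to B)\to(M\to\g{M})}\{\ch{R}^\sigma_{A\to B}\}[\rho_M]&=\tr_A\!\left[\bc{\Theta}_{MB\to A\g{M}}[\rho_M\otimes\sigma_B]\right]
\end{align}
for every state $\rho_M$ and every $\sigma_B$. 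This identity, which is the probabilistic analogue of the computation underlying Ref.~\citen{takagi_2020}, is the workhorse for the equivalence of (ii) and (iii).

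Given Eq.~\eqref{eq:loop-plan}, the equivalence (ii)$\Leftrightarrow$(iii) is essentially immediate. Condition (ii) states exactly that the right-hand side of Eq.~\eqref{eq:loop-plan} is independent of $\rho_M$; writing the resulting $\g{M}$-operator as $\tau_{\g{M}}(\sigma)\geq0$, I would set $p:=\tr[\tau_{\g{M}}(\sigma)]$ and $\sigma'_{\g{M}}:=\tau_{\g{M}}(\sigma)/p$, so that $\Theta\{\ch{R}^\sigma\}=p\,\ch{R}^{\sigma'}_{M\to\g{M}}$, which is (iii); the bound $p\leq1$ follows from $\bc{\Theta}$ being trace nonincreasing on $\rho_M\otimes\sigma_B$. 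Conversely, (iii) says precisely that the left-hand side equals $p\,\tr[\rho_M]\,\sigma'_{\g{M}}$, hence is $\rho_M$-independent, which is (ii). The direction (i)$\Rightarrow$(ii) is likewise direct: a pNA protocol factors as $\bc{\Lambda}=(\id_A\otimes\ch{D}_{\g{M}X\to\g{M}})\circ\bc{\Xi}$, where $\bc{\Xi}$ is the nonsignalling bipartite channel of an NA superchannel and $\ch{D}$ acts only on Bob's output, so $\tr_A[\bc{\Lambda}[\rho_M\otimes\sigma_B]]=\ch{D}\!\left[\tr_A[\bc{\Xi}[\rho_M\otimes\sigma_B]]\right]$, the inner expression is $\rho_M$-independent by Eq.~\eqref{eq:NA}, and scaling by $c$ preserves this; thus $\bc{\Theta}=c\bc{\Lambda}$ satisfies Eq.~\eqref{eq:NA}.

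The substance of the proposition is (ii)$\Rightarrow$(i), which has no counterpart in the deterministic setting and is where I expect the real work to lie. The plan is to dilate $\bc{\Theta}$ into a genuine NA superchannel followed by postselection. Introducing a binary flag $X$, I would construct a trace-preserving bipartite channel $\bc{\Xi}_{MB\to A\g{M}X}$ whose conclusive block reproduces a rescaling of $\bc{\Theta}$, namely $\bra{1}_X\bc{\Xi}[\,\cdot\,]\ket{1}_X=\frac1c\bc{\Theta}[\,\cdot\,]$, and which is nonsignalling in both directions; then $\Xi$ is an NA superchannel, $\ch{D}_{\g{M}X\to\g{M}}\colon\tau_{\g{M}X}\mapsto\bra{1}_X\tau_{\g{M}X}\ket{1}_X$ is a valid subchannel implementing the postselection, and $\Theta=c\,\Lambda$ with $\Lambda:=\ch{D}\circ\Xi$ a pNA protocol. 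Passing to Choi operators, this becomes a completion problem: with $\Delta:=\1_{M'B'}-\frac1c\tr_{A\g{M}}[J^{\bc{\Theta}}]$, I must find a positive complementary block $J_0\geq0$ (the value $X=0$) with $\tr_{A\g{M}}[J_0]=\Delta$ (trace preservation) such that the total Choi operator has its $A$-marginal proportional to $\1_{B'}$ (semicausality, needed for $\Xi$ to be a legitimate superchannel) and its $\g{M}X$-marginal proportional to $\1_{M'}$ (A-to-B nonsignalling, i.e. NA).

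The main obstacle is exactly this simultaneous completion. Condition (ii) already forces $\tr_A[J^{\bc{\Theta}}]=\frac1{d_M}\1_{M'}\otimes\Pi_{B'\g{M}}$, so that $\tr_{A\g{M}}[J^{\bc{\Theta}}]\propto\1_{M'}$; consequently trace preservation and A-to-B nonsignalling can already be met by the product choice $J_0\propto\1_{M'B'}\otimes\omega_{A\g{M}}$. What this naive choice does \emph{not} repair is B-to-A nonsignalling, which postselection generically destroys in the conclusive block alone. I would therefore exploit the comb constraint that $\bc{\Theta}$ inherits from being a bona fide probabilistic supermap --- equivalently, from being the conclusive block of \emph{some} semicausal channel --- which yields a domination $\tr_{\g{M}}[J^{\bc{\Theta}}]\leq\frac1{d_B}\1_{B'}\otimes\Omega_{M'A}$, and choose $J_0$ (no longer of product form, absorbing the $B'$-dependence into $\g{M}$) so that the two blocks' $A$-marginals add up to $\frac1{d_B}\1_{B'}\otimes\Omega'_{M'A}$. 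Establishing that a single $J_0\geq0$ meets all three marginal requirements at once is a semidefinite feasibility question; I expect it to be settled either by an explicit symmetric choice or, failing a closed form, by a Farkas/duality argument showing the dual certificate of infeasibility cannot exist. Once feasibility is in hand, unwinding the Choi operators produces the NA superchannel $\Xi$ and the postselecting subchannel $\ch{D}$, completing (ii)$\Rightarrow$(i) and hence the full cycle of equivalences.
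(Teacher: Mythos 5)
Your treatment of the easy directions is correct: the loop identity $\Theta\{\ch{R}^\sigma\}[\rho_M]=\tr_A[\bc{\Theta}_{MB\to A\g{M}}[\rho_M\otimes\sigma_B]]$ gives (ii)$\Leftrightarrow$(iii) essentially immediately, and your (i)$\Rightarrow$(ii) argument via the factorisation of a pNA protocol into a nonsignalling bipartite channel followed by a local subchannel on Bob's side is sound. (The paper runs the cycle as (i)$\Rightarrow$(iii)$\Rightarrow$(ii)$\Rightarrow$(i), but that difference is immaterial.)

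The genuine gap is in (ii)$\Rightarrow$(i), which you yourself flag as where the real work lies --- and then do not do. You correctly reduce the implication to a Choi-operator completion problem (find a positive inconclusive block $J_0$ making the flagged bipartite channel trace preserving, semicausal, and Alice-to-Bob nonsignalling), correctly observe that the naive product ansatz fails semicausality, but then only \emph{assert} that feasibility should follow from ``an explicit symmetric choice or, failing a closed form, a Farkas/duality argument.'' Neither is supplied, so the central implication of Proposition~\ref{prop:pNA} remains unproven in your write-up. The paper closes precisely this step with an explicit construction that avoids any feasibility analysis: letting $\Upsilon_{(A\to B)\to(M\to\g{M})}\{\ch{N}\}:=\ch{C}_{B\to\g{M}}\circ\ch{N}\circ\ch{C}_{M\to A}$ be the completely depolarising superchannel, it sets
\begin{align}
	\Lambda_{(A\to B)\to(M\to\g{M})}&:=\Upsilon_{(A\to B)\to(M\to\g{M})}-\frac{1}{d_M^2d_Ad_B}\Theta_{(A\to B)\to(M\to\g{M})}.
\end{align}
Since the Choi operator of $\bc{\Upsilon}$ is a multiple of the identity and the prefactor $1/(d_M^2d_Ad_B)$ is small enough to be dominated by it, the Choi operator of $\bc{\Lambda}$ is positive semidefinite, and the supermap (comb) constraints of Ref.~\citen{burniston_2020} survive the subtraction because they hold for $\Upsilon$ and the Choi operator of $\Theta$ is positive semidefinite; hence $\Lambda$ is itself a probabilistic supermap. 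Then $\Xi:=\Lambda\otimes\op{0}{0}_X+\frac{1}{d_M^2d_Ad_B}\Theta\otimes\op{1}{1}_X$ is a deterministic superchannel (its flag-traced version is $\Upsilon$; see Ref.~\citen{gour_2019-2}), and it is an NA protocol because Eq.~\eqref{eq:NA} is an \emph{affine} constraint satisfied by both $\Upsilon$ and $\Theta$ (the latter by hypothesis (ii)), hence by $\Lambda$, hence by $\Xi$. Composing $\Xi$ with the subchannel $\ch{D}_{\g{M}X\to\g{M}}\colon\rho\mapsto\bra{1}_X\rho\ket{1}_X$ recovers $\Theta$ up to the constant $c=d_M^2d_Ad_B$, which is statement (i). In your language, the Choi block of $\Lambda$ is exactly the explicit $J_0$ you were hoping for: taking the inconclusive block to be ``depolarising minus scaled $\Theta$'' meets all three marginal requirements simultaneously, because the depolarising superchannel satisfies them with slack and $\Theta$'s own constraints are subtracted off consistently --- no duality argument is needed. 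Until you either adopt this construction or actually carry out the feasibility argument you sketch, your proof of (ii)$\Rightarrow$(i) is incomplete.
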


\begin{figure}[t]
\centerline{\includegraphics[scale=0.2]{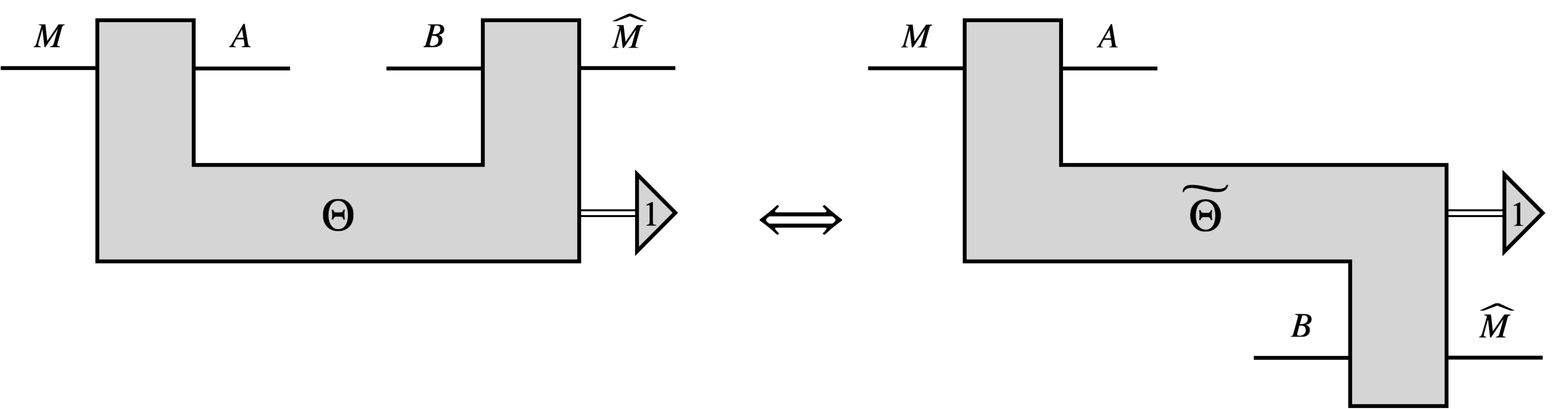}}
\vspace*{8pt}
\caption{Equivalence between a probabilistic supermap $\Theta_{(A\to B)\to(M\to\g{M})}$ (left) and the bipartite subchannel $\bc{\Theta}_{MB\to A\g{M}}$ (right).  The systems $M$ and $A$ are held by Alice, and the systems $B$ and $\g{M}$ are held by Bob.}
\label{fig:bipartite}
\end{figure}

\begin{figure}[t]
\centerline{\includegraphics[scale=0.2]{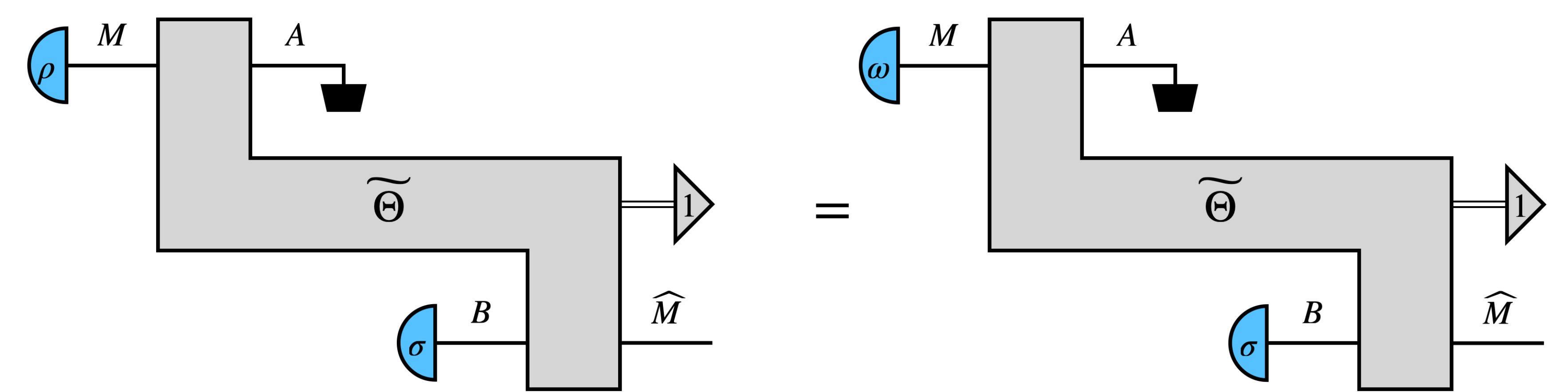}}
\vspace*{8pt}
\caption{The Alice-to-Bob-nonsignalling constraint on a bipartite subchannel $\bc{\Theta}_{MB\to A\g{M}}$ [see Eq.~\eqref{eq:NA}].  On coarse-graining Alice's output system $A$, no information can be transmitted from Alice's input system $M$ to Bob's output system $\g{M}$.}
\label{fig:pNA}
\end{figure}

\begin{proof}
See Appendix~\ref{app:pNA}.
\end{proof}

Proposition~\ref{prop:pNA} unifies three aspects of pNA protocols: (i) the compositional structure, (ii) the information-theoretic implication, and (iii) the resource-theoretic implication.

\begin{remark}[Equivalence of postselected protocols under rescaling]
\label{rem:rescaling}
According to Eqs.~\eqref{eq:error-classical} and \eqref{eq:error-quantum}, multiplying a probabilistic supermap $\Theta_{(A\to B)\to(M\to\g{M})}$ by a positive scalar has no influence on the conditional error probabilities.  Therefore, the equivalence between Proposition~\ref{prop:pNA}(i) and (iii) essentially proves the maximality of the class of pNA protocols in terms of being probabilistically replacement preserving.
\end{remark}

\begin{remark}[pNA protocols need not be Bob-to-Alice nonsignalling]
For conventional NA protocols, their equivalent bipartite channels are not only Alice-to-Bob nonsignalling but also Bob-to-Alice nonsignalling, and the latter property is implicitly required since these protocols are (deterministic) superchannels~\cite{chiribella_2008, gour_2019-2}.  However, pNA protocols are not necessarily Bob-to-Alice nonsignalling in general.  A simple (counter)example is the following pEA (and thus pNA) protocol represented by the triple $(\Phi_{A'B'},\ch{E}_{MA'\to A},\ch{D}_{BB'\to\g{M}})$ with $d_{A'}=d_{B'}=d_A=d_B$ and
\begin{alignat}{3}
	\ch{E}_{MA'\to A}\!\left[\rho_M\right]&:=\tr\!\left[\rho_M\right]\id_{A'\to A}\qquad\forall\rho_M, \\
	\ch{D}_{BB'\to\g{M}}\!\left[\rho_{BB'}\right]&:=\tr\!\left[\Phi_{BB'}\rho_{BB'}\right]\op{1}{1}_\g{M}\qquad\forall\rho_{BB'},
\end{alignat}
which can be verified to be Bob-to-Alice signalling.  Note that this protocol realises a postselected closed timelike curve, as put forward in Ref.~\citen{lloyd_2011}.
\end{remark}

\section{\texorpdfstring{One-shot \lowercase{p}EA \& \lowercase{p}NA capacities}{One-shot pEA \& pNA capacities}}
\label{sec:oneshot}

In this section, we establish the one-shot quantum and classical capacities of a channel in both pEA and pNA communication.  We derive upper and lower bounds on these capacities and prove that these bounds already match in the one-shot regime.  Our findings also suggest that nonsignalling correlations provide essentially no advantage over shared entanglement in terms of assisting postselected communication.

\subsection{Postselected teleportation-based coding}
\label{sec:teleportation}

We start by proposing a pEA communication scheme termed \emph{postselected teleportation-based coding}, which demonstrates how communication strategies can be transformed by simply allowing postselection on Bob's side.  The design of this scheme is, in spirit, distinct from that of all other EA schemes that do not leverage postselection~\cite{bennett_2002, shor_2004-1, hsieh_2008, anshu_2019-1, qi_2018}, and we will show that taking advantage of postselection can indeed enhance the performance in a communication task.

\begin{protocol}[Postselected teleportation-based coding]
\label{prot:teleportation}
Let $\ch{N}_{A\to B}$ be a quantum channel connecting Alice and Bob.  The coding scheme operates according to the following procedure (see Fig.~\ref{fig:teleportation}).
\begin{enumerate}
	\item \emph{Shared entanglement.} Alice and Bob share a maximally entangled state $\Phi_{A'B'}\equiv\Phi_{A_1'B_1'}\otimes\Phi_{A_2'B_2'}$, with Alice holding the system $A'\equiv A_1'A_2'$ and Bob holding the system $B'\equiv B_1'B_2'$.
	\item \emph{Encoding.} Alice measures the systems $MA_1'$ according to the POVM $\{\Phi_{MA_1'},\1_{MA_1'}-\Phi_{MA_1'}\}$, with the outcome $\Phi_{MA_1'}$ indicating a success.  If she succeeds, she applies a channel $\ch{P}_{A_2'\to A}$; otherwise, she applies a channel $\ch{Q}_{A_2'\to A}$.
	\item \emph{Transmission.} The system $A$ is passed from Alice to Bob through the channel $\ch{N}_{A\to B}$.
	\item \emph{Decoding.} Bob measures the systems $BB_2'$ according to a POVM $\{O_{BB_2'},\1_{BB_2'}-O_{BB_2'}\}$, where $0\leq O_{BB_2'}\leq\1_{BB_2'}$.  If he obtains the outcome $O_{BB_2'}$, he applies the identity channel $\id_{B_1'\to\g{M}}$ and announces to be conclusive; otherwise, he announces to be inconclusive.
\end{enumerate}
\end{protocol}

\begin{figure}[t]
\centerline{\includegraphics[scale=0.24]{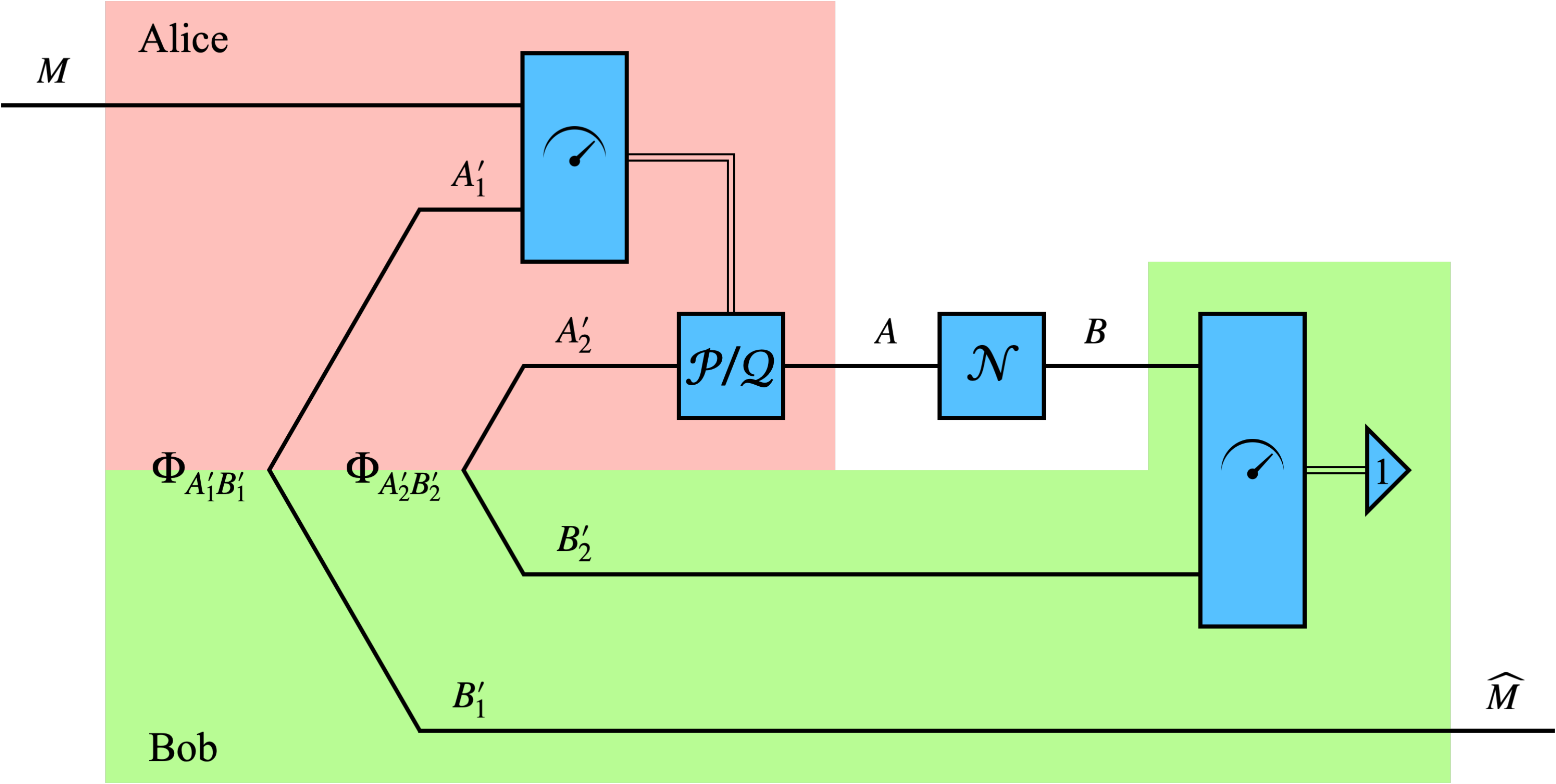}}
\vspace*{8pt}
\caption{Postselected teleportation-based coding over a channel $\ch{N}_{A\to B}$ (see Protocol~\ref{prot:teleportation}).  The red and green regions represent Alice's and Bob's operations, respectively.}
\label{fig:teleportation}
\end{figure}

Protocol~\ref{prot:teleportation} can be interpreted as follows.  In the encoding phase, Alice first tries to direct her message from $M$ to $B_1'$ by conducting probabilistic teleportation with the (incomplete) Bell measurement on $MA_1'$ and the maximally entangled state between $A_1'$ and $B_1$.  Then she aims at informing Bob whether her teleportation was successful or not by encoding this one bit of classical information into the quantum system $A$ and passing it to Bob through the channel $\ch{N}_{A\to B}$.  In the decoding phase, Bob tries to recover the one bit of classical information stored in $B$, with the assistance of $B_2'$, by conducting state discrimination.  He announces that he is conclusive if and only if he believes that Alice was trying to tell him that her teleportation was successful, in which case he would naturally think of $B_1'$ as and ideal recovery of the original message in $M$.

The following proposition provides a quantitative analysis of the performance of Protocol~\ref{prot:teleportation} in postselected communication.

\begin{proposition}[Achievability in one-shot pEA communication]
\label{prop:teleportation}
Let $\ch{N}_{A\to B}$ be a quantum channel, and let $\varepsilon\in(0,1)$.  Then there exists a pEA $(d_M,\varepsilon)$ quantum protocol over $\ch{N}_{A\to B}$ if
\begin{align}
\label{eq:teleportation}
	d_M^2&<\frac{\varepsilon}{1-\varepsilon}2^{\Delta(\ch{N})}+1,
\end{align}
where the measure $\Delta$ for a channel $\ch{N}_{A\to B}$ is defined as
\begin{align}
\label{eq:delta}
	\Delta(\ch{N})&:=\sup_{\ch{P}_{S\to A},\ch{Q}_{S\to A}}D_{\max}(\ch{N}\circ\ch{P}\|\ch{N}\circ\ch{Q}),
\end{align}
and the supremum is over all systems $S$ and channels $\ch{P}_{S\to A}$ and $\ch{Q}_{S\to A}$.
\end{proposition}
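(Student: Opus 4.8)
The plan is to analyse Protocol~\ref{prot:teleportation} directly: I would track an arbitrary pure message state $\psi_{RM}$ through all four steps, compute the resulting conclusive (postselected) output subchannel, extract the worst-case conditional error, and finally optimise over the free parameters $\ch{P}$, $\ch{Q}$, and $O$ via the dual formulation of $D_{\max}$. The starting point is the teleportation identity for the incomplete Bell measurement on $MA_1'$ (taking $d_{A_1'}=d_M$): since $\tr_{MA_1'}[\Phi_{MA_1'}(\psi_{RM}\otimes\Phi_{A_1'B_1'})]=\frac{1}{d_M^2}\psi_{RB_1'}$, the ``success'' branch teleports the message perfectly onto $B_1'$ with weight $1/d_M^2$, whereas the complementary ``failure'' branch leaves the residual operator $\psi_R\otimes\frac{\1_{B_1'}}{d_M}-\frac{1}{d_M^2}\psi_{RB_1'}$ on $RB_1'$. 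Applying $\ch{P}$ (resp.\ $\ch{Q}$) to $A_2'$ on the success (resp.\ failure) branch and then passing through $\ch{N}$ turns the second shared pair $\Phi_{A_2'B_2'}$ into the Choi states $\rho^P:=\Phi^{\ch{N}\circ\ch{P}}$ and $\rho^Q:=\Phi^{\ch{N}\circ\ch{Q}}$ on $B_2'B$.

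Next I would apply Bob's decoding measurement $O_{BB_2'}$, trace out $BB_2'$, and relabel $B_1'\to\g{M}$, obtaining the conclusive output subchannel acting on the input. Writing $a:=\tr[O\rho^P]$ and $b:=\tr[O\rho^Q]$, this yields
\[
	\ch{N}_{M\to\g{M}}'\!\left[\psi_{RM}\right]=\frac{a}{d_M^2}\psi_{R\g{M}}+b\left(\psi_R\otimes\frac{\1_{\g{M}}}{d_M}-\frac{1}{d_M^2}\psi_{R\g{M}}\right).
\]
Substituting this into Eq.~\eqref{eq:error-quantum}, I would evaluate the conditional success ratio $\tr[\psi_{R\g{M}}\ch{N}'[\psi_{RM}]]/\tr[\ch{N}'[\psi_{RM}]]$. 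A short computation shows that this ratio is monotonically increasing in the purity $\tr[\psi_R^2]$, so its infimum over all references $R$ and pure states $\psi_{RM}$ is attained at the maximally entangled input (purity $1/d_M$). There the negative correction term cancels the marginal contribution exactly, collapsing the worst-case conditional success probability to the clean form $a/(a+b(d_M^2-1))$.

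Requiring $P_\abb{err}^\abb{(q)}(\Theta;\ch{N})\leq\varepsilon$ is then equivalent to $\frac{b(d_M^2-1)}{a+b(d_M^2-1)}\leq\varepsilon$, which rearranges to the single inequality $d_M^2\leq\frac{\varepsilon}{1-\varepsilon}\cdot\frac{a}{b}+1$. The final step is to optimise the right-hand side over the protocol's free parameters. Because $\rho^P,\rho^Q$ are precisely the Choi states of $\ch{N}\circ\ch{P}$ and $\ch{N}\circ\ch{Q}$, the dual formulation of the max-relative entropy in Eq.~\eqref{eq:max-relative} gives $\sup_{0\leq O\leq\1}\frac{a}{b}=2^{D_{\max}(\ch{N}\circ\ch{P}\|\ch{N}\circ\ch{Q})}$, and the further supremum over $\ch{P},\ch{Q}$ produces exactly $2^{\Delta(\ch{N})}$ by Eq.~\eqref{eq:delta}. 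Hence whenever $d_M^2<\frac{\varepsilon}{1-\varepsilon}2^{\Delta(\ch{N})}+1$, one can choose $S$, $\ch{P}$, $\ch{Q}$, and $O$ making $\frac{a}{b}$ close enough to $2^{\Delta(\ch{N})}$ that $d_M^2\leq\frac{\varepsilon}{1-\varepsilon}\frac{a}{b}+1$ holds, the strict inequality absorbing the non-attainment of the suprema, and this exhibits the desired pEA $(d_M,\varepsilon)$ quantum protocol.

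I expect the main obstacle to be the careful bookkeeping in the first two steps, and in particular retaining the negative $-\frac{1}{d_M^2}\psi_{R\g{M}}$ correction from the failure branch: it is this term that forces the worst-case input to be maximally entangled and generates the factor $d_M^2-1$ (rather than a naive $d_M-1$), which is exactly what is needed to match the claimed threshold. The conceptual crux, once the bookkeeping is done, is recognising that Bob's binary decoding is precisely a max-relative-entropy hypothesis test between $\ch{N}\circ\ch{P}$ and $\ch{N}\circ\ch{Q}$, which is what links the protocol's performance to the channel measure $\Delta(\ch{N})$.
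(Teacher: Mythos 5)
Your proposal is correct and takes essentially the same approach as the paper: the same postselected teleportation protocol, the same reduction of the conditional error to the ratio $a/b=\tr[O\Phi^{\ch{N}\circ\ch{P}}]/\tr[O\Phi^{\ch{N}\circ\ch{Q}}]$, and the same final optimisation over $\ch{P}$, $\ch{Q}$, and $O$ via the dual form of $D_{\max}$ together with $D_{\max}(\ch{N}\circ\ch{P}\|\ch{N}\circ\ch{Q})=D_{\max}(\Phi^{\ch{N}\circ\ch{P}}\|\Phi^{\ch{N}\circ\ch{Q}})$. The only difference is that you compute the simulated subchannel exactly and identify the maximally entangled input as the worst case, whereas the paper shortcuts this by bounding the conditional error from above by $\Pr\{\textnormal{Alice fails}\mid\textnormal{Bob is conclusive}\}$ and applying Bayes' rule --- both routes yield the identical expression $b(d_M^2-1)/(a+b(d_M^2-1))$, so your version additionally shows that the paper's upper bound is in fact the exact worst-case conditional error of the protocol.
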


\begin{proof}
It suffices to show that there exists a postselected teleportation-based coding scheme that fulfils the requirement in Proposition~\ref{prop:teleportation}.  Let $M$ be a quantum system satisfying Eq.~\eqref{eq:teleportation}.  Let $\Theta_{(A\to B)\to(M\to\g{M})}^{\ch{P},\ch{Q},O}$ be a probabilistic supermap as described in Protocol~\ref{prot:teleportation}, parametrised by two channels $\ch{P}_{A_2'\to A}$ and $\ch{Q}_{A_2'\to A}$ and an operator $O_{BB_2'}$ satisfying $0\leq O_{BB_2'}\leq\1_{BB_2'}$.  By the principle of teleportation, we know that the simulated channel from $M$ to $\g{M}$ is the noiseless channel $\id_{M\to\g{M}}$ if Alice's measurement in the encoding phase succeeds.  Therefore, the worst-case error probability of quantum communication, conditioned on Bob being conclusive, is bounded from above by
\begin{align}
\label{pf:teleportation-1}
	&P_\abb{err}^\abb{(q)}(\Theta^{\ch{P},\ch{Q},O};\ch{N}) \notag\\
	&\qquad\leq\Pr\left\{\abb{Alice fails}\,\middle|\,\abb{Bob is conclusive}\right\} \notag\\
	&\qquad=\frac{\Pr\left\{\abb{Alice fails, Bob is conclusive}\right\}}{\Pr\left\{\abb{Alice succeeds, Bob is conclusive}\right\}+\Pr\left\{\abb{Alice fails, Bob is conclusive}\right\}}.
\end{align}
According to Protocol~\ref{prot:teleportation}, the following observations can be made:
\begin{align}
	\Pr\left\{\abb{Alice succeeds}\right\}&=1-\Pr\left\{\abb{Alice fails}\right\}=\frac{1}{d_M^2}, \label{pf:teleportation-2}
\end{align}
and
\begin{align}
	\Pr\left\{\abb{Bob is conclusive}\,\middle|\,\abb{Alice succeeds}\right\}&=\tr\!\left[O_{BB_2'}\left(\ch{N}_{A\to B}\circ\ch{P}_{A_2'\to A}\right)\!\left[\Phi_{A_2'B_2'}\right]\right] \notag\\
	&=\tr\!\left[O_{B_2'B}\Phi_{B_2'B}^{\ch{N}\circ\ch{P}}\right],
\end{align}
and
\begin{align}
	\Pr\left\{\abb{Bob is conclusive}\,\middle|\,\abb{Alice fails}\right\}&=\tr\!\left[O_{BB_2'}\left(\ch{N}_{A\to B}\circ\ch{Q}_{A_2'\to A}\right)\!\left[\Phi_{A_2'B_2'}\right]\right] \notag\\
	&=\tr\!\left[O_{B_2'B}\Phi_{B_2'B}^{\ch{N}\circ\ch{Q}}\right]. \label{pf:teleportation-3}
\end{align}
Inserting Eqs.~\eqref{eq:teleportation} and \eqref{pf:teleportation-2}--\eqref{pf:teleportation-3} to Eq.~\eqref{pf:teleportation-1}, and applying the Bayes' rule, we have that
\begin{align}
	P_\abb{err}^\abb{(q)}(\Theta^{\ch{P},\ch{Q},O};\ch{N})&\leq\frac{\left(1-\frac{1}{d_M^2}\right)\tr\!\left[O_{B_2'B}\Phi_{B_2'B}^{\ch{N}\circ\ch{Q}}\right]}{\frac{1}{d_M^2}\tr\!\left[O_{B_2'B}\Phi_{B_2'B}^{\ch{N}\circ\ch{P}}\right]+\left(1-\frac{1}{d_M^2}\right)\tr\!\left[O_{B_2'B}\Phi_{B_2'B}^{\ch{N}\circ\ch{Q}}\right]} \notag\\
	&=\left(\frac{1}{d_M^2-1}\frac{\tr\!\left[O_{B_2'B}\Phi_{B_2'B}^{\ch{N}\circ\ch{P}}\right]}{\tr\!\left[O_{B_2'B}\Phi_{B_2'B}^{\ch{N}\circ\ch{Q}}\right]}+1\right)^{-1} \notag\\
	&<\left(\left(\frac{1}{\varepsilon}-1\right)2^{-\Delta(\ch{N})}\frac{\tr\!\left[O_{B_2'B}\Phi_{B_2'B}^{\ch{N}\circ\ch{P}}\right]}{\tr\!\left[O_{B_2'B}\Phi_{B_2'B}^{\ch{N}\circ\ch{Q}}\right]}+1\right)^{-1}. \label{pf:teleportation-4}
\end{align}
Employing Eqs.~\eqref{eq:max-relative-choi} and \eqref{eq:max-relative}, the quantity $\Delta(\ch{N})$ defined in Eq.~\eqref{eq:delta} can be recast as
\begin{align}
\label{eq:delta-dual}
	\Delta(\ch{N})&=\log_2\sup_{\ch{P}_{S\to A},\ch{Q}_{S\to A},O_{TB}}\frac{\tr\!\left[O_{TB}\Phi_{TB}^{\ch{N}\circ\ch{P}}\right]}{\tr\!\left[O_{TB}\Phi_{TB}^{\ch{N}\circ\ch{Q}}\right]},
\end{align}
where the supremum is over all systems $S$, channels $\ch{P}_{S\to A}$ and $\ch{Q}_{S\to A}$, and operators $O_{TB}$ with $0\leq O_{TB}\leq\1_{TB}$ and $d_T=d_S$.  Fixing $A_2'=S$ and $B_2'=T$, optimising over $\ch{P}_{A_2'\to A}$, $\ch{Q}_{A_2'\to A}$, and $O_{B_2'B}$ on both sides of Eq.~\eqref{pf:teleportation-4}, and inserting Eq.~\eqref{eq:delta-dual}, we have that
\begin{align}
	&\inf_{\ch{P}_{A_2'\to A},\ch{Q}_{A_2'\to A},O_{B_2'B}}P_\abb{err}^\abb{(q)}(\Theta^{\ch{P},\ch{Q},O};\ch{N}) \notag\\
	&\qquad<\left(\left(\frac{1}{\varepsilon}-1\right)2^{-\Delta(\ch{N})}\sup_{\ch{P}_{A_2'\to A},\ch{Q}_{A_2'\to A},O_{B_2'B}}\frac{\tr\!\left[O_{B_2'B}\Phi_{B_2'B}^{\ch{N}\circ\ch{P}}\right]}{\tr\!\left[O_{B_2'B}\Phi_{B_2'B}^{\ch{N}\circ\ch{Q}}\right]}+1\right)^{-1} \notag\\*
	&\qquad=\varepsilon.
\end{align}
This implies that there exist two channels $\ch{P}_{A_2'\to A}$ and $\ch{Q}_{A_2'\to A}$ and an operator $O_{BB_2'}$ such that $0\leq O_{BB_2'}\leq\1_{BB_2'}$ and $P_\abb{err}^\abb{(q)}(\Theta^{\ch{P},\ch{Q},O};\ch{N})\leq\varepsilon$, making $\Theta_{(A\to B)\to(M\to\g{M})}^{\ch{P},\ch{Q},O}$ a $(d_M,\varepsilon)$ quantum protocol.
\end{proof}

\begin{remark}[On the case of an infinite one-shot pEA quantum capacity]
\label{rem:infinite}
Proposition~\ref{prop:teleportation} implies a lower bound on the one-shot pEA quantum capacity of $\ch{N}_{A\to B}$ in terms of the quantity $\Delta(\ch{N})$ defined in Eq.~\eqref{eq:delta}.  Due to the fact that $D_{\max}(\rho\|\sigma)=\infty$ if $\supp(\rho)\not\subseteq\supp(\sigma)$, every channel $\ch{N}_{A\to B}$ capable of producing output states with different supports  (e.g., any classical bit erasure channel with erasure probability $p<1$, or simply any unitary channel) has $\Delta(\ch{N})=\infty$ and thus an infinite one-shot $\varepsilon$-error pEA quantum capacity for all $\varepsilon\in(0,1)$. However, $\Delta(\ch{N})$ is finite for generic noisy channels.
\end{remark}

\subsection{One-shot pEA \& pNA quantum capacities}
\label{sec:oneshot-quantum}

Proposition~\ref{prop:teleportation} provides a lower bound on the one-shot pEA quantum capacity of a channel $\ch{N}_{A\to B}$ in terms of a novel quantity $\Delta(\ch{N})$, and it naturally brings up the following questions: What is the relation between $\Delta(\ch{N})$ and other known information-theoretic quantities? How far is this lower bound from the actual one-shot pEA quantum capacity $Q_\sa{pEA}^\varepsilon(\ch{N})$?  In what follows, we provide exact and concise answers to the above questions, which also enable us to derive closed-form expressions for the one-shot pEA and pNA quantum capacities.

\begin{lemma}[Characterisation of $\Delta$]
\label{lem:delta}
Let $\ch{N}_{A\to B}$ be a quantum channel.  Then
\begin{align}
	\Delta(\ch{N})&=I_\Omega(\ch{N}).
\end{align}
\end{lemma}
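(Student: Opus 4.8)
The plan is to prove the two matching inequalities $\Delta(\ch{N})\le I_\Omega(\ch{N})$ and $\Delta(\ch{N})\ge I_\Omega(\ch{N})$, working throughout with the dual and channel reformulations already available. On the $I_\Omega$ side I would first record the channel-divergence form $I_\Omega(\ch{N})=\inf_{\sigma_B}\big(D_{\max}(\ch{N}\|\ch{R}^\sigma)+D_{\max}(\ch{R}^\sigma\|\ch{N})\big)$, which follows by rewriting the semidefinite program in Eq.~\eqref{eq:sdp_main}: a feasible pair $(\xi,S_B)$ with $\Phi^\ch{N}_{RB}\le\1_R\otimes S_B\le\xi\Phi^\ch{N}_{RB}$ corresponds, via $\sigma_B:=S_B/\tr[S_B]$ and Eqs.~\eqref{eq:max-relative-choi}--\eqref{eq:projective-relative}, exactly to the two bounds summing to $\log_2\xi$; by Appendix~\ref{app:equivalent} this infimum equals $I_\Omega(R;B)_{\Phi^\ch{N}}=I_\Omega(\ch{N})$. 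On the $\Delta$ side I keep the defining form $\Delta(\ch{N})=\sup_{\ch{P},\ch{Q}}D_{\max}(\ch{N}\circ\ch{P}\|\ch{N}\circ\ch{Q})$ from Eq.~\eqref{eq:delta}.

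For the upper bound, the key observation is that a replacement channel annihilates any pre-composed channel: for every state $\sigma_B$ and all channels $\ch{P}_{S\to A}$, $\ch{Q}_{S\to A}$ one has $\ch{R}^\sigma_{A\to B}\circ\ch{P}=\ch{R}^\sigma_{A\to B}\circ\ch{Q}=\ch{R}^\sigma_{S\to B}$, the \emph{same} replacement channel. Inserting this common channel as an intermediate point and using that $D_{\max}$ of channels equals $D_{\max}$ of their Choi states (hence obeys the triangle inequality $X\le2^aY\le2^{a+b}Z$), the middle term vanishes and I obtain $D_{\max}(\ch{N}\circ\ch{P}\|\ch{N}\circ\ch{Q})\le D_{\max}(\ch{N}\circ\ch{P}\|\ch{R}^\sigma_{S\to B})+D_{\max}(\ch{R}^\sigma_{S\to B}\|\ch{N}\circ\ch{Q})$. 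Because $\lambda\ch{R}^\sigma-\ch{N}$ completely positive implies $(\lambda\ch{R}^\sigma-\ch{N})\circ\ch{P}$ completely positive, $D_{\max}$ is monotone under pre-composition, so the two terms are bounded by $D_{\max}(\ch{N}\|\ch{R}^\sigma)$ and $D_{\max}(\ch{R}^\sigma\|\ch{N})$ respectively. Taking the supremum over $\ch{P},\ch{Q}$ and the infimum over $\sigma_B$ then yields $\Delta(\ch{N})\le\inf_{\sigma_B}D_\Omega(\ch{N}\|\ch{R}^\sigma)=I_\Omega(\ch{N})$.

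For the lower bound I would exhibit explicit channels. Using that $I_\Omega(\ch{N})=I_\Omega(R;B)_{\Phi^\ch{N}}$ is attained at the maximally entangled input, let $\sigma^*_B$ achieve $I_\Omega(\ch{N})=D_{\max}(\Phi^\ch{N}\|\pi^*)+D_{\max}(\pi^*\|\Phi^\ch{N})$ with $\pi^*:=\frac{1}{d_A}\1_R\otimes\sigma^*_B$, write $a:=D_{\max}(\Phi^\ch{N}\|\pi^*)$ and $b:=D_{\max}(\pi^*\|\Phi^\ch{N})$, and let $\ket{u}$ be the top eigenvector realising $\langle u|\Phi^\ch{N}|u\rangle=2^{a}\langle u|\pi^*|u\rangle$. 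I take $\ch{P}=\id$, so the numerator Choi state is $\Phi^\ch{N}$, and choose $\ch{Q}$ (equivalently an input $\beta_{RA}$ with $\beta_R=\1_R/d_A$) that minimises $\langle u|\ch{N}[\beta]|u\rangle$; testing $D_{\max}(\Phi^\ch{N}\|\ch{N}[\beta])$ against $\ket{u}$ gives $\Delta(\ch{N})\ge a-\log_2\big(\langle u|\ch{N}[\beta]|u\rangle/\langle u|\pi^*|u\rangle\big)$. The lower bound thus reduces to the single inequality $\min_{\beta_R=\1_R/d_A}\langle u|\ch{N}[\beta]|u\rangle\le 2^{-b}\langle u|\pi^*|u\rangle$, which I would establish from the optimality (complementary-slackness) conditions of the infimum over $\sigma_B$ in Eq.~\eqref{eq:sdp_main}: the optimal $\sigma^*_B$ balances the two directions so that the output cone of $\ch{N}$ on inputs with marginal $\1_R/d_A$ reaches a factor $2^{-b}$ below $\pi^*$ along $\ket{u}$. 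This step is the main obstacle, as it requires converting the dual certificate of the $I_\Omega$ SDP into an \emph{admissible} state $\beta$; the sanity check on the depolarising channel (where $\ch{P}=\id$ and $\ch{Q}$ a Pauli unitary already attain $a+b$) indicates the construction is correct. Finally, the infinite case is consistent and handled separately: $I_\Omega(\ch{N})=\infty$ exactly when $\supp(\Phi^\ch{N}_{RB})$ fails to have the product form $\spa{H}_R\otimes V_B$ demanded by Eq.~\eqref{eq:sdp_main}, which by the support analysis of Remark~\ref{rem:infinite} is precisely when $\ch{N}$ admits common-marginal inputs with incomparable output supports, i.e.\ when $\Delta(\ch{N})=\infty$.
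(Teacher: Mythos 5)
Your upper bound $\Delta(\ch{N})\le I_\Omega(\ch{N})$ is correct, and it takes a genuinely different route from the paper: the paper proves both directions through the dual formulation of Lemma~\ref{lem:projective-mutual-information-dual} together with the transpose trick, whereas you use only the Choi formulation $I_\Omega(\ch{N})=\inf_{\sigma_B}D_\Omega(\ch{N}\|\ch{R}^\sigma)$, the identity $\ch{R}^\sigma_{A\to B}\circ\ch{P}=\ch{R}^\sigma_{A\to B}\circ\ch{Q}=\ch{R}^\sigma_{S\to B}$, the triangle inequality for $D_{\max}$, and monotonicity of $D_{\max}$ under pre-composition. All of these steps check out, and this half of your argument is arguably more elementary than the paper's.

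The lower bound, however, has a genuine gap, located exactly at the step you flag as ``the main obstacle,'' and it cannot be closed in the form you propose. Fixing $\ch{P}=\id$ is a real restriction. Writing $\pi^*:=\frac{1}{d_A}\1_R\otimes\sigma^*_B$ and $\beta_{RA}=(\ch{Q}^\top_{R'\to R}\otimes\id_A)[\Phi_{R'A}]$ for a channel $\ch{Q}$, the transpose trick gives $\ch{N}_{A\to B}[\beta_{RA}]=(\ch{Q}^\top_{R'\to R}\otimes\id_B)[\Phi^\ch{N}_{R'B}]$, where $\ch{Q}^\top$ is completely positive and \emph{unital}. Since $\Phi^\ch{N}-2^{-b}\pi^*\geq0$ and $\ch{Q}^\top[\1]=\1$, it follows that $\ch{N}[\beta]\geq2^{-b}\pi^*$ for \emph{every} admissible $\beta$; hence the inequality you need, $\min_{\beta}\bra{u}\ch{N}[\beta]\ket{u}\le2^{-b}\bra{u}\pi^*\ket{u}$, can only ever hold with equality. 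Equality requires a positive operator $Q_{RB}$ supported entirely in $\ker(\Phi^\ch{N}-2^{-b}\pi^*)$ whose $B$-marginal equals $\tr_R[\op{u}{u}]$ for a \emph{single} kernel vector $\ket{u}$ of $2^a\pi^*-\Phi^\ch{N}$. Complementary slackness for the program in Eq.~\eqref{eq:sdp_main} does not deliver this: it only guarantees a (generally higher-rank) $P_{RB}$ supported in $\ker(2^a\pi^*-\Phi^\ch{N})$ whose marginal matches that of some $Q_{RB}$ supported in $\ker(\Phi^\ch{N}-2^{-b}\pi^*)$, and when the cross terms $\op{\psi_i}{\psi_j}$ of such a $P$ have nonvanishing partial trace over $R$, no single rank-one component of $P$ can satisfy the marginal-matching constraint by itself. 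Your depolarising-channel check is misleading on this point: there the relevant kernels are spanned by Bell states and the rank-one reduction succeeds by symmetry, which is not a generic feature.

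The paper's proof is built precisely to sidestep this obstruction, and it shows what is missing from your argument: the freedom in $\ch{P}$ is essential, not cosmetic. Starting from an arbitrary feasible dual pair $(P_{RB},Q_{RB})$ with $P_B=Q_B$, the paper purifies the common marginal as $O_{TB}=d_BP_B^{1/2}\Phi_{TB}P_B^{1/2}$ and uses the fact that \emph{every} extension of $P_B$ is reachable from this purification by a channel acting on $T$, so that $P=\ch{P}^*[O]$ and $Q=\ch{Q}^*[O]$ simultaneously for two channels $\ch{P}^*,\ch{Q}^*$; these, together with $O$ as the measurement operator, furnish the pair $(\ch{P},\ch{Q})$ in the definition of $\Delta$. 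Reachability by local channels is asymmetric --- one can reach any extension from a purification, but not from an arbitrary operator, and in particular not from the numerator-optimal operators your ansatz $\ch{P}=\id$ forces you to use --- so repairing your proof essentially requires replacing the $\ch{P}=\id$ ansatz by this purification construction.
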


\begin{proof}
We first show $\Delta(\ch{N})\leq I_\Omega(\ch{N})$.  Following Eq.~\eqref{eq:delta-dual},
\begin{align}
	\Delta(\ch{N})&=\log_2\sup_{\ch{P}_{S\to A},\ch{Q}_{S\to A},O_{TB}}\frac{\tr\!\left[O_{TB}\Phi_{TB}^{\ch{N}\circ\ch{P}}\right]}{\tr\!\left[O_{TB}\Phi_{TB}^{\ch{N}\circ\ch{Q}}\right]} \notag\\
	&=\log_2\sup_{\ch{P}_{S\to A},\ch{Q}_{S\to A},O_{TB}}\frac{\tr\!\left[O_{TB}\left(\ch{N}_{A\to B}\circ\ch{P}_{S\to A}\right)\!\left[\Phi_{TS}\right]\right]}{\tr\!\left[O_{TB}\left(\ch{N}_{A\to B}\circ\ch{Q}_{S\to A}\right)\!\left[\Phi_{TS}\right]\right]} \notag\\
	&=\log_2\sup_{\ch{P}_{T\to R},\ch{Q}_{T\to R},O_{TB}}\frac{\tr\!\left[O_{TB}\left(\ch{P}_{R\to T}^\top\otimes\ch{N}_{A\to B}\right)\!\left[\Phi_{RA}\right]\right]}{\tr\!\left[O_{TB}\left(\ch{Q}_{R\to T}^\top\otimes\ch{N}_{A\to B}\right)\!\left[\Phi_{RA}\right]\right]} \notag\\
	&=\log_2\sup_{\ch{P}_{T\to R},\ch{Q}_{T\to R},O_{TB}}\frac{\tr\!\left[\ch{P}_{T\to R}^*\!\left[O_{TB}\right]\Phi_{RB}^\ch{N}\right]}{\tr\!\left[\ch{Q}_{T\to R}^*\!\left[O_{TB}\right]\Phi_{RB}^\ch{N}\right]}, \label{pf:delta-1}
\end{align}
where the supremum is over all channels $\ch{P}_{S\to A}$ and $\ch{Q}_{S\to A}$ and positive operators $O_{TB}$ with $d_T=d_S$, and $\ch{P}_{R\to T}^\top$ and $\ch{P}_{T\to R}^*$ denote the transpose and complex conjugate of $\ch{P}_{T\to R}$, respectively.  Note that we have dropped the redundant condition $O_{TB}\leq\1_{TB}$ for Eq.~\eqref{pf:delta-1}, since dividing $O_{TB}$ by its operator norm always fits such a condition while not changing the value of the objective function.  Here the third line of Eq.~\eqref{pf:delta-1} used the transpose trick for channels~\cite{gour_2019-2, khatri_2024}, and the last line used the property $(\ch{P}_{R\to T}^\top)^\dagger=\ch{P}_{T\to R}^*$.  Since $\ch{P}_{T\to R}^*$ and $\ch{Q}_{T\to R}^*$ are both channels (i.e., CPTP maps), we know that $(\ch{P}_{T\to R}^*\otimes\id_B)[O_{TB}]$ and $(\ch{Q}_{T\to R}^*\otimes\id_B)[O_{TB}]$ are positive semidefinite operators satisfying
\begin{align}
\label{pf:delta-2}
	\tr_R\!\left[\ch{P}_{T\to R}^*\!\left[O_{TB}\right]\right]&=\tr_R\!\left[\ch{Q}_{T\to R}^*\!\left[O_{TB}\right]\right].
\end{align}
By Lemma~\ref{lem:projective-mutual-information-dual} in Appendix~\ref{app:equivalent}, the dual formulation of $I_\Omega(\ch{N})$ is given by
\begin{align}
\label{eq:projective-mutual-channel-dual}
	I_\Omega(\ch{N})&=\log_2\sup_{P_{RB},Q_{RB}\geq0}\left\{\frac{\tr\!\left[P_{RB}\Phi_{RB}^\ch{N}\right]}{\tr\!\left[Q_{RB}\Phi_{RB}^\ch{N}\right]}\colon P_B=Q_B\right\}.
\end{align}
Following Eq.~\eqref{pf:delta-1} and employing Eqs.~\eqref{pf:delta-2} and \eqref{eq:projective-mutual-channel-dual}, we conclude that $\Delta(\ch{N})\leq I_\Omega(\ch{N})$.

For the other direction, consider an arbitrary pair of positive semidefinite operators $(P_{RB},Q_{RB})$ such that $P_B=Q_B$.  We note that the operator
\begin{align}
	O_{TB}&:=d_BP_B^\frac{1}{2}\Phi_{TB}P_B^\frac{1}{2}
\end{align}
is a purification of $P_B$ in the sense that it is rank one and satisfies $O_B=P_B$.  This implies that there exist two channels $\ch{P}_{T\to R}^P$ and $\ch{Q}_{T\to R}^Q$ such that $P_{RB}=\ch{P}_{T\to R}\!\left[O_{TB}\right]$ and $Q_{RB}=\ch{Q}_{T\to R}\!\left[O_{TB}\right]$.  Employing Eq.~\eqref{pf:delta-1}, we have that
\begin{align}
	\log_2\frac{\tr\!\left[P_{RB}\Phi_{RB}^\ch{N}\right]}{\tr\!\left[Q_{RB}\Phi_{RB}^\ch{N}\right]}&=\log_2\frac{\tr\!\left[\ch{P}_{T\to R}\!\left[O_{TB}\right]\Phi_{RB}^\ch{N}\right]}{\tr\!\left[\ch{Q}_{T\to R}\!\left[O_{TB}\right]\Phi_{RB}^\ch{N}\right]}\leq\Delta(\ch{N}).
\end{align}
Since this holds for every pair of positive semidefinite operators $(P_{RB},Q_{RB})$ satisfying $P_B=Q_B$, it follows from Eq.~\eqref{eq:projective-mutual-channel-dual} that $I_\Omega(\ch{N})\leq\Delta(\ch{N})$.
\end{proof}

\begin{remark}[Bounded dimensionality of the system $S$]
\label{rem:bounded}
The proof of Lemma~\ref{lem:delta} implies that the supremum in Eq.~\eqref{eq:delta} can be achieved by channels with an input system $S$ of a bounded dimensionality $d_S=d_B$.  Accordingly, the performance of Protocol~\ref{prot:teleportation} does not improve by enlarging the systems $A_2'$ and $B_2'$ beyond $d_{A_2'}=d_{B_2'}=d_B$.  Note that the size of these systems being bounded is not assumed when we initially defined $\Delta(\ch{N})$ or Protocol~\ref{prot:teleportation}.
\end{remark}

According to Lemma~\ref{lem:delta}, we readily obtain a lower bound on the one-shot pEA quantum capacity $Q_\sa{pEA}^\varepsilon(\ch{N})$ in terms of the projective mutual information $I_\Omega(\ch{N})$ and the conditional error probability $\varepsilon$.  In what follows, we establish an upper bound on the one-shot pNA quantum capacity $Q_\sa{pNA}^\varepsilon(\ch{N})$ that matches the lower bound on $Q_\sa{pEA}^\varepsilon(\ch{N})$.  As $\sa{pEA}\subset\sa{pNA}$ implies $Q_\sa{pEA}^\varepsilon(\ch{N})\leq Q_\sa{pNA}^\varepsilon(\ch{N})$, such an upper bound results in an approximate expression for both capacities.

\begin{theorem}[One-shot pEA \& pNA quantum capacities]
\label{thm:oneshot-quantum}
Let $\ch{N}_{A\to B}$ be a quantum channel, and let $\varepsilon\in(0,1)$.  Then the one-shot $\varepsilon$-error pEA and pNA quantum capacities satisfy
\begin{align}
	\log_2\!\left\lceil\sqrt{\frac{\varepsilon}{1-\varepsilon}2^{I_\Omega(\ch{N})}+1}-1\right\rceil&\leq Q_\sa{pEA}^\varepsilon(\ch{N})\leq Q_\sa{pNA}^\varepsilon(\ch{N})\leq\log_2\!\left\lfloor\sqrt{\frac{\varepsilon}{1-\varepsilon}2^{I_\Omega(\ch{N})}+1}\right\rfloor.
\end{align}
\end{theorem}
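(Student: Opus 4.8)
The plan is to establish the three inequalities separately. The central one, $Q_\sa{pEA}^\varepsilon(\ch{N})\le Q_\sa{pNA}^\varepsilon(\ch{N})$, is immediate from the inclusion $\sa{pEA}\subset\sa{pNA}$. For the lower bound I would combine the achievability statement of Proposition~\ref{prop:teleportation} with the identification $\Delta(\ch{N})=I_\Omega(\ch{N})$ from Lemma~\ref{lem:delta}: together they guarantee a pEA $(d_M,\varepsilon)$ quantum protocol whenever $d_M^2<\frac{\varepsilon}{1-\varepsilon}2^{I_\Omega(\ch{N})}+1$. Since $Q_\sa{pEA}^\varepsilon(\ch{N})=\sup\{\log_2 d_M\}$ over admissible integers $d_M$, it then suffices to note that the largest integer $d_M$ obeying this strict inequality is $\lceil\sqrt{\frac{\varepsilon}{1-\varepsilon}2^{I_\Omega(\ch{N})}+1}-1\rceil$, which is exactly the claimed lower bound.

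The substance of the theorem is the matching upper bound on $Q_\sa{pNA}^\varepsilon(\ch{N})$, which I would derive from the postselected hypothesis testing relative entropy. Fix any pNA $(d_M,\varepsilon)$ quantum protocol, described by a probabilistic supermap $\Theta$ with simulated subchannel $\ch{N}_{M\to\g{M}}'=\Theta\{\ch{N}\}$, and introduce a reference $R$ with $d_R=d_M$. I set $\rho_{R\g{M}}:=\ch{N}_{M\to\g{M}}'[\Phi_{RM}]$, the (subnormalised) Choi operator of $\ch{N}_{M\to\g{M}}'$. For each state $\sigma_B$, feed the replacement channel $\ch{R}_{A\to B}^\sigma$ through the same supermap; by Proposition~\ref{prop:pNA}(iii) the output is $\Theta\{\ch{R}^\sigma\}=p\ch{R}^{\sigma'}$ for some $p\in[0,1]$ and state $\sigma_\g{M}'$, and I set $\sigma_{R\g{M}}:=(p\ch{R}^{\sigma'})[\Phi_{RM}]=\frac{p}{d_M}\1_R\otimes\sigma_\g{M}'$.

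Two estimates then sandwich $D_\abb{pH}^\varepsilon(\rho_{R\g{M}}\|\sigma_{R\g{M}})$, for every $\sigma_B$. For the lower one I take the explicit feasible test $P=\Phi_{R\g{M}}$, $Q=\1_{R\g{M}}-\Phi_{R\g{M}}$: evaluating the $(d_M,\varepsilon)$ error condition at the maximally entangled input $\Phi_{RM}$ forces $\alpha_\abb{pH}(\rho;P,Q)\le\varepsilon$, while $\tr[\Phi_{R\g{M}}(\1_R\otimes\sigma_\g{M}')]=1/d_M$ gives $\beta_\abb{pH}(\sigma;P,Q)=1/d_M^2$, whence $D_\abb{pH}^\varepsilon(\rho\|\sigma)\ge 2\log_2 d_M$ independently of $\sigma$. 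For the upper one I would invoke data processing, Eq.~\eqref{eq:data-processing}: the point is that $\rho_{R\g{M}}$ and $\sigma_{R\g{M}}$ are the images of the channel Choi operators $\Phi_{RB}^\ch{N}$ and $\Phi_{RB}^{\ch{R}^\sigma}$ under one and the same positive linear map, namely the induced action of $\Theta$ on Choi operators~\cite{chiribella_2008,gour_2019-2}, so Eq.~\eqref{eq:data-processing} yields $D_\abb{pH}^\varepsilon(\rho\|\sigma)\le D_\abb{pH}^\varepsilon(\Phi^\ch{N}\|\Phi^{\ch{R}^\sigma})$. Combining, $2\log_2 d_M\le D_\abb{pH}^\varepsilon(\Phi^\ch{N}\|\Phi^{\ch{R}^\sigma})$ for every $\sigma_B$; minimising the right-hand side and using Lemma~\ref{lem:pH-projective}, the identity $\Phi_{RB}^{\ch{R}^\sigma}=\frac{1}{d_A}\1_R\otimes\sigma_B$, and the equivalent expression $I_\Omega(\ch{N})=\inf_{\sigma_B}D_\Omega(\Phi_{RB}^\ch{N}\|\frac{1}{d_A}\1_R\otimes\sigma_B)$ from Appendix~\ref{app:equivalent}, I obtain $2\log_2 d_M\le\log_2(\frac{\varepsilon}{1-\varepsilon}2^{I_\Omega(\ch{N})}+1)$. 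Integrality of $d_M$ then gives the floor in the stated upper bound.

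The \emph{main obstacle} is the data-processing step: one must verify that feeding $\ch{N}$ and $\ch{R}^\sigma$ through the \emph{same} probabilistic supermap yields Choi operators related by a single positive map, so that Eq.~\eqref{eq:data-processing} truly applies. I would justify this via the standard fact that a completely positive supermap acts on Choi operators as a link product with its own positive semidefinite Choi operator, which is a positive map; alternatively, one may dilate $\Theta$ into a pre-processing channel, the channel use, and a post-processing subchannel, and read off the induced positive map on $\Phi_{RB}^\ch{N}$. A secondary point worth flagging is that the worst-case error $P_\abb{err}^\abb{(q)}$ is an infimum over all pure inputs, so it automatically bounds the maximally entangled input $\Phi_{RM}$ invoked in the lower estimate; the remaining manipulations are routine.
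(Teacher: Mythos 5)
Your proposal is correct in its overall structure. The lower bound and the middle inequality are exactly the paper's argument (Proposition~\ref{prop:teleportation} combined with Lemma~\ref{lem:delta}, and the inclusion $\sa{pEA}\subset\sa{pNA}$), but your converse takes a genuinely different route. The paper works directly with the Choi formulation of $I_\Omega(\ch{N})$ from Lemma~\ref{lem:projective-mutual-information-choi}: it fixes a feasible triple $(\lambda,\mu,\sigma_B)$ with $\Phi^\ch{N}\leq\lambda\Phi^{\ch{R}^\sigma}$ and $\Phi^{\ch{R}^\sigma}\leq\mu\Phi^\ch{N}$, pushes these operator inequalities through the supermap (using that a probabilistic supermap preserves complete positivity), and then bounds the two trace ratios in $P_\abb{err}^\abb{(q)}$ by hand to get $d_M^2\leq\frac{\varepsilon\lambda\mu}{1-\varepsilon}+1$. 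You instead run the converse through the postselected hypothesis testing relative entropy: the feasible test $(\Phi_{R\g{M}},\1_{R\g{M}}-\Phi_{R\g{M}})$ gives $D_\abb{pH}^\varepsilon(\rho_{R\g{M}}\|\sigma_{R\g{M}})\geq2\log_2 d_M$, and data processing under the positive map that $\Theta$ induces on Choi operators, combined with Lemma~\ref{lem:pH-projective} and Lemma~\ref{lem:projective-mutual-information-choi}, gives the matching upper estimate. Both arguments ultimately rest on the same fact --- the supermap preserves positive semidefiniteness of Choi operators --- but your packaging is essentially the technique the paper itself uses for the \emph{classical} converse in Theorem~\ref{thm:oneshot-classical}, so your proof has the conceptual merit of unifying the quantum and classical converses; the paper's version is more elementary and self-contained, needing neither Lemma~\ref{lem:pH-projective} nor the abstract positivity of the induced Choi-operator map.

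Two points need patching before your argument is airtight. First, the degenerate case $p=0$: if $\Theta\{\ch{R}_{A\to B}^\sigma\}=0$, then $\sigma_{R\g{M}}=0$ and both $\beta_\abb{pH}(\sigma;P,Q)$ and $D_\abb{pH}^\varepsilon(\rho_{R\g{M}}\|\sigma_{R\g{M}})$ are ill-defined ($0/0$); the paper excludes this case explicitly. Within your framework the fix is: the upper bound is vacuous when $I_\Omega(\ch{N})=\infty$, so assume it finite and restrict the final infimum to states $\sigma_B$ with $D_\Omega(\Phi^\ch{N}\|\Phi^{\ch{R}^\sigma})<\infty$; for such $\sigma_B$ there is a finite $\lambda$ with $\Phi^\ch{N}\leq\lambda\Phi^{\ch{R}^\sigma}$, and then $p=0$ together with positivity of the induced Choi-operator map would force $\rho_{R\g{M}}\leq\lambda\sigma_{R\g{M}}=0$, contradicting the assumption that $\Theta$ is a $(d_M,\varepsilon)$ protocol (the conditional error at input $\Phi_{RM}$ would be undefined). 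Second, both $\rho_{R\g{M}}$ and $\sigma_{R\g{M}}$ are subnormalised, whereas Eq.~\eqref{eq:data-processing} and Lemma~\ref{lem:pH-projective} are stated for states; you should say, as the paper does in the proof of Theorem~\ref{thm:oneshot-classical}, that $D_\abb{pH}^\varepsilon$ is extended to subnormalised operators by rescaling. This extension is harmless because the defining ratios and $D_\Omega$ are scale invariant, but it has to be invoked for the data-processing step through a trace-nonincreasing map to make sense.
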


\begin{proof}
Due to the fact $Q_\sa{pEA}^\varepsilon(\ch{N})\leq Q_\sa{pNA}^\varepsilon(\ch{N})$, it suffices to prove the lower bound on $Q_\sa{pEA}^\varepsilon(\ch{N})$ and the upper bound on $Q_\sa{pNA}^\varepsilon(\ch{N})$.  By Proposition~\ref{prop:teleportation} and Lemma~\ref{lem:delta}, there exists a pEA $(d_M,\varepsilon)$ quantum protocol over $\ch{N}_{A\to B}$ if
\begin{align}
	d_M&=\left\lceil\sqrt{\frac{\varepsilon}{1-\varepsilon}2^{I_\Omega(\ch{N})}+1}-1\right\rceil.
\end{align}
The lower bound on $Q_\sa{pEA}^\varepsilon(\ch{N})$ thus follows.

For the upper bound, consider a general pNA $(d_M,\varepsilon)$ quantum protocol $\Theta_{(A\to B)\to(M\to\g{M})}$ over $\ch{N}_{A\to B}$.  Let $\ch{N}_{M\to\g{M}}'\equiv\Theta_{(A\to B)\to(M\to\g{M})}\{\ch{N}_{A\to B}\}$ be the subchannel simulated with the protocol over $\ch{N}_{A\to B}$.  By Lemma~\ref{lem:projective-mutual-information-choi} in Appendix~\ref{app:equivalent}, and inserting Eqs.~\eqref{eq:projective-relative} and \eqref{eq:max-relative-choi} and the first line of Eq.~\eqref{eq:max-relative}, the quantity $I_\Omega(\ch{N})$ can be recast as
\begin{align}
	I_\Omega(\ch{N})&=\inf_{\ch{R}_{A\to B}^\sigma\in\s{R}}D_\Omega(\Phi^\ch{N}\|\Phi^{\ch{R}^\sigma}) \notag\\
	&=\log_2\inf_{\lambda,\mu,\sigma_B}\left\{\lambda\mu\colon\Phi^\ch{N}\leq\lambda\Phi^{\ch{R}^\sigma},\;\Phi^{\ch{R}^\sigma}\leq\mu\Phi^\ch{N}\right\}, \label{pf:oneshot-quantum-1}
\end{align}
where the infimum on the second line is over all real numbers $\lambda$ and $\mu$ and all states $\sigma_B$.  Consider an arbitrary feasible solution $(\lambda,\mu,\sigma_B)$ to the optimisation in Eq.~\eqref{pf:oneshot-quantum-1}, which satisfies
\begin{align}
	\Phi^\ch{N}&\leq\lambda\Phi^{\ch{R}^\sigma}, \label{pf:oneshot-quantum-2}\\
	\Phi^{\ch{R}^\sigma}&\leq\mu\Phi^\ch{N}. \label{pf:oneshot-quantum-3}
\end{align}
By Proposition~\ref{prop:pNA}, the pNA protocol is probabilistically replacement preserving, and thus there exists a state $\sigma_\g{M}'$ and $p\in[0,1]$ such that $\Theta_{(A\to B)\to(M\to\g{M})}\{\ch{R}_{A\to B}^\sigma\}=p\ch{R}_{M\to\g{M}}^{\sigma'}$.  Since the protocol, as a probabilistic supermap, preserves complete positivity, it follows from Eqs.~\eqref{pf:oneshot-quantum-2} and \eqref{pf:oneshot-quantum-3} that
\begin{align}
	\Phi_{R\g{M}}^{\ch{N}'}&\leq\lambda\Phi_{R\g{M}}^{\ch{R}^{\sigma'}}=\frac{\lambda p}{d_M}\1_R\otimes\sigma_\g{M}', \label{pf:oneshot-quantum-5}\\
	\Phi_{R\g{M}}^{\ch{N}'}&\geq\frac{1}{\mu}\Phi_{R\g{M}}^{\ch{R}^{\sigma'}}=\frac{p}{\mu d_M}\1_R\otimes\sigma_\g{M}'. \label{pf:oneshot-quantum-6}
\end{align}
Note that Eq.~\eqref{pf:oneshot-quantum-5} and the assumption of $\Theta_{(A\to B)\to(M\to\g{M})}$ being a $(d_M,\varepsilon)$ protocol over $\ch{N}_{A\to B}$ rule out the possibility of $p=0$.  Then by Eqs.~\eqref{eq:error-quantum} and \eqref{pf:oneshot-quantum-5}, we have that
\begin{align}
	1-\varepsilon&\leq1-P_\abb{err}^\abb{(q)}(\Theta;\ch{N}) \notag\\
	&\leq\frac{\tr\!\left[\Phi_{R\g{M}}\ch{N}_{M\to\g{M}}'\!\left[\Phi_{RM}\right]\right]}{\tr\!\left[\ch{N}_{M\to\g{M}}'\!\left[\Phi_{RM}\right]\right]} \notag\\
	&=\frac{\tr\!\left[\Phi_{R\g{M}}\Phi_{R\g{M}}^{\ch{N}'}\right]}{\tr\!\left[\Phi_{R\g{M}}^{\ch{N}'}\right]} \notag\\
	&\leq\frac{\lambda p\tr\!\left[\Phi_{R\g{M}}\left(\1_R\otimes\sigma_\g{M}'\right)\right]}{d_M\tr\!\left[\Phi_{R\g{M}}^{\ch{N}'}\right]} \notag\\
	&=\frac{\lambda p}{d_M^2\tr\!\left[\Phi_{R\g{M}}^{\ch{N}'}\right]}. \label{pf:oneshot-quantum-7}
\end{align}
On the other hand, by Eqs.~\eqref{eq:error-quantum} and \eqref{pf:oneshot-quantum-6}, we have that
\begin{align}
	\varepsilon&\geq P_\abb{err}^\abb{(q)}(\Theta;\ch{N}) \notag\\
	&\geq1-\frac{\tr\!\left[\Phi_{R\g{M}}\ch{N}_{M\to\g{M}}'\!\left[\Phi_{RM}\right]\right]}{\tr\!\left[\ch{N}_{M\to\g{M}}'\!\left[\Phi_{RM}\right]\right]} \notag\\
	&=\frac{\tr\!\left[\left(\1_{R\g{M}}-\Phi_{R\g{M}}\right)\Phi_{R\g{M}}^{\ch{N}'}\right]}{\tr\!\left[\Phi_{R\g{M}}^{\ch{N}'}\right]} \notag\\
	&\geq\frac{p\tr\!\left[\left(\1_{R\g{M}}-\Phi_{R\g{M}}\right)\left(\1_R\otimes\sigma_\g{M}'\right)\right]}{\mu d_M\tr\!\left[\Phi_{R\g{M}}^{\ch{N}'}\right]} \notag\\
	&=\left(1-\frac{1}{d_M^2}\right)\frac{p}{\mu\tr\!\left[\Phi_{R\g{M}}^{\ch{N}'}\right]}. \label{pf:oneshot-quantum-8}
\end{align}
Combining Eqs.~\eqref{pf:oneshot-quantum-7} and \eqref{pf:oneshot-quantum-8} yields that
\begin{align}
	d_M^2&\leq\frac{\varepsilon\lambda\mu}{1-\varepsilon}+1.
\end{align}
Since this holds for every feasible solution $(\lambda,\mu,\sigma_B)$ to the optimisation in Eq.~\eqref{pf:oneshot-quantum-1}, and since $d_M$ is an integer, it follows from Eq.~\eqref{pf:oneshot-quantum-1} that
\begin{align}
	d_M&\leq\left\lfloor\sqrt{\frac{\varepsilon}{1-\varepsilon}\inf_{\lambda,\mu,\sigma_B}\left\{\lambda\mu\colon\Phi^\ch{N}\leq\lambda\Phi^{\ch{R}^\sigma},\;\Phi^{\ch{R}^\sigma}\leq\mu\Phi^\ch{N}\right\}+1}\right\rfloor \notag\\
	&=\left\lfloor\sqrt{\frac{\varepsilon}{1-\varepsilon}2^{I_\Omega(\ch{N})}+1}\right\rfloor,
\end{align}
which establishes the upper bound.
\end{proof}

\begin{remark}[Tightness of Theorem~\ref{thm:oneshot-quantum}]
The upper and lower bounds in Theorem~\ref{thm:oneshot-quantum} coincide unless $\sqrt{\frac{\varepsilon}{1-\varepsilon}2^{I_\Omega(\ch{N})}+1}$ is an integer.  This suggests that the bounds we derived are the tightest possible analytic bounds in principle and that $Q_\sa{pEA}^\varepsilon(\ch{N})=Q_\sa{pNA}^\varepsilon(\ch{N})$ for almost all channels $\ch{N}_{A\to B}$ and all $\varepsilon\in(0,1)$.
\end{remark}

\begin{remark}[Alternative proof of the lower bound on $Q_\sa{pNA}^\varepsilon$]
In Appendix~\ref{app:achievability}, we provide an alternative proof of the lower bound $Q_\sa{pNA}^\varepsilon(\ch{N})\geq\log_2\!\left\lceil\sqrt{\frac{\varepsilon}{1-\varepsilon}2^{I_\Omega(\ch{N})}+1}-1\right\rceil$, where we construct a pNA protocol that achieves this lower bound without utilising postselected teleportation-based coding.
\end{remark}

\subsection{One-shot pEA \& pNA classical capacities}
\label{sec:oneshot-classical}

When entanglement assistance is available, every quantum protocol that transmits one quantum bit can be converted to a classical protocol that transmits two classical bits.  This is still the case in postselected communication, by applying the standard super-dense coding scheme~\cite{bennett_1992} to the noisy subchannel simulated with the quantum protocol.

\begin{lemma}[Classical communication via quantum protocols]
\label{lem:classical}
Let $\ch{N}_{A\to B}$ be a quantum channel, and let $\varepsilon\in(0,1)$.  If there exists a pNA $(d_M,\varepsilon)$ quantum protocol over $\ch{N}_{A\to B}$, then there exists a pNA $(d_M^2,\varepsilon)$ classical protocol over $\ch{N}_{A\to B}$.  If the quantum protocol is furthermore pEA, then so is the classical protocol.
\end{lemma}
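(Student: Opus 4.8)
The plan is to wrap the given quantum protocol with the standard super-dense coding scheme~\cite{bennett_1992}, treating the simulated subchannel $\ch{N}'_{M\to\g{M}}$ as a noisy, subnormalised quantum channel of dimension $d_M$. Concretely, I would introduce one extra maximally entangled state $\Phi_{\bar A\bar B}$ with $d_{\bar A}=d_{\bar B}=d_M$, shared so that Alice holds $\bar A$ and Bob holds $\bar B$. To transmit a classical message $(a,b)\in[d_M]^2$, Alice applies the Heisenberg--Weyl operator $W_{a,b}$ to $\bar A$ and then feeds $\bar A$ into the quantum protocol in the role of its message system $M$. On Bob's side, he first runs the quantum protocol's decoder to obtain $\g{M}$ together with its conclusive/inconclusive flag, and then performs the complete generalised Bell-basis measurement $\{\Phi_{a',b'}\}_{(a',b')\in[d_M]^2}$ on $\g{M}\bar B$, declaring the outcome $(a',b')$ as his guess. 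Here $\Phi_{a',b'}\equiv(W_{a',b'}\otimes\1)\Phi(W_{a',b'}^\dagger\otimes\1)$, and since these $d_M^2$ states form an orthonormal basis, this is a valid trace-preserving measurement; the postselection flag is inherited entirely from the quantum protocol, so the composite is again a bona fide subchannel $\ch{N}''_{M'\to\g{M}'}$ with $d_{M'}=d_M^2$ (the composition of a subchannel with a channel is a subchannel).

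The key step in the error analysis is the observation that each $\Phi_{a,b}$ is itself a maximally entangled, hence \emph{pure}, state, so it is an admissible input in the worst-case quantum error of the protocol [Eq.~\eqref{eq:error-quantum}]. Taking $R=\bar B$ and $M=\bar A$, the conditional probability that Bob correctly decodes $(a,b)$ equals the fidelity expression
\begin{align}
\frac{\tr\!\left[\Phi_{a,b}\,\ch{N}'\!\left[\Phi_{a,b}\right]\right]}{\tr\!\left[\ch{N}'\!\left[\Phi_{a,b}\right]\right]}\geq1-P_\abb{err}^\abb{(q)}(\Theta;\ch{N})\geq1-\varepsilon,
\end{align}
where $\ch{N}'\!\left[\Phi_{a,b}\right]$ carries the implicit identity on $\bar B$ as usual, and the first inequality is the defining infimum of $P_\abb{err}^\abb{(q)}$ evaluated on the particular pure input $\Phi_{a,b}$. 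Since this holds uniformly over all $(a,b)$, the worst-case conditional error of the classical protocol satisfies $P_\abb{err}^\abb{(c)}(\Theta'';\ch{N})\leq\varepsilon$, so $\Theta''$ is a $(d_M^2,\varepsilon)$ classical protocol. This argument sidesteps any need for the simulated subchannel to be Weyl-covariant: the covariance that usually drives the super-dense coding analysis is here replaced by the fact that the quantum error bound already holds for \emph{every} pure input, including each rotated maximally entangled state.

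It remains to verify membership in the correct protocol class. When the quantum protocol is pEA, given by a triple $(\gamma_{A'B'},\ch{E}_{MA'\to A},\ch{D}_{BB'\to\g{M}})$, the construction is manifestly pEA with shared state $\gamma_{A'B'}\otimes\Phi_{\bar A\bar B}$, encoder $\ch{E}'_{M'A'\bar A\to A}$ (apply $W_{a,b}$ controlled on $M'$, then $\ch{E}$ with $M=\bar A$), and decoder $\ch{D}'_{BB'\bar B\to\g{M}'}$ (apply $\ch{D}$, then the Bell measurement), since all added operations are local to the appropriate party. For the general pNA case I would instead invoke Proposition~\ref{prop:pNA}: if $\ch{N}=\ch{R}_{A\to B}^\sigma$ is a replacement channel, then the pNA quantum protocol yields a scaled replacement subchannel on $\bar A$, which erases all correlation between $\bar A$ and $\bar B$; consequently the state reaching Bob's Bell measurement is independent of $(a,b)$, so $\Theta''\{\ch{R}^\sigma\}$ is a scaled replacement channel from $M'$ to $\g{M}'$, establishing that $\Theta''$ is probabilistically replacement preserving and hence pNA. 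I expect the main obstacle to be precisely this last bookkeeping — confirming that the super-dense coding wrapper preserves the pNA property through the replacement-preserving characterisation — rather than the error estimate, which follows cleanly from the pure-input observation above.
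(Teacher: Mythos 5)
Your proposal is correct and follows essentially the same route as the paper's proof: wrap the simulated subchannel $\ch{N}'_{M\to\g{M}}$ with standard super-dense coding, and observe that each rotated maximally entangled state $\Phi^{\cl{m}}_{MB'}$ is a pure state admissible in the worst-case quantum error of Eq.~\eqref{eq:error-quantum}, so the conditional decoding fidelity for every classical message is at least $1-\varepsilon$. The only difference is that you verify pNA membership explicitly through the replacement-preserving characterisation of Proposition~\ref{prop:pNA}, whereas the paper leaves implicit that composing the pEA super-dense-coding wrapper with a pNA (resp.\ pEA) protocol again yields a pNA (resp.\ pEA) protocol; your extra bookkeeping is sound and, if anything, makes the argument more self-contained.
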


\begin{proof}
Let $\ch{N}_{M\to\g{M}}'$ be the subchannel simulated with the $(d_M,\varepsilon)$ quantum protocol over $\ch{N}_{A\to B}$, where $M$ is a quantum system.  By Eq.~\eqref{eq:error-quantum}, for every pure state $\psi_{RM}$, we have that
\begin{align}
	\frac{\tr\!\left[\psi_{R\g{M}}\ch{N}_{M\to\g{M}}'\!\left[\psi_{RM}\right]\right]}{\tr\!\left[\ch{N}_{M\to\g{M}}'\!\left[\psi_{RM}\right]\right]}&\geq1-\varepsilon. \label{pf:classical-1}
\end{align}
Let $\cl{M}$ be a classical system with $d_\cl{M}=d_M^2$.  To prove Lemma~\ref{lem:classical}, it suffices to construct a pEA $(d_\cl{M},\varepsilon)$ classical protocol over $\ch{N}_{M\to\g{M}}'$, and we do so using super-dense coding~\cite{bennett_1992}.  Specifically, let Alice and Bob share a maximally entangled state $\Phi_{A'B'}$ with $d_{A'}=d_{B'}=d_M$.  Let Alice's encoding operation be the following channel:
\begin{align}
	\ch{E}_{\cl{M}A'\to M}\!\left[\rho_{\cl{M}A'}\right]&:=\sum_{\cl{m}\in[d_M^2]}W_{A'\to M}^\cl{m}\tr_\cl{M}\!\left[\op{\cl{m}}{\cl{m}}_\cl{M}\rho_{\cl{M}A'}\right](W_{A'\to M}^\cl{m})^\dagger\qquad\forall\rho_{\cl{M}A'}, \label{eq:classical-2}
\end{align}
where $W_{A'\to M}^\cl{m}$ is the $\cl{m}$th Heisenberg--Weyl operator for $\cl{m}\in[d_M^2]$ (see Ref.~\cite[Sec.~3.7.2]{wilde_2017} for a definition).  Let Bob's decoding operation be the following channel:
\begin{align}
	\ch{D}_{\g{M}B'\to\g{\cl{M}}}\!\left[\rho_{\g{M}B'}\right]&:=\sum_{\cl{m}\in[d_M^2]}\tr\!\left[\Phi_{\g{M}B'}^\cl{m}\rho_{\g{M}B'}\right]\op{\cl{m}}{\cl{m}}_\g{\cl{M}}\qquad\forall\rho_{\g{M}B'}, \label{eq:classical-3}
\end{align}
where $\Phi_{\g{M}B'}^\cl{m}:=W_\g{M}^\cl{m}\Phi_{\g{M}B'}(W_\g{M}^\cl{m})^\dagger$ denotes the $\cl{m}$th Bell state between $\g{M}$ and $B'$.  Let $\ch{N}_{\cl{M}\to\g{\cl{M}}}''$ be the classical subchannel simulated with the pEA classical protocol represented by the triple $(\Phi_{A'B'},\ch{E}_{\cl{M}A'\to M},\ch{D}_{\g{M}B'\to\g{\cl{M}}})$ over $\ch{N}_{M\to\g{M}}'$, namely,
\begin{align}
	\ch{N}_{\cl{M}\to\g{\cl{M}}}''\!\left[\rho_\cl{M}\right]&:=\left(\ch{D}_{\g{M}B'\to\g{\cl{M}}}\circ\ch{N}_{M\to\g{M}}'\circ\ch{E}_{\cl{M}A'\to M}\right)\!\left[\rho_\cl{M}\otimes\Phi_{A'B'}\right]\qquad\forall\rho_\cl{M}. \label{eq:classical-4}
\end{align}
Inserting Eqs.~\eqref{eq:classical-2} and \eqref{eq:classical-3} to Eq.~\eqref{eq:classical-4}, for all $\cl{m}\in[d_M^2]$, we have that
\begin{align}
	\frac{\tr\!\left[\op{\cl{m}}{\cl{m}}_\g{\cl{M}}\ch{N}_{\cl{M}\to\g{\cl{M}}}''\!\left[\op{\cl{m}}{\cl{m}}_\cl{M}\right]\right]}{\tr\!\left[\ch{N}_{\cl{M}\to\g{\cl{M}}}''\!\left[\op{\cl{m}}{\cl{m}}_\cl{M}\right]\right]}&=\frac{\tr\!\left[\Phi_{\g{M}B'}^\cl{m}\ch{N}_{M\to\g{M}}'\!\left[\Phi_{MB'}^\cl{m}\right]\right]}{\tr\!\left[\ch{N}_{M\to\g{M}}'\!\left[\Phi_{MB'}^\cl{m}\right]\right]}\geq1-\varepsilon. \label{pf:classical-5}
\end{align}
Here the last step of Eq.~\eqref{pf:classical-5} employed Eq.~\eqref{pf:classical-1} while substituting $R=B'$ and $\psi_{RM}=\Phi_{B'M}^\cl{m}$.  By Eq.~\eqref{eq:error-classical}, we complete the construction.
\end{proof}

With Lemma~\ref{lem:classical} in place, we know that the one-shot pEA and pNA classical capacities of a channel are at least twice as large as the corresponding quantum capacities.  Now we show that this factor of two is optimal by proving an upper bound on the one-shot pNA classical capacity.

\begin{theorem}[One-shot pEA \& pNA classical capacities]
\label{thm:oneshot-classical}
Let $\ch{N}_{A\to B}$ be a quantum channel, and let $\varepsilon\in(0,1)$.  Then the one-shot $\varepsilon$-error pEA and pNA classical capacities satisfy
\begin{align}
	\log_2\!\left\lceil\sqrt{\frac{\varepsilon}{1-\varepsilon}2^{I_\Omega(\ch{N})}+1}-1\right\rceil^2&\leq C_\sa{pEA}^\varepsilon(\ch{N})\leq C_\sa{pNA}^\varepsilon(\ch{N})\leq\log_2\!\left\lfloor\frac{\varepsilon}{1-\varepsilon}2^{I_\Omega(\ch{N})}+1\right\rfloor.
\end{align}
\end{theorem}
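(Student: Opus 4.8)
The plan is to prove the two outer inequalities separately, since the middle inequality $C_\sa{pEA}^\varepsilon(\ch{N})\leq C_\sa{pNA}^\varepsilon(\ch{N})$ is immediate from the inclusion $\sa{pEA}\subset\sa{pNA}$. For the lower bound, I would simply chain two results already in hand. Theorem~\ref{thm:oneshot-quantum} furnishes a pEA $(d_M,\varepsilon)$ quantum protocol over $\ch{N}_{A\to B}$ with $d_M=\left\lceil\sqrt{\frac{\varepsilon}{1-\varepsilon}2^{I_\Omega(\ch{N})}+1}-1\right\rceil$, and feeding this protocol into Lemma~\ref{lem:classical} promotes it to a pEA $(d_M^2,\varepsilon)$ classical protocol. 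Hence $C_\sa{pEA}^\varepsilon(\ch{N})\geq\log_2 d_M^2=\log_2\left\lceil\sqrt{\frac{\varepsilon}{1-\varepsilon}2^{I_\Omega(\ch{N})}+1}-1\right\rceil^2$, which is precisely the claimed lower bound.

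For the upper bound I would mirror the quantum argument of Theorem~\ref{thm:oneshot-quantum}, replacing its figure of merit by the classical one of Eq.~\eqref{eq:error-classical}. Start from a general pNA $(d_M,\varepsilon)$ classical protocol with simulated subchannel $\ch{N}_{M\to\g{M}}'$, and fix an arbitrary feasible solution $(\lambda,\mu,\sigma_B)$ to the optimisation for $I_\Omega(\ch{N})$ in Eq.~\eqref{pf:oneshot-quantum-1}. By Proposition~\ref{prop:pNA} there exist a state $\sigma_\g{M}'$ and $p\in[0,1]$ with $\Theta\{\ch{R}^\sigma\}=p\ch{R}^{\sigma'}$, and since the supermap preserves complete positivity, applying it to the sandwich $\Phi^\ch{N}\leq\lambda\Phi^{\ch{R}^\sigma}$, $\Phi^{\ch{R}^\sigma}\leq\mu\Phi^\ch{N}$ yields the two-sided Choi bound $\frac{p}{\mu d_M}\1_R\otimes\sigma_\g{M}'\leq\Phi_{R\g{M}}^{\ch{N}'}\leq\frac{\lambda p}{d_M}\1_R\otimes\sigma_\g{M}'$, exactly as in Eqs.~\eqref{pf:oneshot-quantum-5}--\eqref{pf:oneshot-quantum-6}; the same reasoning rules out $p=0$.

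The crucial new step is to replace the rank-one projector $\Phi_{R\g{M}}$ of the quantum proof by classical success and error operators. Writing $a_m:=\tr[\op{m}{m}_\g{M}\ch{N}'[\op{m}{m}_M]]$ and $b_m:=\tr[\ch{N}'[\op{m}{m}_M]]$, I would set $\Pi:=\sum_{m\in[d_M]}\op{m}{m}_R\otimes\op{m}{m}_\g{M}$ and $\Pi_\perp:=\sum_{m\in[d_M]}\op{m}{m}_R\otimes(\1_\g{M}-\op{m}{m}_\g{M})$, so that $\tr[\Pi\Phi_{R\g{M}}^{\ch{N}'}]=\frac{1}{d_M}\sum_m a_m$ and $\tr[\Pi_\perp\Phi_{R\g{M}}^{\ch{N}'}]=\frac{1}{d_M}\sum_m(b_m-a_m)$. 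Applying the \emph{upper} Choi bound to the success term, together with $\tr[\Pi(\1_R\otimes\sigma_\g{M}')]=\sum_m\bra{m}\sigma_\g{M}'\ket{m}=1$, gives $\sum_m a_m\leq\lambda p$; applying the \emph{lower} Choi bound to the error term, together with $\tr[\Pi_\perp(\1_R\otimes\sigma_\g{M}')]=d_M-1$, gives $\sum_m(b_m-a_m)\geq\frac{p(d_M-1)}{\mu}$. The $\varepsilon$-error constraint $a_m/b_m\geq1-\varepsilon$ rearranges to $b_m-a_m\leq\frac{\varepsilon}{1-\varepsilon}a_m$ for each $m$, whence $\sum_m(b_m-a_m)\leq\frac{\varepsilon}{1-\varepsilon}\sum_m a_m$. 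Chaining these three facts and cancelling $p>0$ yields $\frac{d_M-1}{\mu}\leq\frac{\varepsilon\lambda}{1-\varepsilon}$, i.e.\ $d_M\leq\frac{\varepsilon}{1-\varepsilon}\lambda\mu+1$. Taking the infimum over feasible $(\lambda,\mu,\sigma_B)$ turns $\lambda\mu$ into $2^{I_\Omega(\ch{N})}$, and the integrality of $d_M$ inserts the floor, giving $C_\sa{pNA}^\varepsilon(\ch{N})\leq\log_2\left\lfloor\frac{\varepsilon}{1-\varepsilon}2^{I_\Omega(\ch{N})}+1\right\rfloor$.

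I expect the main obstacle to be the bookkeeping in this success/error-operator step rather than any deep new idea. One must check that conditioning on a conclusive outcome is faithfully captured by $a_m$ and $b_m$ — so that the trace deficit of the subchannel $\ch{N}_{M\to\g{M}}'$ plays the role of the inconclusive probability — and, more importantly, that the \emph{asymmetric} use of the Choi sandwich (the upper bound on $\Pi$ and the lower bound on $\Pi_\perp$) is what produces the tight form $d_M\leq\frac{\varepsilon}{1-\varepsilon}\lambda\mu+1$; a symmetric treatment bounding numerator and denominator of each conditional success probability separately only gives the looser estimate $d_M\leq\frac{\lambda\mu}{1-\varepsilon}$ and loses both the additive $+1$ and the factor $\varepsilon$.
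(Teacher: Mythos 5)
Your proof is correct, and while your lower bound is exactly the paper's argument (super-dense coding via Lemma~\ref{lem:classical} on top of the quantum achievability in Theorem~\ref{thm:oneshot-quantum}), your upper bound takes a genuinely different route from the paper's. The paper does \emph{not} mirror the quantum converse: instead it decomposes the pNA supermap into an encoding channel $\ch{E}_{M\to AE}$ and a decoding subchannel $\ch{D}_{BE\to\g{M}}$, interprets the comparator test $\{\overline{\Pi}_{R\g{M}},\1-\overline{\Pi}_{R\g{M}}\}$ as a feasible strategy for postselected hypothesis testing between the true output $\omega_{R\g{M}}$ and the replacement output $\tau^\sigma_{R\g{M}}$, peels off the decoder with the data-processing inequality for $D_\abb{pH}^\varepsilon$, and finally invokes the closed form of $D_\abb{pH}^\varepsilon$ in terms of $D_\Omega$ (Lemma~\ref{lem:pH-projective}) before optimising to reach $I_\Omega(\ch{N})$. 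Your route instead lifts the Choi-sandwich argument of Theorem~\ref{thm:oneshot-quantum} verbatim --- Proposition~\ref{prop:pNA} plus preservation of complete positivity give $\frac{p}{\mu d_M}\1_R\otimes\sigma'_\g{M}\leq\Phi^{\ch{N}'}_{R\g{M}}\leq\frac{\lambda p}{d_M}\1_R\otimes\sigma'_\g{M}$ --- and tests it asymmetrically against the classical comparator and its complement; your trace computations ($\tr[\Pi(\1_R\otimes\sigma')]=1$, $\tr[\Pi_\perp(\1_R\otimes\sigma')]=d_M-1$) and the per-message rearrangement $b_m-a_m\leq\frac{\varepsilon}{1-\varepsilon}a_m$ all check out, and they deliver exactly $d_M\leq\frac{\varepsilon}{1-\varepsilon}\lambda\mu+1$, hence the same floor bound. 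What your approach buys is economy and unity: it needs neither the supermap decomposition of Chiribella et al., nor $D_\abb{pH}^\varepsilon$ and its data-processing inequality, and it makes the quantum and classical converses two instances of one argument, differing only in the choice of test operator ($\Phi_{R\g{M}}$ versus $\overline{\Pi}_{R\g{M}}$). What the paper's approach buys is adaptability: because it tracks the encoder's actual output state $\rho_{RAE}$ and removes only the decoder by data processing, it can be restricted to smaller classes of encoders --- this is precisely what yields the pRA bound in terms of the projective Holevo quantity $\chi_\Omega(\ch{N})$ mentioned in the conclusion (Eq.~\eqref{eq:pRA}), a refinement your global Choi-state argument cannot see, since it never resolves the internal structure of the protocol. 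Your closing observation about the necessity of the asymmetric use of the sandwich is also correct and is the same asymmetry the paper exploits in its quantum converse.
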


\begin{proof}
Due to the fact $C_\sa{pEA}^\varepsilon(\ch{N})\leq C_\sa{pNA}^\varepsilon(\ch{N})$ (since $\sa{pEA}\subset\sa{pNA}$), it suffices to prove the lower bound on $C_\sa{pEA}^\varepsilon(\ch{N})$ and the upper bound on $C_\sa{pNA}^\varepsilon(\ch{N})$.  By Lemma~\ref{lem:classical}, we can infer $C_\sa{pEA}^\varepsilon(\ch{N})\geq2Q_\sa{pEA}^\varepsilon(\ch{N})$, which gives rise to the lower bound on $C_\sa{pEA}^\varepsilon(\ch{N})$ according to Theorem~\ref{thm:oneshot-quantum}.

For the upper bound, consider a general pNA $(d_M,\varepsilon)$ classical protocol $\Theta_{(A\to B)\to(M\to\g{M})}$ over $\ch{N}_{A\to B}$, where $M$ is a classical system.  As a probabilistic supermap, $\Theta_{(A\to B)\to(M\to\g{M})}$ can be decomposed into a preprocessing channel $\ch{E}_{M\to AE}$ and a postprocessing subchannel $\ch{D}_{BE\to\g{M}}$, where $E$ is a side memory system~\cite[Theorem~2]{chiribella_2008} (also see Ref.~\cite[Sec.~5]{burniston_2020}).  Let $\ch{N}_{M\to\g{M}}'$ be the subchannel simulated with the protocol over $\ch{N}_{A\to B}$, so that
\begin{align}
	\ch{N}_{M\to\g{M}}'\!\left[\rho_M\right]&\equiv\left(\ch{D}_{BE\to\g{M}}\circ\ch{N}_{A\to B}\circ\ch{E}_{M\to AE}\right)\!\left[\rho_M\right]\qquad\forall\rho_M. \label{pf:oneshot-classical-0}
\end{align}
By Eq.~\eqref{eq:error-classical}, for all $m\in[d_M]$, we have that
\begin{align}
	\frac{\tr\!\left[\op{m}{m}_\g{M}\ch{N}_{M\to\g{M}}'\!\left[\op{m}{m}_M\right]\right]}{\tr\!\left[\ch{N}_{M\to\g{M}}'\!\left[\op{m}{m}_M\right]\right]}&\geq1-\varepsilon,
\end{align}
which through slight manipulation implies that
\begin{align}
	\frac{\sum_{m\in[d_M]}\tr\!\left[\op{m}{m}_\g{M}\ch{N}_{M\to\g{M}}'\!\left[\op{m}{m}_M\right]\right]}{\sum_{m\in[d_M]}\tr\!\left[\ch{N}_{M\to\g{M}}'\!\left[\op{m}{m}_M\right]\right]}&\geq1-\varepsilon. \label{pf:oneshot-classical-1}
\end{align}
Let $R$ be a classical reference system with $d_R=d_M$.  Let $\overline{\Phi}_{RM}\equiv\frac{1}{d_M}\sum_{m\in[d_M]}\op{m}{m}_R\otimes\op{m}{m}_M$ denote the standard maximally classically correlated state between $R$ and $M$, and let $\overline{\Pi}_{R\g{M}}\equiv\sum_{m\in[d_M]}\op{m}{m}_R\otimes\op{m}{m}_\g{M}$ denote the classical comparator projection.  Denote $\rho_{RAE}\equiv\ch{E}_{M\to AE}[\overline{\Phi}_{RM}]$ and
\begin{align}
	\omega_{R\g{M}}&\equiv\left(\ch{D}_{BE\to\g{M}}\circ\ch{N}_{A\to B}\right)\!\left[\rho_{RAE}\right] \notag\\
	&=\ch{N}_{M\to\g{M}}'\!\left[\overline{\Phi}_{RM}\right]. \label{pf:oneshot-classical-2}
\end{align}
Here the last line of Eq.~\eqref{pf:oneshot-classical-2} used Eq.~\eqref{pf:oneshot-classical-0}.  It follows from Eq.~\eqref{pf:oneshot-classical-1} that
\begin{align}
	\frac{\tr\!\left[\left(\1_{RM}-\overline{\Pi}_{R\g{M}}\right)\omega_{R\g{M}}\right]}{\tr\!\left[\omega_{R\g{M}}\right]}&=1-\frac{\tr\!\left[\overline{\Pi}_{R\g{M}}\ch{N}_{M\to\g{M}}'\!\left[\overline{\Phi}_{RM}\right]\right]}{\tr\!\left[\ch{N}_{M\to\g{M}}'\!\left[\overline{\Phi}_{RM}\right]\right]} \notag\\
	&=1-\frac{\sum_{m\in[d_M]}\tr\!\left[\op{m}{m}_\g{M}\ch{N}_{M\to\g{M}}'\!\left[\op{m}{m}_M\right]\right]}{\sum_{m\in[d_M]}\tr\!\left[\ch{N}_{M\to\g{M}}'\!\left[\op{m}{m}_M\right]\right]} \notag\\
	&\leq\varepsilon. \label{pf:oneshot-classical-3}
\end{align}
Let $\ch{R}_{A\to B}^\sigma$ be an arbitrary replacement channel, where $\sigma_B$ is a state.  By Proposition~\ref{prop:pNA}, the pNA protocol is probabilistically replacement preserving, and thus there exists a state $\sigma_\g{M}'$ and $p\in[0,1]$ such that $\Theta_{(A\to B)\to(M\to\g{M})}\{\ch{R}_{A\to B}^\sigma\}=p\ch{R}_{M\to\g{M}}^{\sigma'}$.  Denote
\begin{align}
	\tau_{R\g{M}}^\sigma&\equiv\left(\ch{D}_{BE\to\g{M}}\circ\ch{R}_{A\to B}^\sigma\right)\!\left[\rho_{RAE}\right] \label{pf:oneshot-classical-4} \notag\\
	&=p\ch{R}_{M\to\g{M}}^{\sigma'}\!\left[\overline{\Phi}_{RM}\right] \notag\\
	&=\frac{p}{d_M}\1_R\otimes\sigma_{\g{M}}',
\end{align}
where we used $d_R=d_M$.  Then it follows that
\begin{align}
	\frac{\tr\!\left[\overline{\Pi}_{R\g{M}}\tau_{R\g{M}}^\sigma\right]}{\tr\!\left[\tau_{R\g{M}}^\sigma\right]}&=\frac{\tr\!\left[\overline{\Pi}_{R\g{M}}\left(\1_R\otimes\sigma_{\g{M}}'\right)\right]}{\tr\!\left[\1_R\otimes\sigma_{\g{M}}'\right]}=\frac{1}{d_M}. \label{pf:oneshot-classical-5}
\end{align}
Since the postselected hypothesis testing relative entropy in Eq.~\eqref{eq:pH-relative} can be defined on subnormalised states as well by rescaling, we have that
\begin{align}
	D_\abb{pH}^\varepsilon(\omega\|\tau^\sigma)&:=D_\abb{pH}^\varepsilon(\omega/\tr\!\left[\omega\right]\|\tau^\sigma/\tr\!\left[\tau^\sigma\right]) \notag\\
	&\hphantom{:}=-\log_2\inf_{P,Q\geq0}\left\{\frac{\tr\!\left[P\tau^\sigma\right]}{\tr\!\left[\left(P+Q\right)\tau^\sigma\right]}\colon\frac{\tr\!\left[Q\omega\right]}{\tr\!\left[\left(P+Q\right)\omega\right]}\leq\varepsilon,\;P+Q\leq\1\right\}. \label{pf:oneshot-classical-6}
\end{align}
By Eq.~\eqref{pf:oneshot-classical-3},
\begin{align}
	\begin{cases}
		P_{R\g{M}}=\overline{\Pi}_{R\g{M}}, \\
		Q_{R\g{M}}=\1_{R\g{M}}-\overline{\Pi}_{R\g{M}}
	\end{cases}
\end{align}
is a feasible solution to the optimisation in Eq.~\eqref{pf:oneshot-classical-6}.  Therefore, by Eq.~\eqref{pf:oneshot-classical-5},
\begin{align}
	D_\abb{pH}^\varepsilon(\omega\|\tau^\sigma)&\geq-\log_2\frac{\tr\!\left[\overline{\Pi}_{R\g{M}}\tau_{R\g{M}}^\sigma\right]}{\tr\!\left[\tau_{R\g{M}}^\sigma\right]}=\log_2d_M.
\end{align}
Employing the data-processing inequality in Eq.~\eqref{eq:data-processing}, and by Eqs.~\eqref{pf:oneshot-classical-2} and \eqref{pf:oneshot-classical-4}, it follows that
\begin{align}
	\log_2d_M&\leq D_\abb{pH}^\varepsilon\big(\left(\ch{D}_{BE\to\g{M}}\circ\ch{N}_{A\to B}\right)\!\left[\rho_{RAE}\right]\big\|\left(\ch{D}_{BE\to\g{M}}\circ\ch{R}_{A\to B}^\sigma\right)\!\left[\rho_{RAE}\right]\big) \notag\\
	&\leq D_\abb{pH}^\varepsilon(\ch{N}_{A\to B}\!\left[\rho_{RAE}\right]\|\ch{R}_{A\to B}^\sigma\!\left[\rho_{RAE}\right]) \notag\\
	&=D_\abb{pH}^\varepsilon(\ch{N}_{A\to B}\!\left[\rho_{REA}\right]\|\rho_{RE}\otimes\sigma_B). \label{pf:oneshot-classical-7}
\end{align}
Since Eq.~\eqref{pf:oneshot-classical-7} holds for all states $\sigma_B$, we have that
\begin{align}
	\log_2d_M&\leq\inf_{\sigma_B}D_\abb{pH}^\varepsilon(\ch{N}_{A\to B}\!\left[\rho_{REA}\right]\|\rho_{RE}\otimes\sigma_B) \notag\\
	&\leq\sup_{\rho_{R'A}}\inf_{\sigma_B}D_\abb{pH}^\varepsilon(\ch{N}_{A\to B}\!\left[\rho_{R'A}\right]\|\rho_{R'}\otimes\sigma_B) \notag\\
	&=\log_2\!\left(\frac{\varepsilon}{1-\varepsilon}2^{\sup_{\rho_{R'A}}\inf_{\sigma_B}D_\Omega(\ch{N}_{A\to B}\!\left[\rho_{R'A}\right]\|\rho_{R'}\otimes\sigma_B)}+1\right) \notag\\
	&=\log_2\!\left(\frac{\varepsilon}{1-\varepsilon}2^{I_\Omega(\ch{N})}+1\right). \label{pf:oneshot-classical-9}
\end{align}
Here the third line of Eq.~\eqref{pf:oneshot-classical-9} employed Lemma~\ref{lem:pH-projective}, and the last line follows from Eqs.~\eqref{eq:projective-mutual} and \eqref{eq:projective-mutual-channel}.  Since $d_M$ is an integer, the upper bound follows.
\end{proof}

\begin{remark}[Tightness of Theorem~\ref{thm:oneshot-classical}]
The upper and lower bounds in Theorem~\ref{thm:oneshot-classical} coincide if $k^2<\frac{\varepsilon}{1-\varepsilon}2^{I_\Omega(\ch{N})}+1<k^2+1$ for some integer $k$.
\end{remark}

\begin{remark}[Alternative derivation of the upper bound on $Q_\sa{pNA}^\varepsilon$]
The upper bound in Theorem~\ref{thm:oneshot-quantum} can be recovered from the upper bound in Theorem~\ref{thm:oneshot-classical}.  Specifically, by Lemma~\ref{lem:classical}, we have $C_\sa{pNA}^\varepsilon(\ch{N})\geq2Q_\sa{pNA}^\varepsilon(\ch{N})$, which implies $Q_\sa{pNA}^\varepsilon(\ch{N})\leq\log_2\!\sqrt{\big\lfloor\frac{\varepsilon}{1-\varepsilon}2^{I_\Omega(\ch{N})}+1\big\rfloor}$ by Theorem~\ref{thm:oneshot-classical}.  Since $2^{Q_\sa{pEA}^\varepsilon(\ch{N})}$ is an integer by definition, we necessarily have $Q_\sa{pNA}^\varepsilon(\ch{N})\leq\log_2\!\left\lfloor\sqrt{\frac{\varepsilon}{1-\varepsilon}2^{I_\Omega(\ch{N})}+1}\right\rfloor$, due to the property $\lfloor\sqrt{r}\rfloor\leq\sqrt{\lfloor r\rfloor}<\lfloor\sqrt{r}\rfloor+1$ for all $r\geq0$.
\end{remark}

\begin{remark}[Alternative approach to an upper bound on $C_\sa{pNA}^\varepsilon$]
The upper bound in Theorem~\ref{thm:oneshot-quantum} also implies the following upper bound on $C_\sa{pNA}^\varepsilon(\ch{N})$ by applying the standard teleportation scheme~\cite{bennett_1992} to a simulated classical subchannel:
\begin{align}
    C_\sa{pNA}^\varepsilon(\ch{N})&\leq\log_2\!\left(\left\lfloor\sqrt{\frac{\varepsilon}{1-\varepsilon}2^{I_\Omega(\ch{N})}+1}+1\right\rfloor^2-1\right).
\end{align}
However, this upper bound is slightly looser than the one provided in Theorem~\ref{thm:oneshot-classical}, despite their asymptotic equivalence.
\end{remark}

To conclude this section, we informally summarise Theorems~\ref{thm:oneshot-quantum} and \ref{thm:oneshot-classical} in one line by disregarding the floor and ceiling operations as follows:
\begin{align}
	C_\sa{pEA}^\varepsilon(\ch{N})&\approx C_\sa{pNA}^\varepsilon(\ch{N})\approx2Q_\sa{pEA}^\varepsilon(\ch{N})\approx2Q_\sa{pNA}^\varepsilon(\ch{N})\approx\log_2\!\left(\frac{\varepsilon}{1-\varepsilon}2^{I_\Omega(\ch{N})}+1\right).
\end{align}

\section{\texorpdfstring{Asymptotic \lowercase{p}EA \& \lowercase{p}NA capacities}{Asymptotic pEA \& pNA capacities}}
\label{sec:asymptotic}

In this section, we derive the pEA and pNA capacities and their strong converses of a channel in the asymptotic regime.  As it turns out, both the capacities and their strong converses have a single-letter characterisation in terms of the channel's projective mutual information.  This result is in form reminiscent of Shannon's noisy-channel coding theorem in the classical setting~\cite{shannon_1948} and the entanglement-assisted capacity theorem in the conventional quantum setting without postselection~\cite{bennett_2002, holevo_2002}.  The simplicity of the result is largely due to the fact that the projective mutual information is additive, as stated in the following lemma.

\begin{lemma}[Additivity of $I_\Omega$]
\label{lem:additivity}
Let $\ch{N}_{A\to B}$ and $\ch{N}_{A'\to B'}'$ be two quantum channels.  Then
\begin{align}
	I_\Omega(\ch{N}\otimes\ch{N}')&=I_\Omega(\ch{N})+I_\Omega(\ch{N}').
\end{align}
\end{lemma}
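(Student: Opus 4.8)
The plan is to prove the two inequalities $I_\Omega(\ch{N}\otimes\ch{N}')\leq I_\Omega(\ch{N})+I_\Omega(\ch{N}')$ and $I_\Omega(\ch{N}\otimes\ch{N}')\geq I_\Omega(\ch{N})+I_\Omega(\ch{N}')$ separately, exploiting the fact that both the primal SDP of Eq.~\eqref{eq:sdp_main} and the dual SDP of Eq.~\eqref{eq:projective-mutual-channel-dual} have multiplicative objectives and tensor-stable constraints. The key structural observation is that the Choi state factorises,
\begin{align}
	\Phi^{\ch{N}\otimes\ch{N}'}_{RR'BB'}&=\Phi^\ch{N}_{RB}\otimes\Phi^{\ch{N}'}_{R'B'},
\end{align}
which holds because $\Phi_{RR'AA'}=\Phi_{RA}\otimes\Phi_{R'A'}$ and $\ch{N}\otimes\ch{N}'$ acts separately on the two factors. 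I would use the primal formulation to obtain ``$\leq$'' and the dual formulation to obtain ``$\geq$,'' each by simply tensoring a pair of feasible solutions.

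For subadditivity I would take near-optimal feasible pairs $(\xi_1,S_B)$ for $\ch{N}$ and $(\xi_2,S_{B'})$ for $\ch{N}'$, obeying $\Phi^\ch{N}_{RB}\leq\1_R\otimes S_B\leq\xi_1\Phi^\ch{N}_{RB}$ and the analogous chain for the primed data, and propose $S_{BB'}:=S_B\otimes S_{B'}$ with value $\xi_1\xi_2$. The crucial point is that tensoring two inequalities between positive semidefinite operators preserves the ordering: if $0\leq X\leq Y$ and $0\leq X'\leq Y'$, then $X\otimes X'\leq Y\otimes Y'$, since $Y\otimes Y'-X\otimes X'=(Y-X)\otimes Y'+X\otimes(Y'-X')\geq0$. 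Applying this at both ends of the two chains and invoking the factorisation of the Choi state shows that $(\xi_1\xi_2,S_B\otimes S_{B'})$ is feasible for $\ch{N}\otimes\ch{N}'$; taking logarithms and infima yields $I_\Omega(\ch{N}\otimes\ch{N}')\leq I_\Omega(\ch{N})+I_\Omega(\ch{N}')$.

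For superadditivity I would instead tensor feasible dual solutions. Given $(P_{RB},Q_{RB})$ with $P_B=Q_B$ and $(P'_{R'B'},Q'_{R'B'})$ with $P'_{B'}=Q'_{B'}$, I set $P_{RR'BB'}:=P_{RB}\otimes P'_{R'B'}$ and $Q_{RR'BB'}:=Q_{RB}\otimes Q'_{R'B'}$. The marginal constraint tensorises, $P_{BB'}=P_B\otimes P'_{B'}=Q_B\otimes Q'_{B'}=Q_{BB'}$, so the pair is dual-feasible for $\ch{N}\otimes\ch{N}'$; and because $\Phi^{\ch{N}\otimes\ch{N}'}$ factorises, the objective ratio multiplies, $\tr[(P\otimes P')\Phi^{\ch{N}\otimes\ch{N}'}]/\tr[(Q\otimes Q')\Phi^{\ch{N}\otimes\ch{N}'}]=(\tr[P\Phi^\ch{N}]/\tr[Q\Phi^\ch{N}])\cdot(\tr[P'\Phi^{\ch{N}'}]/\tr[Q'\Phi^{\ch{N}'}])$. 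Optimising over near-optimal dual pairs and taking logarithms gives $I_\Omega(\ch{N}\otimes\ch{N}')\geq I_\Omega(\ch{N})+I_\Omega(\ch{N}')$, completing the argument. I do not anticipate a serious obstacle: the entire content is that the SDPs for $I_\Omega$ have an objective that multiplies and constraints closed under tensor products. The only points requiring mild care are verifying the positive-operator tensor-ordering lemma used in the primal direction and confirming that the partial-trace (marginal) constraint behaves multiplicatively in the dual direction; working throughout with near-optimal feasible solutions rather than optimisers also sidesteps any question of attainment.
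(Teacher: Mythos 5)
Your proof is correct and follows essentially the same route as the paper: superadditivity by tensoring feasible pairs of the dual formulation (Lemma~\ref{lem:projective-mutual-information-dual}), exactly as the paper does, and subadditivity by a product ansatz in the primal. The only cosmetic difference is in the subadditivity half, where the paper restricts the Choi-formulation infimum of Lemma~\ref{lem:projective-mutual-information-choi} to product replacement channels and invokes additivity of $D_{\max}$, whereas you tensor feasible pairs of the primal SDP in Eq.~\eqref{eq:sdp_main} and verify the operator inequality $X\otimes X'\leq Y\otimes Y'$ by hand --- the same argument in different packaging, since that inequality is precisely what underlies the additivity of $D_{\max}$.
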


\begin{proof}
Since the set $\s{R}$ of replacement channels is closed under tensor products, it follows from Lemma~\ref{lem:projective-mutual-information-choi} and the additivity of the max-relative entropy (and thus that of the Hilbert projective metric) that
\begin{align}
	I_\Omega(\ch{N}\otimes\ch{N}')&=\inf_{\ch{R}_{AA'\to BB'}^\sigma\in\s{R}}D_\Omega(\Phi^{\ch{N}\otimes\ch{N}'}\|\Phi^{\ch{R}^\sigma}) \notag\\
	&\leq\inf_{\ch{R}_{A\to B}^\sigma,\ch{R}_{A'\to B'}^{\sigma'}\in\s{R}}D_\Omega(\Phi^{\ch{N}\otimes\ch{N}'}\|\Phi^{\ch{R}^\sigma\otimes\ch{R}^{\sigma'}}) \notag\\
	&=\inf_{\ch{R}_{A\to B}^\sigma,\ch{R}_{A'\to B'}^{\sigma'}\in\s{R}}D_\Omega(\Phi^\ch{N}\otimes\Phi^{\ch{N}'}\|\Phi^{\ch{R}^\sigma}\otimes\Phi^{\ch{R}^{\sigma'}}) \notag\\
	&=\inf_{\ch{R}_{A\to B}^\sigma\in\s{R}}D_\Omega(\Phi^\ch{N}\|\Phi^{\ch{R}^\sigma})+\inf_{\ch{R}_{A'\to B'}^{\sigma'}\in\s{R}}D_\Omega(\Phi^{\ch{N}'}\|\Phi^{\ch{R}^{\sigma'}}) \notag\\
	&=I_\Omega(\ch{N})+I_\Omega(\ch{N}').
\end{align}

For the other direction, it follows from Eq.~\eqref{eq:projective-mutual-channel-dual} that
\begin{align}
	I_\Omega(\ch{N}\otimes\ch{N}')&=\log_2\sup_{P_{RR'BB'},Q_{RR'BB'}\geq0}\left\{\frac{\tr\!\left[P_{RR'BB'}\Phi_{RR'BB'}^{\ch{N}\otimes\ch{N}'}\right]}{\tr\!\left[Q_{RR'BB'}\Phi_{RR'BB'}^{\ch{N}\otimes\ch{N}'}\right]}\colon P_{BB'}=Q_{BB'}\right\} \notag\\
	&\geq\log_2\sup_{P_{RB},P_{R'B'}',Q_{RB},Q_{R'B'}'\geq0}\left\{\begin{array}[c]{c}
		\dfrac{\tr\!\left[\left(P_{RB}\otimes P_{R'B'}'\right)\Phi_{RR'BB'}^{\ch{N}\otimes\ch{N}'}\right]}{\tr\!\left[\left(Q_{RB}\otimes Q_{R'B'}'\right)\Phi_{RR'BB'}^{\ch{N}\otimes\ch{N}'}\right]}\colon \vspace{0.5em}\\
		P_B=Q_B,\;P_{B'}'=Q_{B'}'
	\end{array}\right\} \notag\\
	&=\log_2\sup_{P_{RB},Q_{RB}\geq0}\left\{\frac{\tr\!\left[P_{RB}\Phi_{RB}^\ch{N}\right]}{\tr\!\left[Q_{RB}\Phi_{RB}^\ch{N}\right]}\colon P_B=Q_B\right\} \notag\\
	&\qquad+\log_2\sup_{P_{R'B'}',Q_{R'B'}'\geq0}\left\{\frac{\tr\!\left[P_{R'B'}'\Phi_{R'B'}^{\ch{N}'}\right]}{\tr\!\left[Q_{R'B'}'\Phi_{R'B'}^{\ch{N}'}\right]}\colon P_{B'}'=Q_{B'}'\right\} \notag\\
	&=I_\Omega(\ch{N})+I_\Omega(\ch{N}').
\end{align}
\end{proof}

\begin{theorem}[Asymptotic pEA \& pNA capacities]
\label{thm:asymptotic}
Let $\ch{N}_{A\to B}$ be a quantum channel.  Then the asymptotic pEA and pNA capacities and their strong converses of $\ch{N}_{A\to B}$ are all captured by its projective mutual information:
\begin{align}
	I_\Omega(\ch{N})&=C_\sa{pEA}(\ch{N}) =C_\sa{pNA}(\ch{N})=2Q_\sa{pEA}(\ch{N})=2Q_\sa{pNA}(\ch{N}) \notag\\
	&=\stc{C}_\sa{pEA}(\ch{N})=\stc{C}_\sa{pNA}(\ch{N})=2\stc{Q}_\sa{pEA}(\ch{N})=2\stc{Q}_\sa{pNA}(\ch{N}).
\end{align}
\end{theorem}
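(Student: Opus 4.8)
The plan is to derive the theorem by feeding the one-shot bounds of Theorems~\ref{thm:oneshot-quantum} and \ref{thm:oneshot-classical}, evaluated on the channel $\ch{N}^{\otimes n}$, into the asymptotic definitions of Definition~\ref{def:asymptotic-capacities}, and then to squeeze all eight quantities between $I_\Omega(\ch{N})$ (classical) or $\tfrac{1}{2}I_\Omega(\ch{N})$ (quantum) from above and below. The one essential structural input is the additivity of Lemma~\ref{lem:additivity}, which gives $I_\Omega(\ch{N}^{\otimes n})=n\,I_\Omega(\ch{N})$; writing $t:=I_\Omega(\ch{N})$, every one-shot bound then becomes a function of $nt$, $\varepsilon$, and $n$ alone, and the whole argument reduces to an elementary asymptotic estimate. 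The case $t=0$ is immediate from the converse below, so I treat $t>0$ when estimating.

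First I would establish the classical converse. Starting from the upper bound in Theorem~\ref{thm:oneshot-classical} and discarding the floor, which only lowers the right-hand side,
\begin{align}
	\frac{1}{n}C_\sa{pNA}^\varepsilon(\ch{N}^{\otimes n})\leq\frac{1}{n}\log_2\!\left(\frac{\varepsilon}{1-\varepsilon}2^{nt}+1\right).
\end{align}
Since the bracket is dominated by $2^{nt}$, the right-hand side tends to $t$ as $n\to\infty$, so $\limsup_{n\to\infty}\frac{1}{n}C_\sa{pNA}^\varepsilon(\ch{N}^{\otimes n})\leq t$ for every $\varepsilon\in(0,1)$, and taking the supremum over $\varepsilon$ gives $\stc{C}_\sa{pNA}(\ch{N})\leq t$. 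The decisive point is that $\varepsilon$ enters only through the additive constant $\log_2\frac{\varepsilon}{1-\varepsilon}$, which vanishes once divided by $n$; this is exactly why the weak and strong converse rates must coincide.

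Next I would prove classical achievability. From the lower bound in Theorem~\ref{thm:oneshot-classical}, writing $x_n:=\sqrt{\frac{\varepsilon}{1-\varepsilon}2^{nt}+1}$, I have $C_\sa{pEA}^\varepsilon(\ch{N}^{\otimes n})\geq 2\log_2\lceil x_n-1\rceil\geq 2\log_2(x_n-1)$. For $n$ large enough that $x_n\geq 2$ I would use $x_n-1\geq x_n/2$ to get $2\log_2(x_n-1)\geq\log_2 x_n^2-2=\log_2\!\left(\frac{\varepsilon}{1-\varepsilon}2^{nt}+1\right)-2$, which upon dividing by $n$ again tends to $t$. Hence $\liminf_{n\to\infty}\frac{1}{n}C_\sa{pEA}^\varepsilon(\ch{N}^{\otimes n})\geq t$ for every $\varepsilon$, and the infimum over $\varepsilon$ yields $C_\sa{pEA}(\ch{N})\geq t$. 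Chaining this with $C_\sa{pEA}\leq C_\sa{pNA}\leq\stc{C}_\sa{pNA}$ and $C_\sa{pEA}\leq\stc{C}_\sa{pEA}\leq\stc{C}_\sa{pNA}$ (which follow from $\sa{pEA}\subset\sa{pNA}$ and $\liminf\leq\limsup$ together with $\inf_\varepsilon\leq\sup_\varepsilon$), the converse $\stc{C}_\sa{pNA}(\ch{N})\leq t$ pins all four classical quantities to $I_\Omega(\ch{N})$.

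The quantum case is identical up to the factor $1/2$: the upper bound of Theorem~\ref{thm:oneshot-quantum} gives $\frac{1}{n}Q_\sa{pNA}^\varepsilon(\ch{N}^{\otimes n})\leq\frac{1}{2n}\log_2(\frac{\varepsilon}{1-\varepsilon}2^{nt}+1)\to t/2$, while its lower bound gives $\frac{1}{n}Q_\sa{pEA}^\varepsilon(\ch{N}^{\otimes n})\geq\frac{1}{n}\log_2(x_n-1)\to t/2$ by the same manipulation, and the same chaining pins all four quantum quantities to $\tfrac{1}{2}I_\Omega(\ch{N})$. I expect no genuine obstacle here once Lemma~\ref{lem:additivity} and the one-shot bounds are granted: the entire content is an asymptotic estimate, and the only point demanding mild care is verifying that the floor, the ceiling, and the $\pm1$ and $+1$ corrections are all $o(n)$ under $\frac{1}{n}\log_2(\cdot)$, so that the nearly matching one-shot bounds force the weak capacity, strong converse capacity, and every intermediate $\varepsilon$-dependent rate to agree.
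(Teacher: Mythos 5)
Your proposal is correct and follows essentially the same route as the paper's proof: both feed the one-shot bounds of Theorems~\ref{thm:oneshot-quantum} and \ref{thm:oneshot-classical} into Definition~\ref{def:asymptotic-capacities}, use the additivity of Lemma~\ref{lem:additivity} to reduce everything to $I_\Omega(\ch{N}^{\otimes n})=nI_\Omega(\ch{N})$, and observe that the $\varepsilon$-dependent corrections are $O(1)$ and hence vanish after dividing by $n$, forcing weak and strong converse rates to coincide. Your estimate $x_n-1\geq x_n/2$ for large $n$ plays the same role as the paper's inequality $\lceil\sqrt{r+1}-1\rceil\geq\frac{1}{2}\sqrt{r}$, so the two arguments differ only in packaging.
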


\begin{proof}
By Theorem~\ref{thm:oneshot-classical}, and employing the property $\lceil\sqrt{r+1}-1\rceil\geq\frac{1}{2}\sqrt{r}$ for all $r\geq0$, we have that
\begin{align}
	C_\sa{pNA}^\varepsilon(\ch{N})\geq C_\sa{pEA}^\varepsilon(\ch{N})&\geq\log_2\!\left\lceil\sqrt{\frac{\varepsilon}{1-\varepsilon}2^{I_\Omega(\ch{N})}+1}-1\right\rceil^2 \notag\\
	&\geq\log_2\frac{\varepsilon}{4\left(1-\varepsilon\right)}+I_\Omega(\ch{N}).
\end{align}
Also by Theorem~\ref{thm:oneshot-classical}, and employing the fact $I_\Omega(\ch{N})\geq0$, we have that
\begin{align}
	C_\sa{pEA}^\varepsilon(\ch{N})\leq C_\sa{pNA}^\varepsilon(\ch{N})&\leq\log_2\!\left(\frac{\varepsilon}{1-\varepsilon}2^{I_\Omega(\ch{N})}+1\right) \notag\\
	&\leq\log_2\frac{1}{1-\varepsilon}+I_\Omega(\ch{N}).
\end{align}
Then Lemma~\ref{lem:additivity} implies that
\begin{align}
	\frac{1}{n}\log_2\frac{\varepsilon}{4\left(1-\varepsilon\right)}+I_\Omega(\ch{N})&\leq\frac{1}{n}C_\sa{pEA}^\varepsilon(\ch{N}^{\otimes n}) \notag\\
	&\leq\frac{1}{n}C_\sa{pNA}^\varepsilon(\ch{N}^{\otimes n})\leq\frac{1}{n}\log_2\frac{1}{1-\varepsilon}+I_\Omega(\ch{N}). \label{pf:asymptotic-1}
\end{align}
Since the limit as $n\to\infty$ of both sides of the equation exists and equals $I_\Omega(\ch{N})$  independently of $\varepsilon$, we have $C_\sa{pEA}(\ch{N})=\stc{C}_\sa{pEA}(\ch{N})=C_\sa{pNA}(\ch{N})=\stc{C}_\sa{pNA}(\ch{N})=I_\Omega(\ch{N})$.  Likewise, by Theorem~\ref{thm:oneshot-quantum}, we have that
\begin{align}
	\frac{1}{2}\log_2\frac{\varepsilon}{4\left(1-\varepsilon\right)}+\frac{1}{2}I_\Omega(\ch{N})&\leq Q_\sa{pEA}^\varepsilon(\ch{N}) \notag\\
	&\leq Q_\sa{pNA}^\varepsilon(\ch{N})\leq\frac{1}{2}\log_2\frac{1}{1-\varepsilon}+\frac{1}{2}I_\Omega(\ch{N}).
\end{align}
Then Lemma~\ref{lem:additivity} implies that
\begin{align}
	\frac{1}{2n}\log_2\frac{\varepsilon}{4\left(1-\varepsilon\right)}+\frac{1}{2}I_\Omega(\ch{N})&\leq\frac{1}{n}Q_\sa{pEA}^\varepsilon(\ch{N}^{\otimes n}) \notag\\
	&\leq\frac{1}{n}Q_\sa{pNA}^\varepsilon(\ch{N}^{\otimes n})\leq\frac{1}{2n}\log_2\frac{1}{1-\varepsilon}+\frac{1}{2}I_\Omega(\ch{N}), \label{pf:asymptotic-2}
\end{align}
and thus $Q_\sa{pEA}(\ch{N})=\stc{Q}_\sa{pEA}(\ch{N})=Q_\sa{pNA}(\ch{N})=\stc{Q}_\sa{pNA}(\ch{N})=\frac{1}{2}I_\Omega(\ch{N})$.
\end{proof}

\begin{remark}[No lower-order asymptotics for pEA \& pNA communication]
In the proof of Theorem~\ref{thm:asymptotic}, Eqs.~\eqref{pf:asymptotic-1} and \eqref{pf:asymptotic-2} imply that the optimal rates feature only the capacity term ($I_\Omega(\ch{N})$ or $\frac{1}{2}I_\Omega(\ch{N})$) and another $O\big(\frac{1}{n}\big)$ term, which echoes the similar findings of Ref.~\citen{regula_2022-4} for postselected hypothesis testing. This contrasts with the situation in  conventional entanglement-assisted communication, for which there is generally a nontrivial second-order $O\big(\frac{1}{\sqrt{n}}\big)$ term~\cite{Datta_2016}, and more generally so in other communication tasks (see Ref.~\citen{khatri_2021} and references therein).
\end{remark}

\begin{remark}[Feedback assistance provides no advantage in pEA \& pNA communication]
Feedback assistance refers to the extra resource of a noiseless channel from Bob (the receiver) to Alice (the sender) used to assist communication tasks from Alice to Bob~\cite{bowen_2004} (also see Ref.~\cite[Fig.~21.4]{wilde_2017} for an illustration).  In the multi-shot regime, such assistance enables `adaptive' protocols between different uses of the given channel $\ch{N}_{A\to B}$, and these protocols can be more general than parallel protocols (from which the asymptotic capacities are defined).  However, in pEA and pNA communication, we observe that every adaptive protocol based on feedback assistance can be simulated by a parallel protocol.  This is because Alice and Bob can always employ postselected teleportation from Bob to Alice to establish a postselected closed timelike curve, making use of shared entanglement and Bob's postselection, and such a curve simulates a noiseless feedback channel.  As a result, the pEA and pNA capacities remain the same even if feedback assistance is incorporated, and this holds not only asymptotically but also in the nonasymptotic regime.  This finding has some analogies in both the classical setting~\cite{shannon_1956} and the conventional quantum setting without postselection~\cite{bowen_2004,bowen_2005,ding_2015,cooney_2016}.
\end{remark}

\section{Conclusion}
\label{sec:conclusion}

In this paper, we introduced and studied the task of postselected communication over quantum channels.  In such a task, the receiver's decoding operation has an additional option of being inconclusive about the message being transmitted, and the pertaining error probability of message transmission is defined to be conditioned only on a conclusive outcome.  From an operational perspective, conditioning the error probability on a conclusive outcome amounts to allowing the receiver to perform postselection on his output system, a possibility not incorporated in the conventional theory of communication.  

After establishing the general framework and the concepts needed to study postselected communication, we completely characterised two specific scenarios.  The first scenario is postselected entanglement-assisted communication, wherein the sender and receiver are allowed to share an unbounded amount of entanglement to assist their communication over a given quantum channel.  The second scenario is postselected nonsignalling-assisted communication, wherein the sender and receiver's communication can be assisted by nonsignalling correlations.  In Theorem~\ref{thm:oneshot-quantum}, we derived the one-shot pEA and pNA quantum capacities of a given channel $\ch{N}_{A\to B}$, showing that both capacities are approximately characterised by an analytical expression in terms of the conditional error probability $\varepsilon$ and the channel's projective mutual information $I_\Omega(\ch{N})$.  In particular, we showed that these capacities can be achieved by a newly proposed pEA communication scheme (Protocol~\ref{prot:teleportation}) inspired by probabilistic teleportation.  We also established provably tight upper and lower bounds on the one-shot pEA and pNA classical capacities of $\ch{N}_{A\to B}$ in Theorem~\ref{thm:oneshot-classical}.  Informally, our results in the one-shot regime can be encapsulated as
\begin{align}
	C_\sa{pEA}^\varepsilon(\ch{N})&\approx C_\sa{pNA}^\varepsilon(\ch{N})\approx2Q_\sa{pEA}^\varepsilon(\ch{N})\approx2Q_\sa{pNA}^\varepsilon(\ch{N})\approx\log_2\!\left(\frac{\varepsilon}{1-\varepsilon}2^{I_\Omega(\ch{N})}+1\right).
\end{align}
In Theorem~\ref{thm:asymptotic}, we calculated the pEA and pNA capacities (both classical and quantum) and their strong converses in the asymptotic regime.  In either pEA or pNA communication, the quantum capacity and its strong converse of $\ch{N}_{A\to B}$ are both given by $\frac{1}{2}I_\Omega(\ch{N})$, and the classical capacity and its strong converse are twice as large, i.e., equal to $I_\Omega(\ch{N})$.  The simplicity of our results parallels that of Shannon's celebrated channel coding theorem for classical channels and the entanglement-assisted capacity theorem for quantum channels without postselection.

There are many other interesting problems to explore within our framework of postselected communication, and we leave them for future work.  First of all, it would be of interest to derive the postselected capacities of a channel in more restricted scenarios, e.g., the postselected randomness-assisted (pRA) scenario and the postselected unassisted (pU) scenario.  For pRA communication, our proof technique in Theorem~\ref{thm:oneshot-classical} can be suitably adapted to provide an upper bound on the one-shot pRA classical capacity,
\begin{align}
	C_\sa{pRA}^\varepsilon(\ch{N})&\leq\log_2\!\left(\frac{\varepsilon}{1-\varepsilon}2^{\chi_\Omega(\ch{N})}+1\right), \label{eq:pRA}
\end{align}
where $\chi_\Omega(\ch{N})$ is the projective variant of the Holevo quantity, defined as
\begin{align}
	\chi_\Omega(\ch{N})&:=\sup_{\rho_{XA}}I_\Omega(X;B)_{\ch{N}_{A\to B}\!\left[\rho_{XA}\right]},
\end{align}
with the supremum over all classical systems $X$ and states $\rho_{XA}$.  The postselected teleportation-based coding scheme (Protocol~\ref{prot:teleportation}) can also be modified to become a pRA classical protocol, by restricting the systems $M$, $A_1'$, $B_1'$, and $\g{M}$ to be classical.  However, we were not able to determine whether this modified protocol saturates the upper bound in Eq.~\eqref{eq:pRA}.

Going even further, one could also conceive a more general theory of postselected communication inspired by the one in this paper.  Specifically, in our framework, no penalty is imposed on the receiver if his decoding operation makes no conclusion on the message being transmitted; incorporating such a penalty would make the theory more attached to practical circumstances, albeit probably more complicated technically. The tradeoff between the error probability conditioned on a conclusive outcome and the conclusive probability would be an interesting topic to explore, and indeed variants of such questions have already been considered before~\cite{forney_1968,fiurasek_2003, rudolph_2003, croke_2006, herzog_2005, herzog_2009}.

\section*{Acknowledgements}
\label{sec:acknowledgements}

We are indebted to Ludovico Lami for several insightful discussions. We also thank Vincent Tan for helpful discussions about Refs.~\citen{forney_1968,merhav_2008,tan_2014,hayashi_2015}. KJ and MMW acknowledge support from the School of Electrical and Computer Engineering at Cornell University, and MMW further acknowledges support from the National Science Foundation under Grant No.~1907615.


\appendix

\section*{Appendices}

\renewcommand\thesection{\Alph{section}}%

\section{Equivalent formulations of the projective mutual information of a channel}
\label{app:equivalent}

\begin{lemma}[Choi formulation of $I_\Omega$]
\label{lem:projective-mutual-information-choi}
Let $\ch{N}_{A\to B}$ be a quantum channel.  Then
\begin{align}
	I_\Omega(\ch{N})&=\inf_{\ch{R}^\sigma_{A\to B}\in\s{R}}D_\Omega(\Phi^\ch{N}\|\Phi^{\ch{R}^\sigma}),
\end{align}
where the infimum is over all replacement channels $\ch{R}^\sigma_{A\to B}$.  This can be explicitly expressed as the optimal value of a semidefinite program:
\begin{align}\label{eq:sdp}
	I_\Omega(\ch{N})&=\log_2\inf_{\substack{\xi\in\spa{R}, \\ S_B\geq0}}\left\{\xi \colon\Phi^{\ch{N}}\leq\1_R\otimes S_B\leq\xi\Phi^{\ch{N}}\right\},
\end{align}
where the infimum exists whenever it is finite.
\end{lemma}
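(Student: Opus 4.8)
The plan is to prove the two claims in sequence: first the reduction of $I_\Omega(\ch{N})$ to an optimisation over replacement channels evaluated on the Choi state, and then the reformulation of that optimisation as the stated semidefinite program. The first part is where the real work lies; the second is essentially a change of variables. Throughout I would use that $D_\Omega$ satisfies the data-processing inequality under arbitrary positive linear maps, as established in \cite[Proposition~3]{regula_2022-4} (and already invoked for $D_\abb{pH}^\varepsilon$ in the main text).

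For the Choi reduction, I would start from $I_\Omega(\ch{N})=\sup_{\rho_{RA}}\inf_{\sigma_B}D_\Omega(\ch{N}_{A\to B}[\rho_{RA}]\|\rho_R\otimes\sigma_B)$ and argue that the supremum is attained on pure inputs with $d_R=d_A$. First I would purify: given any $\rho_{RA}$, let $\psi_{\tilde{R}A}$ with $\tilde{R}=R'R$ be a purification, and apply data processing for $D_\Omega$ under the partial trace $\tr_{R'}$, using that $\tr_{R'}$ commutes with $\ch{N}_{A\to B}$ and that $\tr_{R'}(\psi_{\tilde{R}}\otimes\sigma_B)=\rho_R\otimes\sigma_B$. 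This shows the pure input $\psi_{\tilde{R}A}$ gives at least the value of $\rho_{RA}$ for each $\sigma_B$, so the supremum may be restricted to pure states.

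For a pure state $\psi_{RA}$ with Schmidt decomposition $\sum_i\sqrt{p_i}\ket{i}_R\ket{i}_A$ (restricting $R$ to the Schmidt support, so $d_R\le d_A$, then padding to $d_A$), I would use the transpose trick to write $\psi_{RA}=d_A(\sqrt{p}_R\otimes\1_A)\Phi_{RA}(\sqrt{p}_R\otimes\1_A)$ with $\sqrt{p}_R=\sum_i\sqrt{p_i}\op{i}{i}_R$. Applying the channel gives $\ch{N}_{A\to B}[\psi_{RA}]=\ch{G}(\Phi^\ch{N})$ and $\psi_R\otimes\sigma_B=\ch{G}(\Phi^{\ch{R}^\sigma})$, where $\ch{G}(X):=d_A(\sqrt{p}_R\otimes\1_B)X(\sqrt{p}_R\otimes\1_B)$ is a positive linear map and $\Phi^{\ch{R}^\sigma}=\frac{1}{d_A}\1_R\otimes\sigma_B$. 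Data processing for $D_\Omega$ under $\ch{G}$ then yields $D_\Omega(\ch{N}_{A\to B}[\psi_{RA}]\|\psi_R\otimes\sigma_B)\le D_\Omega(\Phi^\ch{N}\|\Phi^{\ch{R}^\sigma})$ for every $\sigma_B$; taking the infimum over $\sigma_B$ bounds every pure input by $\inf_{\ch{R}^\sigma\in\s{R}}D_\Omega(\Phi^\ch{N}\|\Phi^{\ch{R}^\sigma})$. Since the input $\Phi_{RA}$ attains this value (its reduced state is $\frac{1}{d_A}\1_R$, so $\rho_R\otimes\sigma_B=\Phi^{\ch{R}^\sigma}$), the first identity follows. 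For the SDP, I would expand $D_\Omega(\Phi^\ch{N}\|\Phi^{\ch{R}^\sigma})=D_{\max}(\Phi^\ch{N}\|\Phi^{\ch{R}^\sigma})+D_{\max}(\Phi^{\ch{R}^\sigma}\|\Phi^\ch{N})$ and use the operator-inequality form of $D_{\max}$ to write the objective as $\log_2\inf\{\lambda\mu:\Phi^\ch{N}\le\frac{\lambda}{d_A}\1_R\otimes\sigma_B,\ \frac{1}{d_A}\1_R\otimes\sigma_B\le\mu\Phi^\ch{N}\}$ over states $\sigma_B$ and reals $\lambda,\mu>0$. Substituting $S_B:=\frac{\lambda}{d_A}\sigma_B\ge0$ and $\xi:=\lambda\mu$ collapses the two constraints into the single sandwich $\Phi^\ch{N}\le\1_R\otimes S_B\le\xi\Phi^\ch{N}$; this is a bijection on feasible points (recover $\sigma_B=S_B/\tr[S_B]$ and $\lambda=d_A\tr[S_B]$, with $S_B\neq0$ forced since $\Phi^\ch{N}\neq0$), and the normalisation of $\sigma_B$ is absorbed into the scale of $S_B$, yielding exactly Eq.~\eqref{eq:sdp}. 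Attainment when finite follows from compactness: near-optimal feasible $(\xi,S_B)$ confine $\1_R\otimes S_B$ between $\Phi^\ch{N}$ and $\xi\Phi^\ch{N}$, a bounded set, so a minimiser exists.

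The main obstacle is the first part, namely establishing that the maximally entangled input is optimal. The two delicate points are that the transpose-trick map $\ch{G}$ need only be positive (not trace preserving) for the $D_\Omega$ data-processing inequality to apply, and that rank-deficient Schmidt spectra are handled automatically: the inequality we use is the contraction direction of data processing, so $\ch{G}$ need not be invertible, and no separate continuity argument is required.
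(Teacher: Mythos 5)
Your proof is correct, but it reaches the Choi reduction by a genuinely different route from the paper. The paper proves $I_\Omega(\ch{N})=\inf_{\ch{R}^\sigma_{A\to B}\in\s{R}}D_\Omega(\Phi^\ch{N}\|\Phi^{\ch{R}^\sigma})$ by a minimax sandwich: the lower bound is the trivial restriction of the supremum to $\rho_{RA}=\Phi_{RA}$, while the upper bound follows by exchanging $\sup_{\rho}\inf_{\sigma}\leq\inf_{\sigma}\sup_{\rho}$, splitting the supremum of the sum of the two $D_{\max}$ terms into two independent suprema, and invoking the known fact that the max-relative entropy between channels is achieved on the maximally entangled input, Eq.~\eqref{eq:max-relative-choi} (Ref.~\cite[Lemma~12]{wilde_2020}). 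You instead establish optimality of $\Phi_{RA}$ from first principles: purification plus data processing under $\tr_{R'}$ reduces the supremum to pure inputs, and the transpose trick exhibits the pair $(\ch{N}_{A\to B}[\psi_{RA}],\psi_R\otimes\sigma_B)$ as the image of $(\Phi^\ch{N},\Phi^{\ch{R}^\sigma})$ under a single positive map acting on $R$ alone, so data processing for $D_\Omega$ applies with $\sigma_B$ held fixed throughout. Your route is self-contained (it effectively inlines the proof of the cited channel-$D_{\max}$ achievability lemma, which is itself proved via purification and the transpose trick) and avoids both the minimax exchange and the decoupling of the two $D_{\max}$ terms, yielding the stronger pointwise bound $D_\Omega(\ch{N}_{A\to B}[\rho_{RA}]\|\rho_R\otimes\sigma_B)\leq D_\Omega(\Phi^\ch{N}\|\frac{1}{d_A}\1_R\otimes\sigma_B)$ for every input and every $\sigma_B$; what the paper's route buys is brevity, by using the known lemma as a black box. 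One cosmetic imprecision in your write-up: the Schmidt bases of $\psi_{RA}$ need not be computational bases, so the sandwiching operator should be a general $M_R$ with singular values $\sqrt{p_i}$, satisfying $\psi_{RA}=d_A(M_R\otimes\1_A)\Phi_{RA}(M_R^\dagger\otimes\1_A)$, rather than the positive operator $\sqrt{p}_R$ itself; nothing else changes, since $\ch{G}(X)=d_A(M_R\otimes\1_B)X(M_R^\dagger\otimes\1_B)$ is still positive, acts only on $R$, and satisfies $\ch{G}(\Phi^{\ch{R}^\sigma})=M_RM_R^\dagger\otimes\sigma_B=\psi_R\otimes\sigma_B$. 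Your SDP reformulation is the same change of variables as the paper's ($S_B=\frac{\lambda}{d_A}\sigma_B$, $\xi=\lambda\mu$), and your direct compactness argument for attainment of the infimum substitutes for the paper's citation of Ref.~\cite[Theorem~1(iii)]{regula_2022-2}; both are valid.
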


\begin{proof}
By definition, it holds that
\begin{align}
	I_\Omega(\ch{N})&=\sup_{\rho_{RA}} \inf_{\sigma_B}D_\Omega(\ch{N}_{A\to B}\!\left[\rho_{RA}\right]\|\rho_R\otimes\sigma_B).
\end{align}
But since $\rho_R\otimes\sigma_B=R^\sigma_{A\to B} (\rho_{RA})$ for every $\rho_{RA}$, we can write this as
\begin{align}
	I_\Omega(\ch{N})&=\sup_{\rho_{RA}}\inf_{\ch{R}^\sigma_{A\to B}\in\s{R}} D_\Omega(\ch{N}_{A\to B}\!\left[\rho_{RA}\right]\|\ch{R}^\sigma_{A\to B}\!\left[\rho_{RA}\right]).
\end{align}
Consider then that
\begin{align}
	\inf_{\ch{R}^\sigma_{A\to B}\in\s{R}}D_{\Omega}(\Phi^\ch{N}\|\Phi^{\ch{R}^\sigma})&\leq\sup_{\rho_{RA}}\inf_{\ch{R}^\sigma_{A\to B}\in\s{R}}D_\Omega(\ch{N}_{A\to B}\!\left[\rho_{RA}\right]\|\ch{R}^\sigma_{A\to B}\!\left[\rho_{RA}\right]) \notag\\
	&=\sup_{\rho_{RA}}\inf_{\ch{R}^\sigma_{A\to B}\in\s{R}}\left(D_{\max}(\ch{N}_{A\to B}\!\left[\rho_{RA}\right]\|\ch{R}^\sigma_{A\to B}\!\left[\rho_{RA}\right])\right. \notag\\
	&\qquad\left.+D_{\max}(\ch{R}^\sigma_{A\to B}\!\left[\rho_{RA}\right]\|\ch{N}_{A\to B}\!\left[\rho_{RA}\right])\right) \notag\\
	&\leq\inf_{\ch{R}^\sigma_{A\to B}\in\s{R}}\left(\sup_{\rho_{RA}'} D_{\max}\big(\ch{N}_{A\to B}\!\left[\rho_{RA}'\right]\|\ch{R}^\sigma_{A\to B}\!\left[\rho_{RA}'\right]\big)\right. \notag\\
	&\qquad\left.+\sup_{\rho_{RA}''} D_{\max}\big(\ch{R}^\sigma_{A\to B}\!\left[\rho_{RA}''\right]\|\ch{N}_{A\to B}\!\left[\rho_{RA}''\right]\big)\right) \notag\\
	&=\inf_{\ch{R}^\sigma_{A\to B}\in\s{R}}\left(D_{\max}(\Phi^\ch{N}\|\Phi^{\ch{R}^\sigma})+D_{\max}(\Phi^{\ch{R}^\sigma}\|\Phi^\ch{N})\right) \notag\\
	&=\inf_{\ch{R}^\sigma_{A\to B}\in\s{R}} D_{\Omega}(\Phi^\ch{N}\|\Phi^{\ch{R}^\sigma}),
\end{align}
where in the second-to-last line we used the fact that the supremum over input states in the definition of $D_{\max}$ between any two channels is always achieved on the maximally entangled state~\cite[Lemma~12]{wilde_2020}.

Now, using that
\begin{align}
	\left\{\Phi^{\ch{R}^\sigma}\colon\ch{R}^\sigma_{A\to B}\in\s{R}\right\}&=\left\{\frac{1}{d_R}\1_R\otimes\sigma_B\colon\sigma_B\geq0,\;\tr\!\left[\sigma\right]=1\right\},
\end{align}
from the definition of $D_{\max}$ [Eq.~\eqref{eq:max-relative}] we obtain
\begin{align}
	I_\Omega(\ch{N})&=\log_2\inf_{\substack{\lambda,\mu\in\spa{R}, \\ \sigma_B\geq0}}\left\{\lambda\mu\colon\Phi^\ch{N}\leq\frac{\lambda}{d_R}\1_R\otimes\sigma_B\leq\lambda\mu\Phi^\ch{N},\;\tr\!\left[\sigma_B\right]=1\right\} \notag\\
	&=\log_2\inf_{\substack{\lambda,\mu\in\spa{R}, \\ S_B\geq0}}\left\{\lambda\mu\colon\Phi^\ch{N}\leq\1_R \otimes S_B\leq\lambda\mu\Phi^\ch{N}\right\},
\end{align}
where in the second line we noticed that any unnormalised $S_B$ can be rescaled as $\sigma_B=S_B/\tr[S_B]$ yielding a feasible state with $\lambda'=d_R\tr[S_B]$ and $\mu'=\lambda\mu/(d_R\tr[S_B])$, and this does not affect the optimal value since $\lambda'\mu'=\lambda\mu$.  The closedness of the positive semidefinite cone ensures that, whenever the program has any feasible solution and is therefore bounded, an optimal solution must exist (see Ref.~\cite[Theorem~1(iii)]{regula_2022-2}).
\end{proof}

\begin{lemma}[Dual formulation of $I_\Omega$]
\label{lem:projective-mutual-information-dual}
Let $\ch{N}_{A\to B}$ be a quantum channel.  Then
\begin{align}
	I_\Omega(\ch{N})&=\log_2\sup_{P_{RB},Q_{RB}\geq0}\left\{\frac{\tr\!\left[P_{RB}\Phi_{RB}^\ch{N}\right]}{\tr\!\left[Q_{RB}\Phi_{RB}^\ch{N}\right]}\colon P_B=Q_B\right\},
\end{align}
where the supremum is over all positive semidefinite operators $P_{RB}$ and $Q_{RB}$.
\end{lemma}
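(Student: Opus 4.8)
The plan is to read the claimed identity as the statement that the semidefinite program of Lemma~\ref{lem:projective-mutual-information-choi} and its Lagrangian dual have equal optimal values. Writing
\begin{align}
\label{eq:pp-primal}
	2^{I_\Omega(\ch{N})}&=\inf_{\xi\in\spa{R},\,S_B\geq0}\left\{\xi\colon\Phi^\ch{N}\leq\1_R\otimes S_B\leq\xi\Phi^\ch{N}\right\},
\end{align}
and abbreviating the supremum in the lemma by $V:=\sup_{P_{RB},Q_{RB}\geq0}\{\tr[P_{RB}\Phi^\ch{N}]/\tr[Q_{RB}\Phi^\ch{N}]\colon P_B=Q_B\}$, I would establish $V\leq2^{I_\Omega(\ch{N})}$ by an elementary weak-duality argument and the reverse inequality by invoking strong duality.

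For $V\leq2^{I_\Omega(\ch{N})}$, fix any feasible $(\xi,S_B)$ for \eqref{eq:pp-primal} and any $P_{RB},Q_{RB}\geq0$ with $P_B=Q_B$. Sandwiching $\1_R\otimes S_B$ between $\Phi^\ch{N}$ and $\xi\Phi^\ch{N}$ and testing against the positive operators $P_{RB}$ and $Q_{RB}$ gives $\tr[P_{RB}\Phi^\ch{N}]\leq\tr[P_{RB}(\1_R\otimes S_B)]=\tr[P_BS_B]$ and $\tr[Q_BS_B]=\tr[Q_{RB}(\1_R\otimes S_B)]\leq\xi\tr[Q_{RB}\Phi^\ch{N}]$. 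Using $P_B=Q_B$ to identify the two middle quantities yields $\tr[P_{RB}\Phi^\ch{N}]\leq\xi\tr[Q_{RB}\Phi^\ch{N}]$, so the ratio is at most $\xi$; optimising over both pairs gives $V\leq2^{I_\Omega(\ch{N})}$.

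For the reverse inequality I would compute the dual of \eqref{eq:pp-primal} explicitly. Introducing multipliers $X_{RB},Y_{RB}\geq0$ for the constraints $\1_R\otimes S_B-\Phi^\ch{N}\geq0$ and $\xi\Phi^\ch{N}-\1_R\otimes S_B\geq0$, the stationarity in $\xi$ forces $\tr[Y_{RB}\Phi^\ch{N}]=1$ and the minimisation over $S_B\geq0$ forces $X_B\leq Y_B$ (with infimum attained at $S_B=0$), leaving the dual program $\max_{X,Y\geq0}\{\tr[X_{RB}\Phi^\ch{N}]\colon\tr[Y_{RB}\Phi^\ch{N}]=1,\,X_B\leq Y_B\}$. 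This matches $V$ after two reductions: from a feasible $(P,Q)$ for $V$ one obtains a feasible dual point by rescaling $X=P/\tr[Q\Phi^\ch{N}]$, $Y=Q/\tr[Q\Phi^\ch{N}]$ (so that $X_B=Y_B$); conversely, from a feasible $(X,Y)$ one sets $Q=Y$ and pads $P=X+\frac{1}{d_R}\1_R\otimes(Y_B-X_B)$, which stays positive, upgrades $X_B\leq Y_B$ to $P_B=Q_B$, and only increases $\tr[P\Phi^\ch{N}]$ since $\Phi^\ch{N}\geq0$. Hence the dual value equals $V$.

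The crux is then strong duality, and the main obstacle is that $\Phi^\ch{N}$ may be rank-deficient, so \eqref{eq:pp-primal} need not be strictly feasible. I would sidestep this by verifying Slater's condition on the dual instead: since $\tr[\Phi^\ch{N}]=1$, the choice $Y=\1_{RB}$ and $X=\varepsilon\1_{RB}$ with $\varepsilon\in(0,1)$ is strictly feasible, because $X,Y\succ0$, $\tr[Y\Phi^\ch{N}]=1$, and $Y_B-X_B=d_R(1-\varepsilon)\1_B\succ0$. Strict feasibility of the dual guarantees a zero duality gap together with attainment of the primal optimum in \eqref{eq:pp-primal}; the same argument also covers the degenerate case $I_\Omega(\ch{N})=\infty$, in which the primal is infeasible and the dual is correspondingly unbounded, giving $V=\infty$. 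In the finite case the dual value therefore equals $2^{I_\Omega(\ch{N})}$, which combined with $V\leq2^{I_\Omega(\ch{N})}$ and the identification of the dual with $V$ yields $V=2^{I_\Omega(\ch{N})}$; taking $\log_2$ proves the lemma.
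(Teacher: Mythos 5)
Your proposal is correct and follows essentially the same route as the paper's own proof: form the Lagrangian dual of the Choi-state SDP, rescale to put it in ratio form, upgrade the constraint $P_B\leq Q_B$ to $P_B=Q_B$ by padding $P_{RB}\mapsto P_{RB}+\tfrac{1}{d_R}\1_R\otimes(Q_B-P_B)$, and close the gap via Slater's condition on the dual side. If anything, your write-up is slightly more careful than the paper's, since your strictly feasible point $(X,Y)=(\varepsilon\1_{RB},\1_{RB})$ with $\varepsilon<1$ satisfies the inequality constraint strictly (the paper's choice $P=Q=\1_{RB}$ does not), and you explicitly handle the case $I_\Omega(\ch{N})=\infty$.
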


\begin{proof}
Taking the dual of the semidefinite program in Eq.~\eqref{eq:sdp} gives the optimisation problem
\begin{align}
	\dual{I}_\Omega(\ch{N})&=\log_2\sup_{P_{RB},Q_{RB}\geq0}\left\{\tr\!\left[P_{RB}\Phi_{RB}^\ch{N}\right]\colon\tr\!\left[Q_{RB}\Phi_{RB}^\ch{N}\right]=1,\;P_B\leq Q_B\right\}.
\end{align}
Since any feasible solution $(P_{RB},Q_{RB})$ can be rescaled as $Q_{RB}\mapsto Q_{RB}/\tr[Q_{RB}\Phi_{RB}^\ch{N}]$ and $P_{RB}\mapsto P_{RB}/\tr[Q_{RB}\Phi_{RB}^\ch{N}]$, without affecting the optimal value, we can write the above as
\begin{align}
	\dual{I}_\Omega(\ch{N})&=\log_2\sup_{P_{RB},Q_{RB}\geq0}\left\{\frac{\tr\!\left[P_{RB}\Phi_{RB}^\ch{N}\right]}{\tr\!\left[Q_{RB}\Phi_{RB}^\ch{N}\right]}\colon P_B\leq Q_B\right\}.
\end{align}
Now, for every feasible $(P_{RB},Q_{RB})$, we can define a feasible solution $(P_{RB}',Q_{RB}')$ with $P_{RB}':=P_{RB}+\1_R\otimes(Q_B-P_B)/d_R$ and $Q_{RB}':=Q_{RB}$ whose optimal value cannot be smaller than the original one. Noticing that $P_B'=Q'_B$, we can thus write
\begin{align}
	\dual{I}_\Omega(\ch{N})&=\log_2\sup_{P_{RB},Q_{RB}\geq0}\left\{\frac{\tr\!\left[P_{RB}\Phi_{RB}^\ch{N}\right]}{\tr\!\left[Q_{RB}\Phi_{RB}^\ch{N}\right]}\colon P_B= Q_B\right\}.
\end{align}
Since $P_{RB}=Q_{RB}=\1_{RB}$ is strictly feasible for the above, strong duality holds and we have $I_\Omega(\ch{N})=\dual{I}_\Omega(\ch{N})$.
\end{proof}

\section{Proof of Proposition~\ref{prop:pNA}}
\label{app:pNA}

{
\newcounter{tempprop}
\setcounter{tempprop}{\value{proposition}}
\setcounter{proposition}{40}
\renewcommand{\theproposition}{\ref{prop:pNA}}
\begin{proposition}[Restatement]
Let $\Theta_{(A\to B)\to(M\to\g{M})}$ be a probabilistic supermap, which can be thought of as a bipartite subchannel $\bc{\Theta}_{MB\to A\g{M}}$ between Alice and Bob (see Fig.~\ref{fig:bipartite}).  Then the following statements are equivalent.
\begin{itemize}
	\item[\textnormal{(i)}] There exists a pNA protocol $\Lambda_{(A\to B)\to(M\to\g{M})}$ and $c\geq0$ such that 
    \begin{align}
        \Theta_{(A\to B)\to(M\to\g{M})}&=c\Lambda_{(A\to B)\to(M\to\g{M})}.
    \end{align}
	\item[\textnormal{(ii)}] $\bc{\Theta}_{MB\to A\g{M}}$ is Alice-to-Bob nonsignalling in the sense of Eq.~\eqref{eq:NA} (see Fig.~\ref{fig:pNA}).
	\item[\textnormal{(iii)}] For every replacement channel $\ch{R}_{A\to B}^\sigma$, there exists a state $\sigma_\g{M}'$ and $p\in[0,1]$ such that
    \begin{align}
        \Theta_{(A\to B)\to(M\to\g{M})}\left\{\ch{R}_{A\to B}^\sigma\right\}&=p\ch{R}_{M\to\g{M}}^{\sigma'}.
    \end{align}
\end{itemize}
\end{proposition}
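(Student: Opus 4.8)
The plan is to route all three implications through condition (ii), using as the sole computational input the effect of filling the channel slot of $\bc{\Theta}$ with a replacement channel. For any state $\sigma_B$ one has
\begin{align}
\Theta_{(A\to B)\to(M\to\g{M})}\!\left\{\ch{R}_{A\to B}^\sigma\right\}\!\left[\rho_M\right]=\tr_A\!\left[\bc{\Theta}_{MB\to A\g{M}}\!\left[\rho_M\otimes\sigma_B\right]\right],
\end{align}
because a replacement channel discards Alice's output $A$ (yielding the partial trace) and feeds $\sigma_B$ into Bob's input $B$ (cf.\ Fig.~\ref{fig:bipartite}). Granting this, the equivalence (ii)$\Leftrightarrow$(iii) is essentially a restatement. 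For (ii)$\Rightarrow$(iii), the right-hand side is a subnormalised state $\eta_\g{M}$ independent of $\rho_M$; writing $\eta_\g{M}=p\sigma_\g{M}'$ with $p:=\tr[\eta_\g{M}]\in[0,1]$ and extending by linearity gives $\Theta\{\ch{R}^\sigma\}=p\ch{R}^{\sigma'}$. Conversely, (iii) says exactly that this right-hand side equals $p\,\sigma_\g{M}'$ regardless of $\rho_M$, which is the Alice-to-Bob-nonsignalling condition of Eq.~\eqref{eq:NA}.

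The implication (i)$\Rightarrow$(ii) is direct. If $\Theta=c\Lambda$ with $\Lambda=\ch{D}_{\g{M}X\to\g{M}}\circ\Xi$ a pNA protocol, then $\bc{\Theta}=c\,\ch{D}\circ\bc{\Xi}$, where $\bc{\Xi}$ is Alice-to-Bob nonsignalling because $\Xi$ is a deterministic NA protocol. Since $\ch{D}$ acts only on Bob's systems it commutes with $\tr_A$, so scaling by $c$ and postselecting both preserve Alice-to-Bob nonsignalling, and $\bc{\Theta}$ inherits the property.

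The substance of the proof is (ii)$\Rightarrow$(i), which I would handle by dilating the Alice-to-Bob-nonsignalling subchannel $\bc{\Theta}$ to a genuine deterministic NA protocol carrying a binary conclusiveness flag $X$, and then recovering $\Theta$ by postselecting the conclusive branch. Fixing a sufficiently small $c>0$ and a fixed state $\omega_\g{M}$, I would set
\begin{align}
\bc{\Xi}_{MB\to A\g{M}X}\!\left[\rho_{MB}\right]&:=c\,\bc{\Theta}\!\left[\rho_{MB}\right]\otimes\op{1}{1}_X+\Psi_{MB\to A}\!\left[\rho_{MB}\right]\otimes\omega_\g{M}\otimes\op{0}{0}_X,
\end{align}
with the inconclusive branch governed by the compensation map $\Psi_{MB\to A}:=\ch{E}_{M\to A}\circ\tr_B-c\,\tr_\g{M}\circ\bc{\Theta}$ for an auxiliary channel $\ch{E}_{M\to A}$. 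The design makes all four requirements fall out together: imposing trace preservation on $\bc{\Xi}$ cancels the $c\,\tr[\bc{\Theta}\,\cdot\,]$ contributions and forces $\ch{E}$ to be trace preserving, i.e.\ a channel; taking $\ch{E}$ to be the replacement channel onto $\1_A/d_A$ and $c$ small enough renders $\Psi$ completely positive via Choi-operator domination; Alice-to-Bob nonsignalling of the inconclusive branch holds because its $\g{M}$-output is the fixed state $\omega_\g{M}$ weighted only by the scalar $1-c\,\tr[\bc{\Theta}[\rho_M\otimes\sigma_B]]$, which is $\rho_M$-independent since $\tr[\bc{\Theta}[\rho_M\otimes\sigma_B]]=\tr\big[\tr_A\bc{\Theta}[\rho_M\otimes\sigma_B]\big]$ is $\rho_M$-independent by (ii); and Bob-to-Alice nonsignalling holds because the subtracted term $-c\,\tr_\g{M}\circ\bc{\Theta}$ exactly cancels the backward signal of the conclusive branch, leaving $\ch{E}[\rho_M]$ on $A$ independent of $\sigma_B$. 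Thus $\bc{\Xi}$ is CPTP and nonsignalling in both directions, so by the semicausal/semilocalisable realisation of superchannels~\cite{chiribella_2008, gour_2019-2} it is a deterministic NA protocol $\Xi$. Postselecting $X=1$ with the subchannel $\ch{D}_{\g{M}X\to\g{M}}[\,\cdot\,]:=\bra{1}_X(\cdot)\ket{1}_X$ yields $\ch{D}\circ\Xi=c\,\Theta$, whence $\Theta=c^{-1}(\ch{D}\circ\Xi)$ is $c^{-1}$ times a pNA protocol, establishing (i).

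The main obstacle is precisely this construction of the inconclusive branch: it must simultaneously restore trace preservation, maintain Alice-to-Bob nonsignalling, and \emph{annihilate} the Bob-to-Alice signalling that $\bc{\Theta}$ may exhibit (recall that pNA protocols need not be Bob-to-Alice nonsignalling). The resolution is the single compensation map $\Psi$, in which the Alice-side channel $\ch{E}$ absorbs the trace deficit while the subtracted term cancels the backward signal; verifying that $c$ can be chosen to keep $\Psi$ completely positive is the only quantitative point, and it follows immediately from finite-dimensionality.
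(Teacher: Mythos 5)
Your proof is correct. The easy parts coincide with the paper's up to routing: the paper proves (i)$\Rightarrow$(iii)$\Rightarrow$(ii) while you prove (i)$\Rightarrow$(ii) and (ii)$\Leftrightarrow$(iii), both resting on the identity $\Theta_{(A\to B)\to(M\to\g{M})}\{\ch{R}_{A\to B}^\sigma\}[\rho_M]=\tr_A[\bc{\Theta}_{MB\to A\g{M}}[\rho_M\otimes\sigma_B]]$. In the substantive direction (ii)$\Rightarrow$(i) you follow the same overall strategy as the paper --- rescale $\Theta$ by a small constant, complete it to a deterministic NA superchannel carrying a classical conclusiveness flag $X$, and recover $\Theta$ by postselecting $X=1$ --- but with a genuinely different completion and different certification tools. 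The paper's inconclusive branch is $\Lambda:=\Upsilon-\frac{1}{d_M^2d_Ad_B}\Theta$, the complement of the rescaled $\Theta$ inside the completely depolarising superchannel $\Upsilon$; its validity as a probabilistic supermap is certified through the Choi-operator conditions of Ref.~\citen{burniston_2020}, and the nonsignalling property of the flagged sum then comes for free from the affinity of Eq.~\eqref{eq:NA}, since both $\Upsilon$ and $\Theta$ satisfy it. Your inconclusive branch is instead the product map $\Psi\otimes\omega_{\g{M}}$ with $\Psi=\ch{E}\circ\tr_B-c\,\tr_{\g{M}}\circ\bc{\Theta}$, which compensates only on Alice's output $A$ and attaches a fixed state on $\g{M}$; you then verify all four requirements directly at the bipartite-channel level --- complete positivity by Choi domination (available for small enough $c>0$ by finite-dimensionality), trace preservation by construction, Alice-to-Bob nonsignalling from hypothesis (ii), and Bob-to-Alice nonsignalling by exact cancellation of the backward signal --- and invoke the semicausal/semilocalisable correspondence to conclude that $\bc{\Xi}$ is the bipartite channel of a deterministic superchannel, hence (being Alice-to-Bob nonsignalling) an NA protocol. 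What the paper's route buys is that Bob-to-Alice signalling never has to be discussed explicitly, being absorbed into the cited characterisation of probabilistic supermaps; what your route buys is a more self-contained and mechanism-revealing argument, which isolates precisely the defect of $\bc{\Theta}$ that the inconclusive branch must repair (its backward signal, which the paper's remark on postselected closed timelike curves shows can be genuinely present), at the price of invoking semicausal $=$ semilocalisable and checking the conditions by hand.
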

\setcounter{proposition}{\value{tempprop}}
}

\begin{proof}[Proof of (i) $\Rightarrow$ (iii)]
Consider a probabilistic supermap $\Theta_{(A\to B)\to(M\to\g{M})}$ satisfying (i).  By Definition~\ref{def:pNA}, $\Theta_{(A\to B)\to(M\to\g{M})}$ can be decomposed into an NA protocol $\Xi_{(A\to B)\to(M\to\g{M}X)}$, a subchannel $\ch{D}_{\g{M}X\to\g{M}}$, and a factor $c\in\spa{R}$.  Since the NA protocol $\Xi_{(A\to B)\to(M\to\g{M}X)}$ is replacement preserving~\cite[Proposition~1]{takagi_2020}, there exists a state $\tau_{\g{M}X}$ such that 
\begin{align}
	\ch{R}_{M\to\g{M}X}^\tau&=\Xi_{(A\to B)\to(M\to\g{M}X)}\left\{\ch{R}_{A\to B}^\sigma\right\}.
\end{align}
Define $p:=c\tr[\ch{D}_{\g{M}X\to\g{M}}[\tau_{\g{M}X}]]$ and a state $\sigma_\g{M}':=c\ch{D}_{\g{M}X\to\g{M}}[\tau_{\g{M}X}]/p$.  (If $p=0$, then it suffices to let $\sigma_\g{M}'$ be an arbitrary state.)  Then we have that
\begin{align}
	\Theta_{(A\to B)\to(M\to\g{M})}\left\{\ch{R}_{A\to B}^\sigma\right\}\!\left[\rho_M\right]&=c\ch{D}_{\g{M}X\to\g{M}}\circ\Xi_{(A\to B)\to(M\to\g{M}X)}\left\{\ch{R}_{A\to B}^\sigma\right\}\!\left[\rho_M\right] \notag\\
	&=c\ch{D}_{\g{M}X\to\g{M}}\circ\ch{R}_{M\to\g{M}X}^\tau\!\left[\rho_M\right] \notag\\
	&=c\tr\!\left[\rho_M\right]\ch{D}_{\g{M}X\to\g{M}}\!\left[\tau_{\g{M}X}\right] \notag\\
	&=p\tr\!\left[\rho_M\right]\sigma_\g{M}' \notag\\
	&=p\ch{R}_{M\to\g{M}}^{\sigma'}\!\left[\rho_M\right]\qquad\forall\rho_M.
\end{align}
This shows $\Theta_{(A\to B)\to(M\to\g{M})}\{\ch{R}_{A\to B}^\sigma\}=p\ch{R}_{M\to\g{M}}^{\sigma'}$.  Since $\Theta_{(A\to B)\to(M\to\g{M})}\{\ch{R}_{A\to B}^\sigma\}$ is trace nonincreasing, we must have $p\in[0,1]$.
\end{proof}

\begin{proof}[Proof of (iii) $\Rightarrow$ (ii)]
Consider a probabilistic supermap satisfying (iii).  Denote $\ch{M}_{M\to\g{M}}^\sigma\equiv\Theta_{(A\to B)\to(M\to\g{M})}\{\ch{R}_{A\to B}^\sigma\}$.  For every state $\sigma_B$, there exists a state $\sigma_\g{M}'$ and $p\in[0,1]$ such that $\ch{M}_{M\to\g{M}}^\sigma=p\ch{R}_{M\to\g{M}}^{\sigma'}$.  This implies that
\begin{align}
	\ch{M}_{M\to\g{M}}^\sigma\!\left[\rho_M\right]&=\ch{M}_{M\to\g{M}}^\sigma\!\left[\omega_M\right] \label{pf:pNA-1}
\end{align}
for all states $\rho_M$ and $\omega_M$.  The correspondence between $\Theta_{(A\to B)\to(M\to\g{M})}$ and its equivalent bipartite subchannel $\bc{\Theta}_{MB\to A\g{M}}$, translates Eq.~\eqref{pf:pNA-1} to Eq.~\eqref{eq:NA} straightforwardly.
\end{proof}

\begin{proof}[Proof of (ii) $\Rightarrow$ (i)]
Consider a probabilistic supermap $\Theta_{(A\to B)\to(M\to\g{M})}$ satisfying (ii).  Let $\Upsilon_{(A\to B)\to(M\to\g{M})}$ denote the completely depolarising superchannel, which is defined by
\begin{align}
	\Upsilon_{(A\to B)\to(M\to\g{M})}\left\{\ch{N}_{A\to B}\right\}&:=\ch{C}_{B\to\g{M}}\circ\ch{N}_{A\to B}\circ\ch{C}_{M\to A}\qquad\forall\ch{N}_{A\to B},
\end{align}
where $\ch{C}_{B\to\g{M}}:=\ch{R}_{B\to\g{M}}^{\1/d_\g{M}}$ denotes the completely depolarising channel and satisfies $\Phi_{R\g{M}}^\ch{C}=\1^{R\g{M}}/d_Bd_\g{M}$ (and similarly for $\ch{C}_{M\to A}$).  Define
\begin{align}
	\Lambda_{(A\to B)\to(M\to\g{M})}&:=\Upsilon_{(A\to B)\to(M\to\g{M})}-\frac{1}{d_M^2d_Ad_B}\Theta_{(A\to B)\to(M\to\g{M})}.
\end{align}
Our next step is to show that $\Lambda_{(A\to B)\to(M\to\g{M})}$ is a probabilistic supermap.  Utilising the necessary and sufficient condition provided in Ref.~\citen{burniston_2020}, this amounts to showing that $\Phi_{MBA\g{M}}^{\bc{\Lambda}}\geq0$ and that $\Lambda_{(A\to B)\to(M\to\g{M})}$ satisfies Ref.~\cite[Eq.~(4.2)]{burniston_2020}, where $\bc{\Lambda}_{MB\to A\g{M}}$ is the equivalent bipartite subchannel of $\Lambda_{(A\to B)\to(M\to\g{M})}$.

Due to the fact that $\bc{\Upsilon}_{MB\to A\g{M}}$ of $\Upsilon_{(A\to B)\to(M\to\g{M})}$ is a completely depolarising channel, we know that $\bc{\Lambda}_{MB\to A\g{M}}=\bc{\Upsilon}_{MB\to A\g{M}}-\bc{\Theta}_{MB\to A\g{M}}/(d_M^2d_Ad_B)$ is a subchannel, and this shows $\Phi_{MBA\g{M}}^{\bc{\Lambda}}\geq0$.  On the other hand, since the superchannel $\Upsilon_{(A\to B)\to(M\to\g{M})}$ satisfies Ref.~\cite[Eq.~(4.2)]{burniston_2020}, and since the Choi operator of the probabilistic subchannel $\Theta_{(A\to B)\to(M\to\g{M})}$ is positive semidefinite, we know that $\Lambda_{(A\to B)\to(M\to\g{M})}$ also satisfies Ref.~\cite[Eq.~(4.2)]{burniston_2020}.  This shows that $\Lambda_{(A\to B)\to(M\to\g{M})}$ is probabilistic supermap. 

Knowing that $\Lambda_{(A\to B)\to(M\to\g{M})}$ is a probabilistic supermap, and recalling that $\Upsilon_{(A\to B)\to(M\to\g{M})}=\Theta_{(A\to B)\to(M\to\g{M})}+\Lambda_{(A\to B)\to(M\to\g{M})}$ is a superchannel, it can be verified that the following is a superchannel (see Ref.~\cite[Theorem~1]{gour_2019-2}):
\begin{align}
	\Xi_{(A\to B)\to(M\to\g{M}X)}&:=\Lambda_{(A\to B)\to(M\to\g{M})}\otimes\op{0}{0}_X \notag\\
    &\qquad+\frac{1}{d_M^2d_Ad_B}\Theta_{(A\to B)\to(M\to\g{M})}\otimes\op{1}{1}_X. \label{pf:pNA-2}
\end{align}
Note that $\Theta_{(A\to B)\to(M\to\g{M})}/(d_M^2d_Ad_B)$ can be realised by composing $\Xi_{(A\to B)\to(M\to\g{M}X)}$ with the following subchannel:
\begin{align}
	\ch{D}_{\g{M}X\to\g{M}}\!\left[\rho_{\g{M}X}\right]&:=\bra{1}_X\rho_{\g{M}X}\ket{1}_X\qquad\forall\rho_{\g{M}X}.
\end{align}
Consequently, to arrive at (i), it suffices to show that $\Xi_{(A\to B)\to(M\to\g{M}X)}$ is an NA protocol.

To show this, note that Eq.~\eqref{eq:NA} is an affine constraint.  This means that $\Lambda_{(A\to B)\to(M\to\g{M})}$ satisfies (ii) since both $\Upsilon_{(A\to B)\to(M\to\g{M})}$ and $\Theta_{(A\to B)\to(M\to\g{M})}$ do.  By Eq.~\eqref{pf:pNA-2}, the superchannel $\Xi_{(A\to B)\to(M\to\g{M}X)}$ satisfies (ii) as well, and therefore by definition is an NA protocol (see Ref.~\cite[Eqs.~(5) and (6)]{takagi_2020}). 
\end{proof}

\section{Alternative proof of the lower bound on the one-shot pNA quantum capacity}
\label{app:achievability}

\begin{proposition}[Achievability in one-shot pNA communication]\label{prop:achiev}
Let $\ch{N}_{A\to B}$ be a quantum channel, and let $\varepsilon\in(0,1)$.  Then, for every
\begin{align}
	d_M&< \sqrt{\frac{\varepsilon}{1-\varepsilon}2^{I_\Omega(\ch{N})}+1},
\end{align}
there exists a pNA $(d_M,\varepsilon)$ quantum protocol over $\ch{N}_{A\to B}$.
\end{proposition}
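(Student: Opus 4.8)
The plan is to construct an explicit $\sa{pNA}$ protocol directly at the level of Choi operators, bypassing the teleportation-based scheme of Protocol~\ref{prot:teleportation}. First I would invoke the Choi formulation of Lemma~\ref{lem:projective-mutual-information-choi} to fix a certificate for $\xi:=2^{I_\Omega(\ch{N})}$: an operator $S_B\geq0$ satisfying the sandwich $\Phi^{\ch{N}}\leq\1_R\otimes S_B\leq\xi\Phi^{\ch{N}}$. The target is a simulated subchannel $\ch{N}'_{M\to\g{M}}$ that is \emph{isotropic}, i.e., whose Choi operator has the form
\begin{align}
	\Phi^{\ch{N}'}_{R\g{M}}&=a\Phi_{R\g{M}}+\frac{b}{d_M^2}\1_{R\g{M}}
\end{align}
for constants $a,b\geq0$ to be fixed below; such an object is the subnormalised depolarising channel $\rho\mapsto a\rho+b\tr[\rho]\1_\g{M}/d_M$. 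That the isotropic ansatz is without loss of generality can be motivated by twirling any candidate supermap over the unitary group $U\otimes\cl{U}$ acting on $M$ and $\g{M}$, an operation that preserves the class $\sa{pNA}$ and the conditional error while symmetrising the simulated subchannel and forcing replacement channels to be mapped to the maximally mixed replacement.

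Next I would build a bipartite subchannel $\bc{\Theta}_{MB\to A\g{M}}$ — equivalently a probabilistic supermap $\Theta$ — that (a) maps every replacement channel $\ch{R}^\sigma_{A\to B}$ to a scaled replacement channel, and (b) maps $\ch{N}$ to the isotropic $\ch{N}'$ above. Property (a) is precisely the characterisation in Proposition~\ref{prop:pNA}(iii), so any such $\Theta$ is $\sa{pNA}$ up to a positive rescaling, which by Remark~\ref{rem:rescaling} does not affect the error. The construction must be coherent rather than measure-and-prepare: a naive ``measure $\ch{N}$, then prepare $\Phi_{R\g{M}}$ or $\1$'' form cannot produce the entangled component $\Phi_{R\g{M}}$ while simultaneously collapsing replacement inputs to replacement outputs. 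I would therefore write down the Choi operator of $\bc{\Theta}$ explicitly in terms of $S_B$ and verify the defining conditions of a probabilistic supermap from Ref.~\cite{burniston_2020} — positivity of its Choi operator together with the trace-nonincreasing normalisation — alongside the Alice-to-Bob-nonsignalling constraint of Eq.~\eqref{eq:NA}.

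The constants would then be pinned down by applying the positive map induced by $\Theta$ on Choi operators to the sandwich. Since $\1_R\otimes S_B=d_A\tr[S_B]\,\Phi^{\ch{R}^{\hat\sigma}}$ with $\hat\sigma=S_B/\tr[S_B]$, property (a) sends it to a replacement Choi, so positivity turns $\Phi^{\ch{N}}\leq\1_R\otimes S_B\leq\xi\Phi^{\ch{N}}$ into a two-sided operator inequality sandwiching $\Phi^{\ch{N}'}_{R\g{M}}$ between operators of the form $\1_R\otimes(\cdot)_\g{M}$. Evaluating this inequality on the maximally entangled vector and on its orthogonal complement yields exactly the feasibility condition $b/a\geq d_M^2/(\xi-1)$. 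Choosing the extremal ratio $b/a=d_M^2/(\xi-1)$, and using that an isotropic subchannel attains its worst-case conditional quantum error on the maximally entangled input, a short calculation gives
\begin{align}
	P_\abb{err}^\abb{(q)}(\Theta;\ch{N})&=\frac{d_M^2-1}{d_M^2+\xi-1},
\end{align}
and $P_\abb{err}^\abb{(q)}(\Theta;\ch{N})<\varepsilon$ is equivalent to $d_M^2<\frac{\varepsilon}{1-\varepsilon}2^{I_\Omega(\ch{N})}+1$, the claimed threshold (the case $I_\Omega(\ch{N})=0$ being trivial).

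I expect the main obstacle to be the explicit construction and validation of the nonsignalling supermap in the second step. The difficulty is that nonsignalling forbids the obvious measure-and-prepare decoder, so one must exhibit a coherent bipartite subchannel that extracts the maximally entangled component of $\Phi^{\ch{N}}$ yet maps every replacement channel to a replacement channel, and then check that the Choi positivity and trace-nonincreasing conditions of Ref.~\cite{burniston_2020} hold simultaneously with the chosen constants $a$ and $b$. By contrast, verifying that the worst case over all pure inputs $\psi_{RM}$ — and not merely the maximally entangled one — is governed by the isotropic formula is a comparatively routine covariance argument, since the reduced-state purity entering the conditional success probability is minimised exactly on maximally entangled inputs.
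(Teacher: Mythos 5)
Your proposal has a genuine gap at its core: the $\sa{pNA}$ supermap is never constructed. The steps you do carry out are correct arithmetic --- the primal certificate $S_B$ from Lemma~\ref{lem:projective-mutual-information-choi}, the isotropic ansatz, the necessary condition $b/a\geq d_M^2/(\xi-1)$ obtained by pushing the sandwich $\Phi^{\ch{N}}\leq\1_R\otimes S_B\leq\xi\Phi^{\ch{N}}$ through a hypothetical replacement-preserving supermap, and the conditional error $\frac{d_M^2-1}{d_M^2+\xi-1}$ with its equivalence to the claimed threshold --- but all of this only pins down what the simulated subchannel's parameters must be \emph{if} such a supermap exists. Existence is the entire content of an achievability claim, and you explicitly defer it, naming it as the expected ``main obstacle.'' Deriving necessary conditions and then ``choosing the extremal ratio'' is circular without exhibiting $\bc{\Theta}_{MB\to A\g{M}}$ and verifying the conditions of Ref.~\citen{burniston_2020} together with Eq.~\eqref{eq:NA}.

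Worse, the intuition guiding your search points away from the construction that actually works. The paper's proof (Appendix~C) is precisely the ``measure-and-prepare'' form you dismiss as impossible: prepare $\Phi_{RA}$, apply $\ch{N}_{A\to B}$ to obtain $\Phi_{RB}^{\ch{N}}$, measure it with the incomplete POVM $\{Q_{RB},P_{RB}\}$ coming from a feasible point of the \emph{dual} program (Lemma~\ref{lem:projective-mutual-information-dual}, so $P_B=Q_B$), and, conditioned on the outcome, apply to $M$ one of two fixed subchannels, $\lambda\ch{C}-r\ch{T}^{\varepsilon'}$ or $r\ch{T}^{\varepsilon'}-\frac{1}{\mu}\ch{C}$, which are completely positive because $I_\Omega(\ch{T}^{\varepsilon'})\leq I_\Omega(\ch{N})=\log_2(\lambda\mu)$. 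The coherent (identity) component you worry about lives inside the \emph{prepared subchannels}, not in a prepared state, and replacement preservation holds because $P_B=Q_B$ forces $\tr[P_{RB}\Phi_{RB}^{\ch{R}^\sigma}]=\tr[Q_{RB}\Phi_{RB}^{\ch{R}^\sigma}]$, so the $\ch{T}^{\varepsilon'}$ components cancel on replacement inputs, leaving a multiple of $\ch{C}$. There is also a second, subtler divergence: you aim to realise the extremal isotropic target \emph{exactly} with $\xi=2^{I_\Omega(\ch{N})}$, whereas the paper can only approach it, because its construction consumes dual feasible pairs whose supremum need not be attained --- that non-attainment is exactly why the proposition carries a strict inequality (see the paper's closing remark). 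Your primal certificate $S_B$ does exist whenever $I_\Omega(\ch{N})$ is finite, but it does not by itself furnish a measurement; if you could build the exact supermap from $S_B$ alone, you would obtain the stronger floor bound that the paper gets only conditionally on dual attainment, which is a strong hint that this step is harder than your outline assumes. To close the gap along the paper's lines, introduce an intermediate error $\varepsilon'<\varepsilon$, take dual pairs whose ratio is within $\delta$ of $2^{I_\Omega(\ch{N})}$, and let $\delta\to0$; Proposition~\ref{prop:pNA} and Remark~\ref{rem:rescaling} then convert the resulting probabilistically replacement-preserving supermap into a valid $\sa{pNA}$ protocol.
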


\begin{proof}
\let\ve\varepsilon
The proof idea follows Ref.~\cite[Theorems~5 and 12]{regula_2022-2}. We will show that, as long as
\begin{align}
	I_\Omega(\ch{N})&>\log_2\!\left(\frac{1-\ve}{\ve}\left(d_M^2-1\right)\right),
\end{align} 
then there exists a probabilistically replacement-preserving protocol $\Theta_{(A\to B)\to(M\to\g{M})}$ which transforms $\ch{N}$ to a channel that is $\ve$-close to the target identity channel $\id_{M\to\g{M}}$. By Proposition~\ref{prop:pNA} and Remark~\ref{rem:rescaling}, this will yield a valid pNA protocol after proper rescaling.

Let $\ch{C}_{M\to\g{M}}$ denote the completely depolarising channel, which satisfies $\Phi_{R\g{M}}^{\ch{C}}=\1_{R\g{M}}/d_M^2$ and $\ch{C}_{M\to\g{M}}\in\s{R}$. We will use the easily verifiable fact that $\id_{M\to\g{M}}\leq d_M^2\ch{C}_{M\to\g{M}}$, where the inequality is understood as an operator inequality between the corresponding Choi states.

Let us first consider the case of a very large error, $\ve>\frac{d_M^2-1}{d_M^2}$, as then the transformation trivialises. Namely, here we have that
\begin{align}
	\ch{C}_{M\to\g{M}}&=\left(1-\ve\right)d_M^2\ch{C}_{M\to\g{M}}+\left(\ve-\left(1-\ve\right)\left(d_M^2-1\right)\right)\ch{C}_{M\to\g{M}} \notag\\
	&=\left(1-\ve\right)\id_{M\to\g{M}}+\left(1-\ve\right)\left(d_M^2\ch{C}_{M\to\g{M}}-\id_{M\to\g{M}}\right) \notag\\
	&\qquad+\left(\ve-\left(1-\ve\right)\left(d_M^2-1\right)\right)\ch{C}_{M\to\g{M}},
\end{align}
which already satisfies
\begin{align}
	\inf_{\psi_{RM}}\tr\!\left[\psi_{R\g{M}}\ch{C}_{M\to\g{M}}\!\left[\psi_{RM}\right]\right]&\geq1-\ve.
\end{align}
One can then simply discard the channel $\ch{N}_{A\to B}$ and prepare $\ch{C}_{M\to\g{M}}$ instead to achieve the desired error.

Assume now that $\ve\leq\frac{d_M^2-1}{d_M^2}$. Since $I_\Omega(\ch{N})>\log_2(\frac{1-\ve}{\ve}(d_M^2-1))$, there exists an $\ve'<\ve$ such that
\begin{align}
	I_\Omega(\ch{N})&\geq\log_2\!\left(\frac{1-\ve'}{\ve'}\left(d_M^2-1\right)\right).
\end{align}
We will show that $\ch{N}_{A\to B}$ can be probabilistically transformed into a channel arbitrarily close to
\begin{align}
	\ch{T}_{M\to\g{M}}^{\ve'}&:=q\left(1-\ve'\right)\id_{M\to\g{M}}+\frac{\ve'}{d_M^2-1}\left(d_M^2\ch{C}_{M\to\g{M}}-\id_{M\to\g{M}}\right),
\end{align}
which clearly has
\begin{align}
	\inf_{\psi_{RM}}\tr\!\left[\psi_{R\g{M}}\ch{T}_{M\to\g{M}}^{\ve'}\!\left[\psi_{RM}\right]\right]&\geq1-\ve'
\end{align}
by construction. The fact that $\ve'\leq\frac{d_M^2-1}{d_M^2}$ implies that $(1-\ve')-\frac{\ve'}{d_M^2-1}\geq0$, and hence
\begin{align}
	\ch{T}_{M\to\g{M}}^{\ve'}&=\left(\left(1-\ve'\right)-\frac{\ve'}{d_M^2-1}\right)\id_{M\to\g{M}}+\frac{\ve'd_M^2}{d_M^2-1}\ch{C}_{M\to\g{M}} \notag\\
	&\leq\left(\left(1-\ve'\right)-\frac{\ve'}{d_M^2-1}\right)d_M^2\ch{C}_{M\to\g{M}}+\frac{\ve'd_M^2}{d_M^2-1}\ch{C}_{M\to\g{M}} \notag\\
	&=\left(1-\ve'\right)d_M^2\ch{C}_{M\to\g{M}}
\end{align}
as well as
\begin{align}
	\ch{T}_{M\to\g{M}}^{\ve'}&\geq\frac{\ve'd_M^2}{d_M^2-1}\ch{C}_{M\to\g{M}}.
\end{align}
From the Choi form of $I_\Omega$ in Lemma~\ref{lem:projective-mutual-information-choi}, it follows that
\begin{align}\label{eq:decreases}
	I_\Omega(\ch{T}^{\ve'})&\leq\log_2\!\left(\frac{1-\ve'}{\ve'}\left(d_M^2-1\right)\right)\leq I_\Omega(\ch{N}).
\end{align}
Let $\lambda$ and $\mu$ be such that $\log_2(\lambda\mu)=I_\Omega(\ch{N})$; i.e., there exists a replacement channel $\ch{R}_{A\to B}^\sigma$ such that $\ch{N}_{A\to B}\leq\lambda\ch{R}_{A\to B}^\sigma$ and $\ch{R}_{A\to B}^\sigma\leq\mu\ch{N}_{A\to B}$. Due to Eq.~\eqref{eq:decreases}, up to rescaling by some constant $r>0$, it holds that $r\ch{T}_{M\to\g{M}}^{\ve'}\leq\lambda\ch{C}_{M\to\g{M}}$ and $\ch{C}_{M\to\g{M}}\leq\mu r\ch{T}_{M\to\g{M}}^{\ve'}$. 

Let $(P_{RB},Q_{RB})$ be an arbitrary feasible solution for $I_\Omega(\ch{N})$ in Lemma~\ref{lem:projective-mutual-information-dual}. Define the protocol $\Theta_{(A\to B)\to(M\to\g{M})}$ as
\begin{align}
	\Theta_{(A\to B)\to(M\to\g{M})}\left\{\ch{N}_{A\to B}\right\}&:=\tr\!\left[Q_{RB}\Phi_{RB}^\ch{N}\right]\left(\lambda\ch{C}_{M\to\g{M}}-r\ch{T}_{M\to\g{M}}^{\ve'}\right) \notag\\
	&\qquad+\tr\!\left[P_{RB}\Phi_{RB}^\ch{N}\right]\left(r\ch{T}_{M\to\g{M}}^{\ve'}-\frac{1}{\mu}\ch{C}_{M\to\g{M}}\right) \notag\\
	&\hphantom{:}=\tr\!\left[Q_{RB}\Phi_{RB}^\ch{N}\right]\left(\left(\lambda\ch{C}_{M\to\g{M}}-r\ch{T}_{M\to\g{M}}^{\ve'}\right)\vphantom{+\frac{\tr\!\left[P_{RB}\Phi_{RB}^\ch{N}\right]}{\tr\!\left[Q_{RB}\Phi_{RB}^\ch{N}\right]}\left(r\ch{T}_{M\to\g{M}}^{\ve'}-\frac{1}{\mu}\ch{C}_{M\to\g{M}}\right)}\right. \notag\\
	&\qquad\left.+\frac{\tr\!\left[P_{RB}\Phi_{RB}^\ch{N}\right]}{\tr\!\left[Q_{RB}\Phi_{RB}^\ch{N}\right]}\left(r\ch{T}_{M\to\g{M}}^{\ve'}-\frac{1}{\mu}\ch{C}_{M\to\g{M}}\right)\right). \label{pf:achievability-pNA}
\end{align}
Up to normalisation, $\Theta_{(A\to B)\to(M\to\g{M})}$ is clearly a valid probabilistic supermap  --- we first prepare the maximally entangled state state $\Phi_{RA}$, apply $\ch{N}_{A\to B}$ to one half of it in order to get $\Phi_{RB}^\ch{N}$, then perform the incomplete measurement $\{Q_{RB},P_{RB}\}$, which represents the conclusive outcomes of some POVM (it can be rescaled and completed to a valid measurement). Finally, the protocol prepares a subchannel depending on the measurement outcome. 

To verify that $\Theta_{(A\to B)\to(M\to\g{M})}$ is probabilistically replacement preserving, we check that for every $\ch{R}^\sigma\in\s{R}$, it holds that
\begin{align}
	\Theta_{(A\to B)\to(M\to\g{M})}\left\{\ch{R}_{A\to B}^\sigma\right\}&=\tr\!\left[Q_{RB}\Phi_{RB}^{\ch{R}^\sigma}\right]\left(\lambda\ch{C}_{M\to\g{M}}-\frac{1}{\mu}\ch{C}_{M\to\g{M}}\right) \notag\\
	&\propto\ch{C}_{M\to\g{M}}\in\s{R}
\end{align}
due to the fact that $\tr[\Phi_{RB}^{\ch{R}^\sigma}P_{RB}]=\tr[\Phi_{RB}^{\ch{R}^\sigma}Q_{RB}]$ for all replacement channels $\ch{R}^\sigma$.  Thus, $\Theta_{(A\to B)\to(M\to\g{M})}$ is probabilistically replacement preserving.   By Proposition~\ref{prop:pNA}, $\Theta_{(A\to B)\to(M\to\g{M})}$ is a multiple of a pNA protocol, and thus it can be rescaled to be a pNA protocol without altering its performance in postselected communication (see Remark~\ref{rem:rescaling}).

Crucially, following Eq.~\eqref{pf:achievability-pNA}, the simulated subchannel $\ch{N}_{M\to\g{M}}'\equiv\Theta_{(A\to B)\to(M\to\g{M})}\{\ch{N}_{A\to B}\}$ satisfies
\begin{align}
	\ch{N}_{M\to\g{M}}'&=\tr\!\left[Q_{RB}\Phi_{RB}^\ch{N}\right]\left(\frac{1}{\mu}\left(\lambda\mu-\frac{\tr\!\left[P_{RB}\Phi_{RB}^\ch{N}\right]}{\tr\!\left[Q_{RB}\Phi_{RB}^\ch{N}\right]}\right)\ch{C}_{M\to\g{M}}\right. \notag\\
	&\qquad\left.+r\left(\frac{\tr\!\left[P_{RB}\Phi_{RB}^\ch{N}\right]}{\tr\!\left[Q_{RB}\Phi_{RB}^\ch{N}\right]}-1\right)\ch{T}_{M\to\g{M}}^{\ve'}\right) \notag\\
	&=\tr\!\left[Q_{RB}\Phi_{RB}^\ch{N}\right]\left(\frac{1}{\mu}\left(2^{I_\Omega(\ch{N})}-\frac{\tr\!\left[P_{RB}\Phi_{RB}^\ch{N}\right]}{\tr\!\left[Q_{RB}\Phi_{RB}^\ch{N}\right]}\right)\ch{C}_{M\to\g{M}}\right. \notag\\
	&\qquad\left.+r\left(\frac{\tr\!\left[P_{RB}\Phi_{RB}^\ch{N}\right]}{\tr\!\left[Q_{RB}\Phi_{RB}^\ch{N}\right]}-1\right)\ch{T}_{M\to\g{M}}^{\ve'}\right).
\end{align}
Due to the dual formulation of $I_\Omega(\ch{N})$ in Lemma~\ref{lem:projective-mutual-information-dual}, $P_{RB}$ and $Q_{RB}$ can be chosen so that $\tr\!\left[P_{RB}\Phi_{RB}^\ch{N}\right]/\tr\!\left[Q_{RB}\Phi_{RB}^\ch{N}\right]$ is arbitrarily close to $2^{I_\Omega(\ch{N})}$, so $\ch{N}_{M\to\g{M}}'$ can be made arbitrarily close to a multiple of $\ch{T}_{M\to\g{M}}^{\ve'}$. Specifically, let us choose $P_{RB}$ and $Q_{RB}$ so that $2^{I_\Omega(\ch{N})}-\tr\!\left[P_{RB}\Phi_{RB}^\ch{N}\right]/\tr\!\left[Q_{RB}\Phi_{RB}^\ch{N}\right]=\delta$ for some $\delta>0$. For every pure state $\psi_{RM}$, we then have that
\begin{align}
	\frac{\tr\!\left[\psi_{R\g{M}}\ch{N}_{M\to\g{M}}'\!\left[\psi_{RM}\right]\right]}{\tr\!\left[\ch{N}_{M\to\g{M}}'\!\left[\psi_{RM}\right]\right]}&\geq\frac{r\left(2^{I_\Omega(\ch{N})}-\delta-1\right)}{\frac{\delta}{\mu}+r\left(2^{I_\Omega(\ch{N})}-\delta-1\right)}\tr\!\left[\psi_{R\g{M}}\ch{T}_{M\to\g{M}}^{\ve'}\!\left[\psi_{RM}\right]\right] \notag\\
	&\geq\frac{r\left(2^{I_\Omega(\ch{N})}-\delta-1\right)}{\frac{\delta}{\mu}+r\left(2^{I_\Omega(\ch{N})}-\delta-1\right)}\left(1-\ve'\right).
\end{align}
Choosing $\delta$ small enough, we can thus guarantee that
\begin{align}
	\frac{\tr\!\left[\psi_{R\g{M}}\ch{N}_{M\to\g{M}}'\!\left[\psi_{RM}\right]\right]}{\tr\!\left[\ch{N}_{M\to\g{M}}'\!\left[\psi_{RM}\right]\right]}&\geq 1-\ve,
\end{align}
as was to be shown.
\end{proof}

\begin{remark}[On the use of a strict inequality in Proposition~\ref{prop:achiev}]
What necessitates the use of a strict inequality in the statement of Proposition~\ref{prop:achiev}, thus preventing a tighter result, is the fact that the dual formulation of $I_\Omega$ (Lemma~\ref{lem:projective-mutual-information-dual}) is not necessarily achieved. If, for a particular $\ch{N}_{A\to B}$, there exist optimal $P_{RB}$ and $Q_{RB}$, then the above can be tightened to show that there exists a pNA protocol with
\begin{align}
	d_M&=\left\lfloor\sqrt{\frac{\varepsilon}{1-\varepsilon}2^{I_\Omega(\ch{N})}+1}\right\rfloor.
\end{align}
\end{remark}


\bibliographystyle{unsrt}
\bibliography{main}

\end{document}